\newcommand{\defeq}{\mathrel{\mathop:}=}
\newcommand{\supp}{\mathop{\mathrm{supp}}}
\newcommand{\X}{\mathsf{X}}
\newcommand{\T}{\mathsf{T}}
\newcommand{\Y}{\mathsf{Y}}
\newcommand{\Ls}{\mathcal{L}}
\newcommand{\Sp}{\mathsf{S}}
\newcommand{\cX}{\mathcal{X}}
\newcommand{\cSp}{\mathcal{S}}
\newcommand{\simiid}{\overset{\mathrm{i.i.d.}}{\sim}}
\newcommand{\pr}{\mathrm{pr}}
\newcommand{\bigmid}{\mathrel{\big|}}
\newcommand{\biggmid}{\mathrel{\bigg|}}
\newcommand{\uarg}{\,\cdot\,}
\newcommand{\ud}{\mathrm{d}}
\newcommand{\R}{\mathbb{R}}
\newcommand{\N}{\mathbb{N}}
\renewcommand{\P}{\mathbb{P}}
\newcommand{\E}{\mathbb{E}}
\newcommand{\charfun}[1]{\mathbb{I}\left(#1\right)}
\newcommand{\unitfun}{\mathbf{1}}
\newcommand{\given}{\,:\,}
\newcommand{\Var}{\mathrm{Var}}
\newcommand{\param}{\theta}
\newcommand{\Param}{\Theta}
\newcommand{\paramspace}{\T}
\newcommand{\latent}{x}
\newcommand{\Latent}{X}
\newcommand{\latentspace}{\X}
\newcommand{\obs}{y}
\newcommand{\Obs}{Y}
\newcommand{\obsspace}{\Y}
\newcommand{\margdens}{\pi_m}
\newcommand{\approxdens}{\pi_a}
\newcommand{\jointdens}{\pi}
\newcommand{\conddens}{r}
\newcommand{\ssmhid}{z}
\newcommand{\Ssmhid}{Z}
\newcommand{\ssmobs}{y}
\newcommand{\Ssmobs}{Y}
\newcommand{\properweight}{W}
\newcommand{\proper}{\xi}
\newcommand{\properweightalt}{V}
\newcommand{\properalt}{\zeta}
\newcommand{\meanfunc}{\mu}
\newtheorem{theorem}{Theorem}
\newtheorem{corollary}[theorem]{Corollary}
\newtheorem{proposition}[theorem]{Proposition}
\newtheorem{lemma}[theorem]{Lemma}
\theoremstyle{remark}
\newtheorem{remark2}[theorem]{Remark}
\theoremstyle{definition}
\newtheorem{definition2}{Definition}
\newtheorem{assumption}{Assumption}
\newtheorem{algorithm}{Algorithm}
\begin{document}

\title[Importance sampling type estimators based on
  approximate MCMC]{Importance sampling type estimators based on
    approximate marginal MCMC}
\author{Matti Vihola}
\address{University of Jyväskylä, Department of Mathematics and Statistics, 
P.O.Box 35, FI-40014 University of Jyväskylä, Finland}
\email[Matti Vihola]{matti.vihola@iki.fi}

\author{Jouni Helske}
\address{University of Jyväskylä, Department of Mathematics and Statistics, 
P.O.Box 35, FI-40014 University of Jyväskylä, Finland}
  
\author{Jordan Franks}
\address{Newcastle University,
  School of Mathematics, Statistics and Physics,
NE1 7RU Newcastle-upon-Tyne, United Kingdom}


\begin{abstract} 
We consider importance sampling (IS) type weighted estimators 
based on Markov chain Monte Carlo (MCMC) targeting an approximate
marginal of the target distribution. In the context of Bayesian latent
variable models, the MCMC typically operates on the hyperparameters,
and the subsequent weighting may be based on IS or
sequential Monte Carlo (SMC), but allows for multilevel techniques as
well. The IS approach provides a natural alternative to delayed
acceptance (DA) pseudo-marginal/particle MCMC, and has many
advantages over DA, including a straightforward parallelisation and
additional flexibility in MCMC implementation. We detail minimal
conditions which ensure strong consistency of the suggested
estimators, and provide central limit theorems with expressions for
asymptotic variances. We demonstrate how our method can make use of SMC
in the state space models context, using Laplace
approximations and time-discretised diffusions. Our experimental
results are promising and show that the IS type approach can provide
substantial gains relative to an analogous DA scheme, and is often
competitive even without parallelisation.
\end{abstract} 

\maketitle

\section{Introduction} 

Markov chain Monte Carlo (MCMC) has become a standard tool in Bayesian
analysis. The greatest benefit of MCMC is its general applicability
--- it is guaranteed to be consistent with virtually no assumptions on
the underlying model. However, the practical applicability of MCMC
generally depends on the dimension of the unknown variables, the
number of data, and the computational resources available. Because
MCMC is only asymptotically unbiased, and sequential in nature, it can
be difficult to implement efficiently with modern parallel and
distributed computing facilities
\cite{lee-yau-giles-doucet-holmes,green-latuszynski-pereyra-robert,wilkinson}.

We promote a simple two-phase inference approach, based on importance
sampling (IS), which is well-suited
for parallel implementation. It combines a typically low-dimensional
MCMC targeting an approximate marginal distribution with independently
calculated estimators, which yield exact inference over the full
posterior. The estimator is similar to self-normalised importance
sampling, but is more general, allowing for sequential Monte
Carlo and multilevel type corrections. The method is naturally
applicable in a latent variable models context, where the MCMC operates
on the hyperparameter distribution using an approximate marginal
likelihood, and re-weighting is based on a sampling scheme on the
latent variables. We detail the application of the method with
Bayesian state space models, where we use importance sampling and
particle filters for correction. 

\subsection{Related work} 

We consider a framework which combines and generalises upon various previously
suggested methods, which, to our knowledge, has not been
systematically explored before. Importance sampling correction of MCMC
has been suggested early in the MCMC literature
\cite[e.g.][]{hastings,glynn-iglehart,doss}, and used, for instance,
to estimate Bayes factors using a single MCMC output
\cite{doss-bayes-factors}. Related confidence intervals have been
suggested based on regeneration \cite{bhattacharya} and in case of
multiple Markov chains \cite{tan-doss-hobert}. Using unbiased
estimators of importance weights in this context has been suggested at
least in
\cite{lin-liu-sloan,lyne-girolami-atchade-strathmann-simpson}, who
consider marginal inference with a generalisation of the
pseudo-marginal method, allowing for likelihood estimators that may
take negative values, and in \cite{quiroz-villani-kohn} with data
sub-sampling. 

Nested or compound sampling has also appeared in many forms in the
Monte Carlo literature. The SMC\textsuperscript{2} algorithm
\cite{chopin-jacob-papaspiliopoulos} is based on an application of
nested sequential Monte Carlo steps, which has similarities with our
framework, and the IS\textsuperscript{2} method
\cite{tran-scharth-pitt-kohn} focuses on the case where the
preliminary inference is based on independent sampling. We focus on the
MCMC approximation of the marginal distribution, which we believe
often to be easily implementable in practice, also when the marginal
distribution has a non-standard form. The Markov dependence in
the marginal Monte Carlo approximation comes with some extra
theoretical issues, which we address in detail.

Our setting highlights explicitly the connection of IS
type correction and delayed acceptance (DA)
\cite{fox-nicholls,liu-mc,christen-fox},
and recently developed pseudo-marginal type
MCMC \cite{andrieu-roberts,lin-liu-sloan} such as particle MCMC
\cite{andrieu-doucet-holenstein}, grouped independence
Metropolis-Hastings \cite{beaumont}, approximate Bayesian
computation (ABC) MCMC \cite{marjoram-molitor-plagnol-tavare}, the algorithm
for estimation of discretely observed diffusions suggested in
\cite{beskos-papaspiliopoulos-roberts-fearnhead}, and annealed
IS \cite{karagiannis-andrieu,neal-annealed}.
Theoretical advances of pseudo-marginal methods
\cite{andrieu-vihola-pseudo,andrieu-vihola-order,doucet-pitt-deligiannidis-kohn,sherlock-thiery-roberts-rosenthal,andrieu-lee-vihola,chopin-singh,lindsten-douc-moulines}
have already led to more efficient implementation of such methods, but
have also revealed fundamental limitations. For instance, the methods
may suffer from slow (non-geometric) convergence in practically
interesting scenarios \cite{andrieu-roberts,lee-latuszynski}.  Adding
dependence to the estimators \cite[cf.][]{andrieu-vihola-order}, such
as using the recently proposed correlated version of the
pseudo-marginal MCMC \cite{deligiannidis-doucet-pitt-kohn}, may help
in more efficient implementation in certain scenarios, but
a successful implementation
of such a method may not always be possible, and the question of
efficient parallelisability remains a challenge. The blocked
parallelisable particle Gibbs \cite{singh-lindsten-moulines} has
appealing limiting properties, but its implementation still requires
synchronisation between every update cycle, which may be costly in
some computing environments.

The IS approach which we propose may assuage some of the
aforementioned challenges of the pseudo-marginal framework, by
replacing a pseudo-marginal MCMC with a fast-mixing but approximate
MCMC, and postponing the computationally intensive calculations to
parallelisable post-processing; see Section \ref{sec:algorithmic-da}.


\subsection{Outline} 

We introduce a generic Bayesian latent variable model in Section
\ref{sec:latent}, detail our approach algorithmically,
and compare it with DA. We also discuss practical
implications, modifications and possible extensions. After introducing
notation in Section \ref{sec:notation}, we formulate general
IS type correction of MCMC and 
related consistency results in Section \ref{sec:simple}. We detail the
general case (Theorem
\ref{thm:proper-consistency}), based on a concept (Definition
\ref{def:proper}), which we call a `proper weighting' scheme (following
the terminology of Liu \cite{liu-mc}), which is natural and
convenient in many contexts.
In Section \ref{sec:asvar-clt},
we state central limit theorems 
and expressions for asymptotic variances.
Section \ref{sec:block} focuses on estimators
which calculate IS correction once for each accepted state, stemming
from a so-called `jump chain' representation.
Section \ref{sec:pseudo} details consistency of our estimators in case
the approximate chain is pseudo-marginal.

We then focus on state-space models with linear-Gaussian state
dynamics in Section \ref{sec:lin-gauss-ssms}, and show how a Laplace
approximation can be used both for approximate inference, and for
construction of efficient proper weighting schemes. Section
\ref{sec:diffusion-ssms} describes an instance of our approach in the
context of discretely observed diffusions, with an approximate
pseudo-marginal chain. We compare empirically
several algorithmic variations in Section \ref{sec:exp} with Poisson
observations, with a stochastic volatility model and with a discretely
observed geometric Brownian motion. Section
\ref{sec:discussion} concludes, with discussion.



\section{The proposed latent variable model inference methodology}
\label{sec:latent} 


A generic Bayesian latent variable model is defined
in terms of three
random vector, and corresponding conditional densities:
\begin{itemize}
    \item $\Param \sim \pr(\uarg)$  --- prior density of (hyper)parameters,
    \item $\Latent \mid \Param=\param \sim
      \mu^{(\param)}(\uarg)$ --- prior of latent variables given
      parameters, and
    \item $\Obs \mid (\Param=\param,\Latent=\latent)
      \sim g^{(\param)}(\uarg\mid
      \latent)$  --- the observation model.
\end{itemize}
The aim is inference over the posterior of $(\Param,\Latent)$
given observations $\Obs=\obs$, with density
$\pi(\param,\latent) \propto \pr(\param)
    \mu^{(\param)}(\latent) g^{(\param)}(\obs\mid\latent)$.
Standard MCMC algorithms may, in principle,  be applied directly
for inference, but the typical high dimension of
the latent variable $\latent$ and the common strong dependency
structures often lead to poor performance of generic algorithms.

Our inference approach focuses on the specific structure of the model,
based on the 
factorisation $\jointdens(\param,\latent )
    = \margdens(\param) \conddens(\latent\mid\param)$,
where the marginal posterior density $\margdens$ and the corresponding
conditional $\conddens$ are:
\begin{align*}
    \margdens(\param) \defeq \int \jointdens(\param,\latent)\ud \latent
    \propto \pr(\param) L(\param) \qquad\text{and}\qquad
    \conddens(\latent\mid \param) \defeq
    \frac{p^{(\param)}(\latent,\obs)}{L(\param)},
\end{align*}
with the joint density of the latent and the observed $p^{(\param)}(\latent, \obs)$,
and the marginal likelihood $L(\param)$ given as follows:
\begin{align*}
    p^{(\param)}(\latent, \obs) &\defeq
    \mu^{(\param)}(\latent) g^{(\param)}(\obs\mid
    \latent) &\qquad&\text{and}\qquad&
   L(\param) &\defeq \int
    p^{(\param)}(\latent,\obs) \ud \latent.
\end{align*}
Two particularly successful latent variable model inference methods,
the integrated nested Laplace approximation (INLA) \cite{rue-martino-chopin}
and the particle MCMC methods (PMCMC)
\cite{andrieu-doucet-holenstein}, rely on this structure.
In essence, the INLA is based on an efficient Laplace approximation
$p_a^{(\param)}(\latent,\obs)$ of $p^{(\param)}(\latent,\obs)$,
determining an approximate marginal likelihood $L_a(\param)$
and approximate conditional distribution
$\conddens_a(\latent\mid\obs)$.
Particle MCMC uses a specialised SMC algorithm,
which provides an unbiased approximation of
expectations with respect to 
$p^{(\param)}(\latent,\obs)$ allowing for exact inference, and which is
particularly efficient in the state space models context.


\subsection{An algorithmic description}
\label{sec:algorithmic-method} 

The primary aim of this paper is the efficient use of an approximate
marginal likelihood $L_a(\param)$ within a Monte Carlo framework that
leads to efficient, parallelisable and exact inference. For instance,
Laplace approximations often lead to a natural choice for
$L_a(\param)$. The inference method
which we propose comprises two algorithmic phases, which are
summarised below:
\begin{enumerate}
 \item[Phase 1:]
   Simulate a Markov chain $(\Param_k)_{k=1,\ldots,n}$ targeting
   an approximate hyperparameter posterior
   $
       \approxdens(\param) \propto \pr(\param) L_a(\param).
   $
 \item[Phase 2:]
   For each $\Param_k$, sample
   $(\properweightalt_k^{(i)},\Latent_k^{(i)})_{i=1,\ldots,m}$
   where $V_k^{(i)}\in\R$ and $\Latent_k^{(i)}$ are in the latent
   variable space,
   and calculate $\properweight_k^{(i)} \defeq
   \properweightalt_k^{(i)}/L_a(\Param_k)$,
   which determine a weighted estimator
   \begin{equation}
       E_n(f) \defeq
       \frac{\sum_{k=1}^n \sum_{i=1}^m \properweight_k^{(i)}
         f(\Param_k, \Latent_k^{(i)})}{\sum_{j=1}^n \sum_{\ell=1}^m
         \properweight_j^{(\ell)}}
       \label{eq:corrected-estimator}
   \end{equation}
   of the full posterior expectation $ \E_\jointdens[f(\Param,\Latent)]
   = \int
   f(\param,\latent)\jointdens(\param,\latent)\ud \param\ud\latent$.
\end{enumerate}
The following conditions are essential to ensure the consistency of the 
estimator:
\begin{enumerate}
    \item[C1:]
      The approximation is consistent, in the sense that 
      $L_a(\param)>0$ whenever
      $L(\param)>0$, and $\int \pr(\param)L_a(\param)\ud \param<\infty$.
    \item[C2:]
      The Markov chain $(\Param_k)_{k\ge n}$ is Harris
      ergodic (Definition \ref{def:harris}) with respect to $\approxdens$.
    \item[C3:]
      Denoting $f^*(\param) \defeq \E_{\pi}[f(\Param,\Latent)\mid\Param=\param]
= \int r(\latent\mid\param) f(\param,\latent)\ud\latent$, there exists
a constant $c_w>0$ such that
\begin{align}
    \E\bigg[ \sum_{i=1}^m \properweightalt_k^{(i)}
    f(\Param_k,\Latent_k^{(i)})\biggmid \Param_k=\param\bigg]
    &= c_w L(\param) f^*(\param),
    \label{eq:simple-unbiasedness}
\end{align}
for all $\param\in\paramspace$,
all functions $f$ of interest, and for $f\equiv 1$ 
(i.e.~\eqref{eq:simple-unbiasedness} holds with
$f(\uarg)$ and $f^*(\uarg)$ omitted).
The value of $c_w$ need not be known.
\end{enumerate}
Both C1 and C2  are easily
satisfied by construction of the approximation, and
C3 is satisfied by many schemes. Appendix
\ref{sec:ssm} reviews how the particle filter leads to such schemes. 
The conditions C1--C3 are key, but unfortunately 
not enough to guarantee consistency, which requires 
also a (mild) integrability condition, which
$(\properweight_k^{(i)},\Latent_k^{(i)})_{k=1,\ldots, n;\, i=1,\ldots, m}$ 
must satisfy. Fortunately, in the common case of non-negative
$\properweightalt_k^{(i)}$, this integrability is guaranteed if
also $|f|$ satisfies \eqref{eq:simple-unbiasedness}; 
see Section \ref{sec:simple} for
details. Further conditions ensure a central limit theorem
$\sqrt{n}\{E_n(f) - \E_\jointdens[f(\Param,\Latent)]\}\to
N(0,\sigma^2)$, as detailed in Section \ref{sec:asvar-clt}.

When Phase 1 is a Metropolis-Hastings algorithm,
it is possible to generate only one batch of
$(\tilde{\properweightalt}_k^{(i)},\tilde{\Latent}_k^{(i)})_{i=1,\ldots,m}$
for each \emph{accepted} state $(\tilde{\Param}_k)$. If $N_k$ stands
for the time spent at $\tilde{\Param}_k$, then the corresponding weights are
determined as $\tilde{W}_k \defeq N_k
V_k^{(i)}/L_a(\tilde{\Param}_k)$; see
Section \ref{sec:block} for details about such `jump chain' estimators.


\subsection{Use with approximate pseudo-marginal MCMC}
\label{sec:algorithmic-pseudo} 

In many scenarios, such as with time-discretised diffusions, the
latent variable prior density $\mu^{(\param)}$ cannot be evaluated,
and exact simulation is impossible or very expensive. Simulation is
also expensive with a fine enough time-discretisation.

A coarsely discretised model leads to a natural cheap approximation
$\hat{\mu}^{(\param)}$, but in Phase 1, the Markov
chain will often be a pseudo-marginal MCMC
\cite[cf.][]{andrieu-roberts}, in which case our scheme would have the
following form:
\begin{enumerate}
 \item[Phase 1':]
   Simulate a pseudo-marginal Metropolis-Hastings chain
   $(\Param_k,U_k)$ for $k=1,\ldots,n$, following
   \begin{enumerate}
       \item Draw a proposal $\tilde{\Param}_k$ from $q(\Param_{k-1},\uarg)$
         and given $\tilde{\Param}_k$, construct an estimator
         $\tilde{U}_k\ge 0$ such that $\E[\tilde{U}_k\mid
         \tilde{\Param}_k=\param] = L_a(\param)$.
       \item With probability
         $
             \min\Big\{1,\frac{\pr(\tilde{\Param}_k)
                 \tilde{U}_k q(\tilde{\Param}_k,\Param_{k-1})
                 }{
                 \pr(\Param_{k-1})U_{k-1}
               q(\Param_{k-1},\tilde{\Param}_k)}
               \Big\}$,
         accept and set $(\Param_k,U_k) =
         (\tilde{\Param}_k,\tilde{U}_k)$; otherwise reject the move.
   \end{enumerate}
   \item[Phase 2':]
     For each $(\Param_k,U_k)$, sample
   $(\properweightalt_k^{(i)},\Latent_k^{(i)})_{i=1,\ldots,m}$
   and set $\properweight_k^{(i)} \defeq \properweightalt_k^{(i)}/U_k$,
   which determine the estimator as in \eqref{eq:corrected-estimator}.
\end{enumerate}
Algorithmically, the pseudo-marginal version above is similar to the
method in Section \ref{sec:algorithmic-method}, with the likelihood
$L_a(\Param_k)$ replaced with its estimator $U_k$. The requirements
for the approximate likelihood C1
and its estimator C3 remain identical, and
C2 must hold for the pseudo-marginal chain
$(\Param_k,U_k)_{k\ge 1}$, together with the following condition:
\begin{enumerate}
    \item[C4:]
    The estimators $\tilde{U}_k$ are strictly positive, almost surely,
    for all $\tilde{\Param}_k\in\paramspace$.
\end{enumerate}
These are enough to guarantee consistency; see
Section \ref{sec:pseudo}, and in particular
Proposition \ref{prop:pseudo-approx-is} for details, which also justifies
why C4 is
needed for consistency.
In practice it may be easily
satisfied, because the likelihood estimators $\tilde{U}_k$ may be 
inflated, if necessary (see Section \ref{sec:discussion}).

Note that the variables
$(\properweightalt_k^{(i)},\Latent_k^{(i)})_{i=1,\ldots,m}$
may depend on both $\Param_k$ and the related likelihood estimate $U_k$. The
dependency may be useful, if positively correlated $\properweightalt_k^{(i)}$ and
$U_k$ are available, leading to lower variance weights
$\properweight_k^{(i)}=\properweightalt_k^{(i)}/U_k$. This is similar
to the correlated pseudo-marginal algorithm
\cite{deligiannidis-doucet-pitt-kohn}, which relies on a particular form
of $\properweightalt_k^{(i)}$ and $U_k$. 
If positively correlated structure is 
unavailable, $(\properweightalt_k^{(i)},\Latent_k^{(i)})_{i=1,\ldots,m}$
may be constructed independently of $U_k$.


\subsection{Comparison with delayed acceptance}
\label{sec:algorithmic-da} 

The key condition, under which we believe our method to be useful, is
that the Phase 1 Markov chain is computationally
relatively cheap compared to construction of the random variables
$(\properweight_k^{(i)},\Latent_k^{(i)})$ computed in Phase 2.
Similar rationale, and similar building blocks --- a $\approxdens$-reversible
Markov chain and random variables analogous to
$(\properweight_k^{(i)},\Latent_k^{(i)})$ --- have been suggested
earlier for construction of a delayed acceptance (DA) pseudo-marginal
MCMC scheme \cite[cf.][]{golightly-henderson-sherlock}. Such an
algorithm defines a Markov chain
$(\Param_k,\properweight_k^{(i)},\Latent_k^{(i)})_{k\ge 1}$, with one
iteration consisting of the following steps:
\begin{enumerate}
\item[DA 1:] 
  Draw $\tilde{\Param}_k\sim P(\Param_{k-1},\uarg)$.
  If $\tilde{\Param}_k = \Param_{k-1}$ reject and set 
  $(\Param_k, W_k^{(i)}, X_k^{(i)}) = (\Param_{k-1}, W_{k-1}^{(i)},
  X_{k-1}^{(i)})$; otherwise go to
  (DA 2).
\item[DA 2:] 
  Conditional on $\tilde{\Param}_k$, draw
  $(\tilde{\properweightalt}_k^{(i)},\tilde{\Latent}_k^{(i)})_{i=1,\ldots,m}$
  which satisfy
  \eqref{eq:simple-unbiasedness} with $\tilde{\Param}_k$ in place of
  $\Param_k$, and set $\tilde{\properweight}_k^{(i)} \defeq
  \tilde{\properweightalt}_k^{(i)}/L_a(\tilde{\Param}_k)$.
With probability
  $ \min\Big\{1,\frac{\sum_{i=1}^m
          \tilde{\properweight}_k^{(i)}}{\sum_{\ell=1}^m
          \properweight_{k-1}^{(\ell)}}\Big\}$,
  accept and set 
  $(\Param_k,\properweight_k^{(i)},\Latent_k^{(i)}) = 
  (\tilde{\Param}_k,\tilde{\properweight}_k^{(i)},\tilde{\Latent}_k^{(i)})$;
  otherwise reject and set 
  $(\Param_k, W_k^{(i)}, X_k^{(i)}) = (\Param_{k-1}, W_{k-1}^{(i)},
  X_{k-1}^{(i)})$
\end{enumerate}
If the pseudo-marginal method is used in DA 1 
the value $L_a(\Param_k)$ is replaced with the related likelihood
estimator. Under the same assumptions as required by our scheme,
and additionally requiring that $\tilde{\properweight}_k^{(i)}\ge 0$,
and that the Markov chain $(\Param_k,\properweight_k^{(i)},\Latent_k^{(i)})_{k\ge 1}$
is Harris, the DA scheme leads to a consistent estimator:
\[
    \frac{1}{n}\sum_{k=1}^n \sum_{i=1}^m
    \bigg(\frac{\properweight_k^{(i)}}{\sum_{\ell=1}^m
      \properweight_k^{(\ell)}}\bigg)
    f(\Param_k,\Latent_k^{(i)})
    \xrightarrow{n\to\infty} \E_\pi[f(\Param,\Latent)].
\]

Our IS scheme is a natural alternative to such a DA scheme,
replacing the independent Metropolis-Hastings type accept-reject step
DA 2 with analogous weighting. This relatively small
algorithmic change brings many, potentially substantial, benefits over
DA, which we note next.
\begin{enumerate}
\item Phase 2
  corrections are entirely independent ‘post-processing’ of
  Phase 1 
  MCMC output $(\Param_k)_{k=1,\ldots,n}$, which is easy to
  implement efficiently using parallel or distributed computing.
  This is unlike DA 1 and DA 2,
  which must be iterated sequentially.
\item 
  \label{item:is-da-comparison} If Phase 2
  correction variables are calculated only once for each accepted
  $\Param_k$ (so-called `jump chain' representation, see Section
  \ref{sec:block}), the IS method will typically be computationally less
  expensive than DA with the same number of iterations, even without
  parallelisation.
\item 
  \label{item:is-da-comparison-thinning}
  The Phase 1
  MCMC chain $(\Param_k)$ may be (further) thinned before applying
  (much more computationally demanding) Phase 2.
  Thinning of the DA chain is less likely beneficial
  \cite[cf.][]{owen-thinning}.
\item In case the approximate marginal MCMC $(\Param_k)$ is based on
  a deterministic likelihood approximation, it is generally `safer'
  than (pseudo-marginal) DA using likelihood
  estimators, because pseudo-marginal MCMC may have issues with mixing
  \cite[cf.][]{andrieu-vihola-pseudo}. It is also easier to implement
  efficiently. For instance, popular adaptive MCMC methods which rely on acceptance
  rate optimisation \cite[and references therein]{andrieu-thoms}
  are directly applicable. 
\item Reversibility of the MCMC kernel $P$ in
  DA 1 is necessary,
  but not required for the Phase 1 MCMC.
\item The average of the weights, $n^{-1} \sum_{k=1}^n \sum_{i=1}^m
  W_k^{(i)}$, provides a consistent estimator of the ratio of the 
  normalising constants of $\pi$
  and $\pi_a$, which may be useful in some contexts.
\item Non-negativity of $\properweight_k^{(i)}$ is
  required in DA 2, but
  not in Phase 2.
  This may be useful in certain
  contexts, where multilevel \cite{heinrich,giles-or}
  or debiasing \cite{mcleish,rhee-glynn,vihola-unbiased} are
  applicable.
  (See also the discussion in \cite{jacob-thiery} why pseudo-marginal
  method may not be applicable at all in such a context.)
\item The separation of `approximate' Phase 1
  and `exact' Phase  2 allows for two-level
  inference. In statistical practice, preliminary analysis could be based on
  (fast) purely approximate inference, and the (computationally
  demanding) exact method
  could be applied only as a final verification to
  ensure that the approximation did not affect the findings.
\end{enumerate}
To elaborate the last point, the approximate likelihood $L_a(\param)$
is usually based on an approximation $p_a^{(\param)}(\latent,\obs)$ of the
latent model $p^{(\param)}(\latent,\obs)$. If the approximate model
admits tractable expectations of functions $f$ of interest or exact
simulation,
direct approximate inference is possible, because
  \[
      \frac{1}{n} \sum_{k=1}^n
      f_a^* (\Param_k) \to
      \E_{\tilde{\pi}}[f(\Param,\Latent)],
      \qquad \text{where}\qquad
      f_a^*(\param) \defeq
  \E_{\tilde{\pi}}[f(\Param,\Latent)\mid\Param=\param],
  \]
  with approximate joint posterior $\tilde{\pi}(\param,\latent)\propto
  \pr(\param)p_a^{(\param)}(\latent,\obs)$.
  Then,
  Phase 2 allows for quantification of
  the bias
  $\E_{\tilde{\pi}}[f(\Param,\Latent)]-\E_{\pi}[f(\Param,\Latent)]$,
  and confirmation that both inferences lead to the same conclusions.

Even though IS is likely to bring benefits over DA in many scenarios,
there are some situations where DA might perform better. In fact, 
DA may be used in some scenarios where IS cannot be used at all; 
namely, $P$ may only be reversible with respect to a positive measure
that is not a probability measure, and DA can still be valid.
DA may also be more robust with respect to tail behaviour of $\approxdens$, 
where IS needs more care; see the discussion in Section \ref{sec:discussion}.
The further theoretical work
\cite{franks-vihola} provides upper bounds for ratios of the
asymptotic variances of IS and DA in terms of (lower or upper) 
bounds of the weights $W_k^{(i)}$, 
and includes examples where DA outperforms IS 
and vice versa.



\section{Notation and preliminaries}
\label{sec:notation} 

Throughout the paper, we consider general state spaces while using
standard integral notation. If the model at hand is given in terms of 
standard probability densities, the rest of this 
paragraph can be skipped.
Each space $\X$ is assumed to be equipped with a $\sigma$-finite
dominating measure `$\ud x$' on a
$\sigma$-algebra denoted with a corresponding calligraphic
letter, such as $\cX$. Product spaces are equipped with the related
product $\sigma$-algebras and product dominating measures.  If $\X$
is a subset of an Euclidean space $\R^d$, $\ud x$ is taken by default
as the Lebesgue measure and $\cX$ as the Borel subsets of $\X$.
$\R_+$ stands for the non-negative real numbers, and constant unit function
is denoted by $\unitfun$.

If $\nu$ is a probability density on $\X$, we define the support of
$\nu$ as $\supp(\nu) \defeq \{x\in\X\given \nu(x)>0\}$, and the
probability measure corresponding to $\nu$ with the same symbol $\nu(\ud
x) \defeq \nu(x)\ud x$.
If $g:\X\to\R$, we denote $\nu(g)\defeq \int
g(x)\nu(\ud x)$, whenever well-defined.
For a probability density or measure $\nu$ on $\X$ and
$p\in[1,\infty)$, we denote by $L^p(\nu)$ the set of measurable
$g:\X\to\R$ with $\nu(|g|^p)<\infty$, and by $L_0^p(\nu)\defeq \{g\in
L^p(\nu)\given \nu(g)=0\}$ the corresponding set of zero-mean
functions. If $P$ is a Markov
transition probability, we denote the probability measure $(\nu P)(A)
\defeq \int \nu(\ud x) P(x,A)$, and the function $(Pg)(x) \defeq \int
P(x,\ud y) g(y)$. Iterates of transition probabilities are defined
recursively through $P^n(x,A) \defeq \int P(x,\ud y) P^{n-1}(y,A)$ for
$n\ge 1$, where $P^0(y,A) \defeq \charfun{y\in A}$.

We follow the conventions $0/0\defeq 0$ and $\N\defeq \{1,2,\ldots\}$.
For integers $a\le b$, we denote by $a{:}b$ the integers within the interval $[a,b]$. We use this
notation in indexing, so that $x_{a:b} = (x_a,\ldots,x_b)$, $x^{(a:b)}
= (x^{(a)},\ldots,x^{(b)})$. If $a>b$, then $x_{a:b}$ or $x^{(a:b)}$
is void, so that for example $g(x,y_{1:0})$ is interpreted as $g(x)$.  Similarly,
if $i_{1:T}$ is a vector, then $x^{(i_{1:T})} =
(x^{(i_1)},\ldots,x^{(i_T)})$ and
$x_{1:T}^{(i_{1:T})}=(x_1^{(i_1)},\ldots,x_T^{(i_T)})$.
We also use double-indexing, such as
$x_k^{(1:m,1:n)}=(x_k^{(1,1)}$, $\!\ldots\,$, $x_k^{(1,n)}$, $x_k^{(2,1)}$, 
$\!\ldots\,$, $x_k^{(m,n)})$.

Throughout the paper, we assume the underlying MCMC scheme to satisfy
the following standard condition.
\begin{definition2}[Harris ergodicity] 
    \label{def:harris} 
A Markov chain is called \emph{Harris ergodic} with respect to
$\nu$, if it is $\psi$-irreducible, Harris
recurrent and with invariant probability $\nu$.
\end{definition2}
Virtually all MCMC schemes are Harris ergodic
\cite[cf.][]{tierney,nummelin-mcmcs}, although in some cases careless
implementation could lead to a non-Harris chain
\cite[cf.][]{roberts-rosenthal-harris}.
Thanks to the Harris assumption, all the limit theorems which we give
hold for any initial distribution of the related Markov chain.


\section{General importance sampling type correction of MCMC}
\label{sec:simple} 

Hereafter, $\approxdens$ is a probability density on $\paramspace$ and represents an
approximation of a probability density $\margdens$ of interest.
The consistency of IS type correction relies on the following mild
assumption.
\begin{assumption}
    \label{a:mcmc-is} 
The Markov chain $(\Param_k)_{k\ge
  1}$ and the density $\approxdens$ satisfy:
\begin{enumerate}[label=(\roman*)]
    \item
    $(\Param_k)_{k\ge 1}$ is Harris ergodic with
    respect to $\approxdens$.
    \item \label{item:support} $\supp(\margdens) \subset \supp(\approxdens)$.
    \item $w_u(\param) \defeq c_w \margdens(\param)/\approxdens(\param)$, where
      $c_w>0$ is a constant.
\end{enumerate}
\end{assumption} 
If Assumption \ref{a:mcmc-is} holds and
it is possible to calculate the unnormalised importance weight
$w_u(\param)$ pointwise, the chain $(\Param_k)_{k\ge
  1}$ can be weighted in order to approximate $\margdens(g)$
for every $g\in L^1(\margdens)$, using (self-normalised)
importance sampling \cite[e.g.][]{glynn-iglehart,doss}
\begin{equation*}
    \frac{\sum_{k=1}^n w_u(\Param_k) g(\Param_k)}{\sum_{j=1}^n
      w_u(\Param_j)}
    = \frac{n^{-1}\sum_{k=1}^n w_u(\Param_k)
      g(\Param_k)}{n^{-1}\sum_{j=1}^n w_u(\Param_j)}
    \xrightarrow{n\to\infty} \frac{\approxdens(w_u g)}{\approxdens(w_u)}
    = \margdens(g)\qquad\text{almost surely},
\end{equation*}
as Harris ergodicity
guarantees the almost sure convergence of
both the numerator and the denominator.

In case $\margdens$ is a marginal density, which we will focus on, both the ratio
$w_u(\param)$ and the function $g$ (which will be a conditional
expectation) are typically intractable. Instead, it is often possible to
construct unbiased estimators, which may be used in order to estimate
the numerator and the denominator, in place of
$w_u(\Param_k)$ and $g(\Param_k)$, under mild conditions.  In order to
formalise such a setting, we give the following generic condition
for ratio estimators, which resemble the IS
correction above. 

\begin{assumption}
    \label{a:super-general} 
Suppose 
Assumption \ref{a:mcmc-is} holds, 
and let $(S_k)_{k\ge 1}$, where $S_k = \big(A_k,B_k\big)\in\R^2$,
be conditionally independent
given $(\Param_k)_{k\ge 1}$, such that
the distribution of $S_k$ depends only on the value of $\Param_k$, and
\begin{enumerate}[label=(\roman*)]
    \item $f_A(\param)\defeq \E[A_k\mid \Param_k=\param]$ satisfies
      $\approxdens(f_A) = c_w \margdens(g)$,
    \item $f_B(\param) \defeq \E[B_k\mid \Param_k=\param]$ satisfies
      $\approxdens(f_B) = c_w$, and
    \item $\approxdens(m^{(1)})<\infty$ where
    $m^{(1)}(\param) \defeq \E\big[ |A_k| + |B_k| \bigmid
    \Param_k = \param\big]$.
\end{enumerate}
\end{assumption} 

We record the following simple statement which guarantees consistency
under Assumption \ref{a:super-general}.
\begin{lemma}
    \label{lem:super-general} 
If Assumption \ref{a:super-general} holds
for some $g\in L^1(\margdens)$, then
\[
    E_n(g) \defeq \frac{\sum_{k=1}^n A_k}{\sum_{j=1}^n B_j}
    \xrightarrow{n\to\infty} \margdens(g)
    \qquad\text{almost surely}.
\]
\end{lemma}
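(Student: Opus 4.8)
The plan is to reduce the statement to two separate strong laws of large numbers, one for the numerator and one for the denominator, and then pass to the ratio. Writing $E_n(g) = \big(n^{-1}\sum_{k=1}^n A_k\big)\big/\big(n^{-1}\sum_{j=1}^n B_j\big)$, it suffices to show $n^{-1}\sum_{k=1}^n A_k \to c_w\margdens(g)$ and $n^{-1}\sum_{j=1}^n B_j \to c_w$ almost surely; since $c_w>0$, the denominator limit is nonzero and the quotient map is continuous there, so $E_n(g)\to c_w\margdens(g)/c_w = \margdens(g)$. The hypothesis $g\in L^1(\margdens)$ guarantees the limit $\margdens(g)$ is a well-defined finite number, consistent with $\approxdens(f_A)$ being finite. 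The natural device for the two limits is to exploit that, although $A_k,B_k$ are random given $\Param_k$, the conditional-independence-plus-$\Param_k$-measurability hypothesis makes the \emph{augmented} process a Markov chain to which the ergodic theorem applies directly; this avoids any martingale argument, which would demand second moments that Assumption \ref{a:super-general} does not provide.

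Concretely, I would write $Q(\param,\ud s)$ for the conditional law of $S_k$ given $\Param_k=\param$ and consider the augmented chain $\bar\Param_k\defeq(\Param_k,S_k)$ on $\paramspace\times\R^2$. By the stated structure (the $S_k$ are conditionally independent given $(\Param_k)_{k\ge1}$, with the law of $S_k$ depending only on $\Param_k$), $(\bar\Param_k)_{k\ge1}$ is a Markov chain: from $(\param,s)$ it advances the $\Param$-coordinate by the original transition kernel and then draws a fresh decoration from $Q$. Its invariant probability is the product-form measure $\bar\approxdens(\ud\param,\ud s)\defeq\approxdens(\ud\param)Q(\param,\ud s)$, which one checks directly from invariance of $\approxdens$ for the $\Param$-chain. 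The step that needs care, and which I expect to be the main obstacle, is verifying that $\bar\Param_k$ is Harris ergodic in the sense of Definition \ref{def:harris}. Invariance is immediate; $\psi$-irreducibility and Harris recurrence are inherited from the $\Param$-marginal because the decoration is a fresh conditionally independent draw at each step. I would spell this out via the $\Param$-marginal return structure: for a set $\bar A$ with $\bar\approxdens(\bar A)>0$, Fubini gives $\approxdens(\{\param: Q(\param,\bar A_\param)>0\})>0$, the original Harris chain visits this set infinitely often, and a conditional Borel--Cantelli argument then forces the decoration into the relevant section infinitely often as well.

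With Harris ergodicity of $\bar\Param_k$ in hand, I would apply the ergodic theorem to the coordinate functionals $h_A(\param,(a,b))\defeq a$ and $h_B(\param,(a,b))\defeq b$. These lie in $L^1(\bar\approxdens)$, since $\bar\approxdens(|h_A|) = \approxdens\big(\param\mapsto\E[|A_k|\mid\Param_k=\param]\big)\le\approxdens(m^{(1)})<\infty$ by Assumption \ref{a:super-general}(iii), and similarly for $h_B$. The ergodic theorem for Harris ergodic chains then yields, for any initial distribution,
\[
    \frac1n\sum_{k=1}^n A_k \xrightarrow[\text{a.s.}]{n\to\infty} \bar\approxdens(h_A) = \approxdens(f_A) = c_w\margdens(g),
    \qquad
    \frac1n\sum_{j=1}^n B_j \xrightarrow[\text{a.s.}]{n\to\infty} \bar\approxdens(h_B) = \approxdens(f_B) = c_w,
\]
where parts (i) and (ii) of Assumption \ref{a:super-general} identify the limits. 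Taking the quotient and using $c_w>0$ completes the proof. I expect the only genuine work to be the augmented-chain Harris-ergodicity bookkeeping; once granted, the $L^1$ strength of the ergodic theorem is exactly what lets us dispense with any moment condition beyond Assumption \ref{a:super-general}(iii).
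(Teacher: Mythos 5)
Your proof is correct and follows essentially the same route as the paper: the paper likewise augments the chain to $(\Param_k,S_k)$ with kernel $K(\param,\ud \param')Q(\param',\ud s')$, establishes Harris ergodicity of the augmented chain via Lemma \ref{lem:aug-prop} in Appendix \ref{app:aug}, and applies the $L^1$ ergodic theorem to the coordinate functionals before taking the quotient using $c_w>0$. The only detail worth tightening in your Harris-recurrence sketch is that the conditional Borel--Cantelli step requires restricting to a sublevel set $\{\param\given Q(\param,\bar A_\param)>\epsilon\}$ of positive measure, so that the success probabilities of the independent Bernoulli indicators are uniformly bounded below, exactly as in the paper's proof of Lemma \ref{lem:aug-prop} \eqref{item:aug-harris}.
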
 
The proof of Lemma \ref{lem:super-general} follows by
observing that $(\Param_k,S_k)_{k\ge 1}$ is Harris ergodic, where
$S_k=(A_k,B_k)$, and
the functions $h_1(\param,a,b) = a$ and $h_2(\param,a,b)=b$ are
integrable with respect to its invariant distribution
$\check{\pi}(\ud \param\times\ud s) \defeq \approxdens(\ud \param)
Q(\param,\ud s)$,
where $Q(\param,A) \defeq \P(S_k\in A\mid \Param_k=\param)$; see
Lemma \ref{lem:aug-prop} in Appendix \ref{app:aug}.

In the latent variable model discussed in
Section \ref{sec:latent}, the aim is inference over
a joint target density $\jointdens(\param,\latent)
\defeq \margdens(\param) \conddens(\latent\mid \param)$ on an extended state space
$\paramspace\times\latentspace$. For every function $f\in L^1(\jointdens)$, we denote
by $f^*(\param)\defeq \int
\conddens(\latent\mid \param) f(\param,\latent)\ud
\latent$ the conditional
expectation of $f$ given $\param$, so $\jointdens(f) = \margdens(f^*)$.
The following formalises a scheme which satisfies
Assumption \ref{a:super-general}
with $g = f^*$
and therefore guarantees consistency for a class of functions
$f\in \Ls \subset L^1(\jointdens)$.

\begin{definition2}[$\Ls$-Proper weighting scheme]
    \label{def:proper} 
Suppose Assumption \ref{a:mcmc-is} holds, and let $(P_k)_{k\ge 1}$
be conditionally independent given $(\Param_k)_{k\ge 1}$, such that the
distribution of each $P_k= (M_k,\properweight_k^{(1:M_k)},\Latent_k^{(1:M_k)})$
depends only on the value of $\Param_k$, 
where $M_k\in\N$,
$\properweight_k^{(i)}\in\R$ and $\Latent_k^{(i)}\in\latentspace$.
Define for any $f\in L^1(\jointdens)$,
\[
    \proper_k(f) \defeq \sum_{i=1}^{M_k} \properweight_k^{(i)}
    f(\Param_k,\Latent_k^{(i)}).
\]
Let $\Ls\subset L^1(\jointdens)$ be all the functions for which
\begin{enumerate}[label=(\roman*)]
    \item
$\meanfunc_f(\param)\defeq
\E[\proper_k(f)\mid \Param_k=\param]$ satisfies $\pi_a(\meanfunc_f) = c_w
\pi(f)$, and
\label{item:proper-unbiased}
\item
            $\approxdens(m_f^{(1)})<\infty$ where
            $m_f^{(1)}(\param)
              \defeq \E\big[ |\proper_k(f)| \bigmid
                  \Param_k=\param\big]$.
  \label{item:proper-general}
\end{enumerate}
If $\unitfun\in \Ls$, then
$(\properweight_k^{(1:M_k)},\Latent_k^{(1:M_k)})_{k\ge 1}$ or
equivalently $(\proper_k)_{k\ge 1}$,
form a $\Ls$-\emph{proper weighting scheme}.
\end{definition2} 

\begin{remark2} 
Regarding Definition \ref{def:proper}:
\begin{enumerate}[label=(\roman*)]
    \item
      In case of non-negative weights, that is, $\properweight_k^{(i)}\ge 0$ almost
      surely, we have $|\proper_k(\unitfun)| =
      \proper_k(\unitfun)$, so $f\equiv \unitfun\in\Ls$ if and only if
      \ref{item:proper-unbiased} holds for $f\equiv \unitfun$.
      Further, if \ref{item:proper-unbiased} holds for both $f$ and
      $|f|$, then
      \ref{item:proper-general} holds, because
                $|\proper_k(f)| \le \proper_k(|f|)$.
    \item When certain multilevel \cite{heinrich,giles-or} or debiasing methods
      \cite[cf.][]{mcleish,rhee-glynn,glynn-rhee} are applied,
       $\properweight_k^{(i)}$ generally take
      also negative values. In such a case, an extra integrability
      condition is necessary, and we believe \ref{item:proper-general}
      is required for consistency in general.
    \item Note that $\Ls$ is closed under linear operations, that is,
      if $a,b\in\R$ and $f,g\in\Ls$, then
      $af+bg\in\Ls$. This, together with $\Ls$ containing constant
      functions, implies that if $f\in\Ls$, then
      $\bar{f} \defeq
      f - \pi(f)\in\Ls$.
    \item In fact, $\proper_k$ may be interpreted as a
      \emph{random (signed) measure} $\xi_k(\ud \theta, \ud x) = \sum_{i=1}^{M_k}
      W_k^{(i)}\delta_{\Theta_k}(\ud \theta)\delta_{X_k^{(i)}}(\ud x)$. 
      Our results extend also to a generalisation, where 
      $\delta_{X_k^{(i)}}(\ud x)$ is replaced by another measure,
      as long as 
      \ref{item:proper-unbiased} and \ref{item:proper-general}
      hold. For instance, in the context of Rao-Blackwellisation, 
      we could have Gaussian distributions in place of $\delta_{X_k^{(i)}}(\ud x)$.
\end{enumerate}
\end{remark2} 

The following consistency result is a direct consequence of 
Lemma \ref{lem:super-general}.

\begin{theorem}
    \label{thm:proper-consistency} 
          If $(\proper_k)_{k\ge 1}$ form a $\Ls$-proper weighting scheme,
          then the IS type estimator is consistent, that is,
          \begin{equation}
              E_n(f) \defeq \frac{\sum_{k=1}^n
                \proper_k(f) }{\sum_{j=1}^n
                \proper_j(\unitfun)   }
              \xrightarrow{n\to\infty} \jointdens(f),
              \qquad \text{almost surely}.
              \label{eq:proper-estimator}
          \end{equation}
\end{theorem}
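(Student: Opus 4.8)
The plan is to deduce the statement directly from Lemma \ref{lem:super-general}, since Assumption \ref{a:super-general} and Definition \ref{def:proper} were set up precisely to make this reduction mechanical. Fix $f\in\Ls$ and set $A_k \defeq \proper_k(f)$ and $B_k \defeq \proper_k(\unitfun)$, so that $S_k = (A_k,B_k)$ and the estimator in \eqref{eq:proper-estimator} is exactly $\sum_{k=1}^n A_k \big/ \sum_{j=1}^n B_j$. I would take the candidate limit integrand to be $g \defeq f^*$, the conditional expectation introduced in this section, and recall the identity $\margdens(f^*) = \jointdens(f)$, which is what turns a statement about $\margdens$ into one about $\jointdens$.

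It then remains to verify that $(S_k)_{k\ge 1}$ and $g=f^*$ satisfy the three conditions of Assumption \ref{a:super-general}. First, $(S_k)$ inherits the required conditional independence given $(\Param_k)_{k\ge 1}$ with law depending only on $\Param_k$, since each $S_k$ is a fixed measurable function of $(P_k,\Param_k)$ and $(P_k)$ has this property by Definition \ref{def:proper}. Condition (i) is unbiasedness for the numerator: $f_A(\param) = \E[\proper_k(f)\mid\Param_k=\param] = \meanfunc_f(\param)$, and the unbiasedness requirement in Definition \ref{def:proper} gives $\approxdens(\meanfunc_f) = c_w\jointdens(f) = c_w\margdens(f^*) = c_w\margdens(g)$. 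Condition (ii) is the same identity applied to $f\equiv\unitfun$, which lies in $\Ls$ by the definition of a proper weighting scheme: $f_B(\param) = \meanfunc_{\unitfun}(\param)$ and $\approxdens(\meanfunc_{\unitfun}) = c_w\jointdens(\unitfun) = c_w$. Condition (iii), namely $\approxdens(m^{(1)})<\infty$ for $m^{(1)} = m_f^{(1)} + m_{\unitfun}^{(1)}$, follows by adding the two finite integrals supplied by the integrability requirement of Definition \ref{def:proper}, applied to $f$ and to $\unitfun$. I would also record that the integrand is admissible for the lemma, i.e.\ $g=f^*\in L^1(\margdens)$: by Tonelli and the triangle inequality, $\margdens(|f^*|)\le\jointdens(|f|)<\infty$ because $f\in L^1(\jointdens)$.

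With all hypotheses in place, Lemma \ref{lem:super-general} gives $E_n(f)\to\margdens(f^*) = \jointdens(f)$ almost surely, which is the claim. I expect no genuine obstacle: the mathematical content is entirely absorbed into Definition \ref{def:proper} and Lemma \ref{lem:super-general}, and what remains is bookkeeping. The one point worth flagging is that both unbiasedness and integrability must be invoked \emph{twice} --- once for the numerator through $f\in\Ls$ and once for the denominator through $\unitfun\in\Ls$ --- so the stipulation $\unitfun\in\Ls$ embedded in the definition of a proper weighting scheme is exactly what secures the law of large numbers for the normalising denominator and its nonvanishing limit $c_w>0$.
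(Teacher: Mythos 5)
Your proposal is correct and follows exactly the route the paper intends: Theorem \ref{thm:proper-consistency} is stated there as a direct consequence of Lemma \ref{lem:super-general}, obtained by taking $A_k=\proper_k(f)$, $B_k=\proper_k(\unitfun)$ and $g=f^*$, with the hypotheses of Assumption \ref{a:super-general} supplied by Definition \ref{def:proper} applied to $f$ and to $\unitfun$. Your additional checks (that $f^*\in L^1(\margdens)$ and that $\unitfun\in\Ls$ is what controls the denominator) are exactly the right bookkeeping.
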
 

Let us next exemplify a `canonical' setting of a proper weighting
scheme,
stemming from standard
unnormalised importance sampling.
\begin{proposition}
    \label{prop:augmented-is} 
          Suppose  Assumption \ref{a:mcmc-is} holds and
          $q^{(\param)}(\uarg)$ defines a probability density on
          $\latentspace$
          for each $\param\in\paramspace$ and
          $\supp(\jointdens) \subset \{(\param,\latent)\given
            \approxdens(\param)q^{(\param)}(\latent)>0\}$.
          Let
          \[
              \Latent_k^{(1:m)}\simiid q^{(\Param_k)},
              \quad
              \properweightalt_k^{(i)} \defeq
              \frac{1}{m}\cdot
              \frac{c_w\jointdens(\Param_k,
                \Latent_k^{(i)})}{q^{(\Param_k)}(\Latent_k^{(i)})}
              \quad\text{and}\quad
              \properweight_k^{(i)} \defeq
              \frac{\properweightalt_k^{(i)}}{\approxdens(\Param_k)},
          \]
          where $c_w>0$ a constant. Then,
          $(\properweight_k^{(1:m)},\Latent_k^{(1:m)})_{k\ge 1}$ form a
          $L^1(\jointdens)$-proper weighting scheme.
\end{proposition}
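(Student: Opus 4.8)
The plan is to verify the two defining requirements of Definition \ref{def:proper} directly, with $\Ls = L^1(\jointdens)$, and then to observe that $\unitfun$ lies in this class. The conditional independence structure demanded by the definition is immediate from the construction: given $(\Param_k)_{k\ge 1}$ the batch $\Latent_k^{(1:m)}$ is drawn i.i.d.\ from $q^{(\Param_k)}$, so the law of $P_k = (m,\properweight_k^{(1:m)},\Latent_k^{(1:m)})$ depends on $\Param_k$ alone. The work therefore reduces to computing the conditional mean $\meanfunc_f$ and bounding the conditional absolute moment $m_f^{(1)}$.

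First I would compute $\meanfunc_f(\param) = \E[\proper_k(f)\mid\Param_k=\param]$. Substituting the definitions of $\properweight_k^{(i)}$ and $\properweightalt_k^{(i)}$ gives $\proper_k(f) = \approxdens(\Param_k)^{-1} m^{-1}\sum_{i=1}^m \frac{c_w\jointdens(\Param_k,\Latent_k^{(i)})}{q^{(\Param_k)}(\Latent_k^{(i)})} f(\Param_k,\Latent_k^{(i)})$, and since the $\Latent_k^{(i)}$ are i.i.d.\ from $q^{(\param)}$ the $m$ summands share a common conditional expectation, so
\[
    \meanfunc_f(\param) = \frac{c_w}{\approxdens(\param)}\int
    \frac{\jointdens(\param,\latent)}{q^{(\param)}(\latent)}
    f(\param,\latent)\, q^{(\param)}(\latent)\ud\latent .
\]
The crucial step is the importance sampling cancellation reducing the integrand to $c_w\approxdens(\param)^{-1}\int\jointdens(\param,\latent)f(\param,\latent)\ud\latent$. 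Here the hypothesis $\supp(\jointdens)\subset\{(\param,\latent):\approxdens(\param)q^{(\param)}(\latent)>0\}$ and the convention $0/0\defeq 0$ are what make the identity exact: on $\{q^{(\param)}(\latent)=0\}$ the integrand vanishes because $\jointdens(\param,\latent)=0$ there, so no mass is lost. Writing $\jointdens(\param,\latent)=\margdens(\param)\conddens(\latent\mid\param)$ and recalling $f^*(\param)=\int\conddens(\latent\mid\param)f(\param,\latent)\ud\latent$ yields $\meanfunc_f(\param)=c_w\approxdens(\param)^{-1}\margdens(\param)f^*(\param)$. Integrating against $\approxdens$, the factor $\approxdens(\param)$ cancels (using Assumption \ref{a:mcmc-is}(\ref{item:support}), so that $\margdens=0$ wherever $\approxdens=0$), and I obtain $\approxdens(\meanfunc_f)=c_w\margdens(f^*)=c_w\jointdens(f)$, which is requirement (\ref{item:proper-unbiased}).

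For the integrability requirement (\ref{item:proper-general}), I would use that the weights are non-negative, since $\jointdens\ge 0$, $q^{(\param)}\ge 0$ and $c_w>0$. Consequently $|\proper_k(f)|\le\proper_k(|f|)$, so $m_f^{(1)}(\param)\le\meanfunc_{|f|}(\param)$, and applying the display above with $|f|$ in place of $f$ gives $\approxdens(m_f^{(1)})\le\approxdens(\meanfunc_{|f|})=c_w\jointdens(|f|)=c_w\int|f|\ud\jointdens<\infty$, which is finite precisely because $f\in L^1(\jointdens)$. Hence both requirements hold for every $f\in L^1(\jointdens)$, so $\Ls=L^1(\jointdens)$; in particular $\unitfun\in L^1(\jointdens)=\Ls$, and the scheme is $L^1(\jointdens)$-proper.

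I expect the only genuine subtlety to be the bookkeeping around the support conditions --- ensuring the importance sampling identity is exact even where $q^{(\param)}(\latent)=0$ or $\approxdens(\param)=0$, and that the $0/0$ convention is applied consistently throughout. The remainder is a routine conditional-expectation calculation.
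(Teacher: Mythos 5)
Your proof is correct and is the natural direct verification of Definition \ref{def:proper} that the paper leaves implicit (no explicit proof is printed; it is deferred to the preprint as routine). You handle the two genuine subtleties properly: the exactness of the importance-sampling cancellation on $\{q^{(\param)}=0\}$ via the support hypothesis and the $0/0$ convention, and the reduction of the integrability condition to $\meanfunc_{|f|}$ via non-negativity of the weights.
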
 
When the weights are all positive, sub-sampling may be used in order
to save memory.
\begin{proposition}
    \label{prop:randomise-proper} 
Suppose that $(\properweight_k^{(1:M_k)}, \Latent_k^{(1:M_k)})_{k\ge 1}$
forms a $\Ls$-proper
weighting scheme with non-negative $\properweight_k^{(1:M_k)}\ge 0$ (a.s.).
Let $\properweight_k \defeq \sum_{i=1}^{M_k} \properweight_k^{(i)}$ and
let $(I_k)$ be random variables conditionally independent of
$(\Param_k,\Latent_k^{(i)})$
such that $\P(I_k = i) = \properweight_k^{(i)}/\properweight_k$ (and let $I_k=1$
if $W_k=0$).
Then, $(W_k, \Latent_k^{(I_k)})_{k\ge 1}$
forms a $\Ls$-proper weighting scheme.
\end{proposition}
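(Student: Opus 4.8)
The plan is to verify the three requirements of Definition~\ref{def:proper} for the sub-sampled output. Write $P_k = (M_k, \properweight_k^{(1:M_k)}, \Latent_k^{(1:M_k)})$ for the original scheme; the sub-sampled scheme keeps a single component per index, so its weighted statistic is
\[
    \proper_k'(f) \defeq \properweight_k\, f(\Param_k, \Latent_k^{(I_k)}),
    \qquad \properweight_k = \sum_{i=1}^{M_k}\properweight_k^{(i)}.
\]
First I would dispatch the structural condition. The total weight $\properweight_k$ is a deterministic function of $P_k$, and $I_k$ is drawn, conditionally independently across $k$, from a law determined by the weights in $P_k$ alone; hence $(P_k, I_k)_{k\ge 1}$ are conditionally independent given $(\Param_k)_{k\ge 1}$ with laws depending only on $\Param_k$. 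Since $(\properweight_k, \Latent_k^{(I_k)})$ is a measurable function of $(P_k, I_k)$, the sub-sampled outputs inherit exactly this structure.

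The heart of the argument is a one-step identity obtained by conditioning on $P_k$. On the event $\{\properweight_k>0\}$ we have $\P(I_k=i\mid P_k) = \properweight_k^{(i)}/\properweight_k$, so for any $f$,
\[
    \E\big[\proper_k'(f)\bigmid P_k\big]
    = \properweight_k\sum_{i=1}^{M_k}
      \frac{\properweight_k^{(i)}}{\properweight_k}\,
      f(\Param_k,\Latent_k^{(i)})
    = \sum_{i=1}^{M_k}\properweight_k^{(i)} f(\Param_k,\Latent_k^{(i)})
    = \proper_k(f),
\]
while on $\{\properweight_k=0\}$ both ends vanish, since all weights are non-negative and sum to zero; the identity therefore holds unconditionally. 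Averaging once more over $P_k$ given $\Param_k=\param$ yields $\meanfunc_f'(\param) = \meanfunc_f(\param)$ for every $f$ for which $\meanfunc_f$ is defined. Thus condition~\eqref{item:proper-unbiased} for the sub-sampled scheme coincides with that for the original scheme, so it holds for all $f\in\Ls$, and in particular for $f\equiv\unitfun$.

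It remains to verify the integrability condition~\eqref{item:proper-general}, which I expect to be the main obstacle. Because the sub-sampled scheme carries a single non-negative weight, $|\proper_k'(f)| = \properweight_k\,|f(\Param_k,\Latent_k^{(I_k)})| = \proper_k'(|f|)$, so applying the identity above to $|f|$ shows that the quantity $m_f^{(1)}$ of~\eqref{item:proper-general}, evaluated for the sub-sampled scheme, equals $\meanfunc_{|f|}(\param) = \E[\proper_k(|f|)\mid\Param_k=\param]$; its $\approxdens$-integral is therefore $\approxdens(\meanfunc_{|f|})$. The delicate point is that this is a statement about $|f|$ rather than $f$, and it is strictly stronger than the original condition~\eqref{item:proper-general}, since non-negativity of the original weights only yields $\E[|\proper_k(f)|\mid\param]\le\meanfunc_{|f|}(\param)$, the wrong direction for transferring finiteness. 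I would close this gap using the non-negativity of the original weights as in the remark following Definition~\ref{def:proper}: there $\proper_k$ acts as a random non-negative measure, the unbiasedness~\eqref{item:proper-unbiased} is then a measure-level identity, and membership in $\Ls$ passes to $|f|$, whence $\approxdens(\meanfunc_{|f|}) = c_w\,\jointdens(|f|)<\infty$. Together with the unbiasedness identity and $\unitfun\in\Ls$, this establishes that $(\properweight_k,\Latent_k^{(I_k)})_{k\ge 1}$ is $\Ls$-proper.
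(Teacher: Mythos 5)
The paper records Proposition \ref{prop:randomise-proper} as a ``simple observation'' without a written proof, so I can only judge your argument on its own terms; the conditioning route you take is certainly the intended one. Your structural check, and the tower-property identity $\E[W_k f(\Param_k,\Latent_k^{(I_k)})\mid P_k]=\proper_k(f)$ (with both sides vanishing on $\{W_k=0\}$ because non-negative weights summing to zero are all zero), are correct, and you have put your finger on exactly the right delicate point: since the sub-sampled scheme carries a single non-negative weight, $|\proper_k'(f)|=\proper_k'(|f|)$, so its integrability condition is $\approxdens(\meanfunc_{|f|})<\infty$ --- a statement about $|f|$, not about $f$.

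The gap is in how you close this. You assert that ``membership in $\Ls$ passes to $|f|$'', but Definition \ref{def:proper} gives no such closure: $\Ls$ is defined pointwise as the set of $f$ satisfying \eqref{item:proper-unbiased} and \eqref{item:proper-general}, and the remark you invoke only states the reverse direction (that \eqref{item:proper-unbiased} for both $f$ and $|f|$ implies \eqref{item:proper-general} for $f$). It is genuinely not a consequence of the stated hypotheses: take $\paramspace$ trivial, $M_k\equiv 2$, $\properweight_k^{(1)}=\properweight_k^{(2)}=1$, $\Latent_k^{(1)}=(N,+)$ and $\Latent_k^{(2)}=(N,-)$ with $N$ a heavy-tailed positive integer, and $f(n,\pm)=\pm n$ with $\jointdens$ chosen so that $f\in L^1(\jointdens)$. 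Then $\proper_k(f)\equiv 0$, so $f$ satisfies \eqref{item:proper-unbiased} and \eqref{item:proper-general} trivially and $\unitfun\in\Ls$, yet $\proper_k(|f|)=2N$ can fail to be integrable; after sub-sampling, $|\proper_k'(f)|=2N$, so condition \eqref{item:proper-general} fails for the new scheme and $\E[\proper_k'(f)]$ is not even well defined. Your argument (and, arguably, the proposition as literally stated) therefore needs the additional hypothesis that $|f|\in\Ls$ whenever $f\in\Ls$ --- harmless in all of the paper's constructions, such as Proposition \ref{prop:augmented-is} and the particle filter schemes, where the unbiasedness holds as an identity of mean measures and extends to $|f|$ by monotone convergence --- but it should be assumed, not derived.
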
 
The sub-sampling estimator simplifies to
              \[
                  E_n(f) =
                  \frac{\sum_{k=1}^n W_k  f(\Param_k,\Latent_k^{(I_k)})
                    }{
                    \sum_{k=1}^n W_k }.
              \]
We conclude by a complementary statement about convex
combinations of multiple proper sampling schemes.
\begin{proposition}
    \label{prop:convex-proper} 
Suppose $(\proper_{k,j})_{k\ge 1}$ forms
a $\Ls$-proper weighting scheme for each $j\in\{1{:}N\}$,
then, for any constants
$\beta_{1},\ldots,\beta_N\ge 0$ with $\sum_{j=1}^N
\beta_j = 1$, the convex combinations
$\proper_k(f) \defeq \sum_{j=1}^N \beta_j \proper_{k,j}(f)$
form a $\Ls$-proper sampling scheme.
\end{proposition}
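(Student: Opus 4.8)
The plan is to verify the two defining conditions of an $\Ls$-proper weighting scheme (Definition \ref{def:proper}) directly for the convex combination $\proper_k(f) = \sum_{j=1}^N \beta_j \proper_{k,j}(f)$, exploiting the linearity of both the unbiasedness identity \eqref{item:proper-unbiased} and the integrability bound \eqref{item:proper-general}. The key observation is that each ingredient of the definition is stable under nonnegative linear combinations, so the argument reduces to elementary manipulations of conditional expectations.

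First I would check the structural prerequisites: since each $(\proper_{k,j})_{k\ge 1}$ is conditionally independent given $(\Param_k)_{k\ge 1}$ with a conditional law depending only on $\Param_k$, and the weights $\beta_j$ are deterministic constants, the combined object $\proper_k$ inherits both properties. (One should note that a joint realisation may require coupling the $N$ schemes on a common probability space, but this is harmless: one may take them mutually independent given $(\Param_k)_{k\ge 1}$, which preserves the required conditional independence across $k$.)

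Next I would verify condition \eqref{item:proper-unbiased}. Writing $\meanfunc_f(\param) = \E[\proper_k(f)\mid\Param_k=\param]$ and using linearity of conditional expectation,
\[
    \meanfunc_f(\param) = \sum_{j=1}^N \beta_j\,
    \E[\proper_{k,j}(f)\mid\Param_k=\param]
    = \sum_{j=1}^N \beta_j\, \meanfunc_{f,j}(\param),
\]
where each $\meanfunc_{f,j}$ satisfies $\approxdens(\meanfunc_{f,j}) = c_{w,j}\,\pi(f)$ for the constant $c_{w,j}>0$ associated with the $j$-th scheme. Hence $\approxdens(\meanfunc_f) = \big(\sum_{j=1}^N \beta_j c_{w,j}\big)\pi(f) = c_w\,\pi(f)$ with $c_w \defeq \sum_{j=1}^N \beta_j c_{w,j}$. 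Note $c_w>0$ provided at least one $\beta_j>0$, which holds since $\sum_j\beta_j=1$. For condition \eqref{item:proper-general}, the triangle inequality gives $|\proper_k(f)| \le \sum_{j=1}^N \beta_j |\proper_{k,j}(f)|$, so $m_f^{(1)}(\param) \le \sum_{j=1}^N \beta_j\, m_{f,j}^{(1)}(\param)$, and integrating against $\approxdens$ yields $\approxdens(m_f^{(1)}) \le \sum_{j=1}^N \beta_j\, \approxdens(m_{f,j}^{(1)}) < \infty$. Applying this with $f\equiv\unitfun$ confirms $\unitfun\in\Ls$, so $\proper_k$ indeed forms an $\Ls$-proper weighting scheme.

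There is no serious obstacle here; the only point requiring mild care is the handling of the scheme-dependent constants $c_{w,j}$. Definition \ref{def:proper} fixes a single $c_w$ per scheme, and a priori the $N$ schemes may carry different constants, so the combined scheme's constant is the $\beta$-weighted average rather than a common value. Since Definition \ref{def:proper} only requires \emph{existence} of such a constant (and the value need not be known, per C\ref{cond:unbiasedness}), this causes no difficulty, but it is worth stating explicitly to make the bookkeeping transparent.
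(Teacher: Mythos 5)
Your proof is correct and is exactly the linearity-plus-triangle-inequality argument that the paper has in mind (the paper records this proposition without proof, treating it as immediate from Definition \ref{def:proper}). Your explicit handling of the per-scheme constants $c_{w,j}$, replaced by the weighted average $\sum_j \beta_j c_{w,j}>0$, and your remark on realising the $N$ schemes jointly while preserving conditional independence given $(\Param_k)_{k\ge 1}$, are the only points requiring any care, and you address both correctly.
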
 


\section{Asymptotic variance and a central limit theorem}
\label{sec:asvar-clt} 

The asymptotic variance is a common efficiency
measure for Markov chain estimators, because it
coincides with the limiting variance of a central limit theorem (CLT),
under general conditions \citep{kipnis-varadhan,maxwell-woodroofe}.

\begin{definition2}
\label{def:asvar} 
Suppose the Markov chain $(\Param_k)_{k\ge 1}$ on $\paramspace$ has
transition probability $P$ which is Harris ergodic with respect to
invariant probability $\approxdens$. For
$f\in L^2(\approxdens)$, the asymptotic variance of $f$ with respect to $P$ is
\[
    \Var(f,P)  \defeq \lim_{n\to\infty} \E\bigg(
    \frac{1}{\sqrt{n}}\sum_{k=1}^n \big[f(\Param_k^{(s)}) - \approxdens(f)\big]\bigg)^2,
\]
whenever the limit exists in $[0,\infty]$,
where $(\Param_k^{(s)})_{k\ge 1}$ stands for the \emph{stationary Markov chain}
with transition
probability $P$, that is, with $\Param_1^{(s)}\sim \approxdens$.
\end{definition2} 

In what follows, we denote by $\bar{f}(\param,\latent) =
f(\param,\latent) - \jointdens(f)$
the centred version of any $f\in L^1(\jointdens)$, and
recall that if $f\in\Ls$, then $\bar{f}\in\Ls$. We also denote
$m_f^{(2)}(\param) \defeq \E[|\proper_k(f)|^2\mid\Param_k=\param]$
for any $f\in\Ls$. 
The proof of the following CLT is given in Appendix
\ref{app:clt}.

\begin{theorem}
\label{thm:proper-clt-rev} 
      Suppose that the
      conditions of Theorem \ref{thm:proper-consistency} are
      satisfied, and $(\Param_k)_{k\ge 1}$ is aperiodic. Let $f\in
      \Ls\cap L^2(\jointdens)$ and
      denote $\bar{f}(\param,\latent) \defeq f(\param,\latent) - \jointdens(f)$.
      If $\approxdens(m_{\bar{f}}^{(2)})<\infty$ and
      either of the following hold:
      \begin{enumerate}[label=(\roman*)]
          \item
            \label{item:proper-kv}
            $(\Param_k)_{k\ge 1}$ is reversible and
            $\Var(\meanfunc_{\bar{f}}, P)<\infty$, or
          \item
            \label{item:proper-mw}
            $\sum_{n=1}^\infty n^{-3/2}
            \big\{\margdens\big(
              \big[\sum_{k=0}^{n-1} P^{k} \meanfunc_{\bar{f}}
              \big]^2\big)\big\}^{1/2}<\infty$,
      \end{enumerate}
      then, the estimator $E_n(f)$ defined in
      \eqref{eq:proper-estimator} satisfies a CLT:
      \begin{equation}
              \sqrt{n}[E_n(f)-\jointdens(f)] \xrightarrow{n\to\infty}
              N\big(0,\sigma_f^2\big),\quad\text{in distribution, where}\quad
              \sigma_f^2\defeq \frac{\Var(\meanfunc_{\bar{f}}, P) +
                \approxdens(v)}{c_w^2},
              \label{eq:is-asvar}
       \end{equation}
      and where
      $v(\param) \defeq \Var\big(\proper_k(\bar{f})\bigmid
      \Param_k=\param\big)$.
\end{theorem} 

\begin{remark2} 
In case of reversible chains, the condition in Theorem
\ref{thm:proper-clt-rev} \ref{item:proper-kv} is essentially
optimal, and the CLT relies on a result due to Kipnis and Varadhan
\cite{kipnis-varadhan}. The condition always holds when $(\Param_k)_{k\ge
1}$ is geometrically ergodic, for instance $(\Param_k)_{k\ge 1}$ is a
random-walk Metropolis algorithm and $\approxdens$ is light-tailed
\cite{jarner-hansen,roberts-tweedie}. In case $(\Param_k)_{k\ge 1}$ is
sub-geometric, such as polynomial, extra conditions are required; see
for instance \cite{jarner-roberts-heavytailed}. The condition
\ref{item:proper-mw} applies for non-reversible chains, and
relies on a result due to Maxwell and Woodroofe
\cite{maxwell-woodroofe}. If there exists $g\in L^2(\pi_a)$ 
which solves the Poisson equation $g-Pg = \meanfunc_{\bar{f}}$,
then \ref{item:proper-mw} holds,
but this is not necessary.
See also the review on Markov chain CLTs
by Jones \cite{jones}.
\end{remark2}

When Theorem \ref{thm:proper-clt-rev} holds, we recall how a 
consistent confidence interval may be constructed.

\begin{corollary} 
    \label{cor:is-ci} 
Suppose that $\hat{a}_n$ is an estimator of the
integrated autocovariance of the sequence $(\xi_1(f),\ldots,\xi_n(f))$. 
If the conditions of Theorem \ref{thm:proper-clt-rev} hold
and $\hat{a}_n$ is consistent (see below), 
then for any $z_q>0$, 
\[
    \P\Big(\pi(f) \in 
    \Big[\E_n(f)  \pm  z_q
    \Big(\frac{\hat{s}_n}{n}\Big)^{1/2} \Big]\Big)
    \xrightarrow{n\to\infty}
    1 - 2 \Phi(z_q),
    \qquad\text{where}\qquad 
    \hat{s}_n \defeq \frac{\hat{a}_n}{\big(n^{-1} \sum_{k=1}^n
      \xi_k(1)\big)^2} 
\]
where $\Phi$ is the standard Gaussian distribution function.
\end{corollary} 

The proof of Corollary \ref{cor:is-ci} is given in Appendix \ref{app:clt}.

By consistency of $\hat{a}_n$ we mean that it converges
to $\gamma_0 + 2\sum_{k\ge 1} \gamma_k$ in probability, which is
assumed to exist and be finite, where $(\gamma_k)_{k\ge 1}$ is the
stationary lag-$k$ autocovariance of the sequence $(\xi_k(f))_{k\ge
1}$. We refer the reader to consult, for instance, \citep{flegal-jones}
for details about consistent integrated autocovariance estimators.
  
Note that the latter term $\approxdens(v)$ in \eqref{eq:is-asvar}
contains the contribution of the `noise' in the IS estimates.
If the estimators $\proper_k(f)$ are made
increasingly accurate, in the sense that $\approxdens(v)$ becomes
negligible, the limiting case corresponds to an IS
corrected approximate MCMC and calculating averages over conditional
expectations $\meanfunc_{\bar{f}}(\param)$.
We conclude by relating the asymptotic variance with a straightforward
estimator.
\begin{theorem}
    \label{thm:importance-var} 
Suppose $f\in\Ls\cap L^2(\jointdens)$ and
$\approxdens(v)<\infty$
where $v$ is defined in Theorem \ref{thm:proper-clt-rev},
and also $\approxdens(m_{\unitfun}^{(2)})<\infty$.
Then, the estimator
\[
    \hat{v}_n \defeq \frac{\sum_{k=1}^n \big(\proper_k(f) -
    \proper_k(\unitfun)E_n(f)\big)^2 }{
    \big(\sum_{j=1}^n \proper_j(\unitfun) \big)^2}
\]
satisfies $n \hat{v}_n \to \approxdens(v+\meanfunc_{\bar{f}}^2)/c_w^2$ almost surely as $n\to\infty$.
\end{theorem}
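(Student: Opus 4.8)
The plan is to reduce everything to the strong law of large numbers (Birkhoff ergodic theorem) for the augmented Harris ergodic chain $(\Param_k,P_k)_{k\ge 1}$ with invariant law $\cpi(\ud\param\times\ud s)=\approxdens(\ud\param)Q(\param,\ud s)$, exactly as in the proof of Lemma \ref{lem:super-general} (via Lemma \ref{lem:aug-prop} in Appendix \ref{app:aug}), and then to track the effect of replacing $E_n(f)$ by its almost sure limit $\jointdens(f)$. First I would record the algebraic identity that drives the computation: by linearity of $\proper_k$ in its argument, $\proper_k(\bar f)=\proper_k(f)-\jointdens(f)\,\proper_k(\unitfun)$, so with $\epsilon_n\defeq E_n(f)-\jointdens(f)$ one has
\[
\proper_k(f)-\proper_k(\unitfun)E_n(f)=\proper_k(\bar f)-\epsilon_n\,\proper_k(\unitfun).
\]
Dividing the numerator and denominator of $v_n$ by $n$ and expanding the square yields
\[
n v_n=\frac{A_n-2\epsilon_n B_n+\epsilon_n^2 C_n}{D_n^2},
\]
where $A_n\defeq n^{-1}\sum_{k=1}^n\proper_k(\bar f)^2$, $B_n\defeq n^{-1}\sum_{k=1}^n\proper_k(\bar f)\proper_k(\unitfun)$, $C_n\defeq n^{-1}\sum_{k=1}^n\proper_k(\unitfun)^2$ and $D_n\defeq n^{-1}\sum_{k=1}^n\proper_k(\unitfun)$.

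Next I would identify the limits of these averages. Since $\unitfun\in\Ls$, Definition \ref{def:proper} gives $\approxdens(m_{\unitfun}^{(1)})<\infty$ and $\approxdens(\meanfunc_\unitfun)=c_w$, so Birkhoff's theorem yields $D_n\to c_w$ almost surely. By hypothesis $\approxdens(m_{\unitfun}^{(2)})<\infty$, and as $\E[\proper_k(\unitfun)^2\mid\Param_k=\param]=m_{\unitfun}^{(2)}(\param)$, it follows that $C_n\to\approxdens(m_{\unitfun}^{(2)})<\infty$ almost surely. For $A_n$, note that $\E[\proper_k(\bar f)^2\mid\Param_k=\param]=v(\param)+\meanfunc_{\bar f}(\param)^2=m_{\bar f}^{(2)}(\param)$; applying the ergodic theorem to this nonnegative function (valid by truncation and monotone convergence even when its integral is infinite) gives $A_n\to a\defeq\approxdens(v+\meanfunc_{\bar f}^2)\in[0,\infty]$ almost surely. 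Finally, Theorem \ref{thm:proper-consistency} gives $E_n(f)\to\jointdens(f)$ almost surely, that is, $\epsilon_n\to0$.

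The main obstacle is the cross term $B_n$: the hypotheses bound $\approxdens(v)$ and $\approxdens(m_{\unitfun}^{(2)})$ but do not force $\approxdens(\meanfunc_{\bar f}^2)<\infty$, so $a$, and hence potentially $B_n$, may be infinite, which blocks a naive term-by-term passage to the limit. I would circumvent this with the termwise bound $|\proper_k(\bar f)\proper_k(\unitfun)|\le\tfrac12\big(\proper_k(\bar f)^2+\proper_k(\unitfun)^2\big)$, giving $|B_n|\le\tfrac12(A_n+C_n)$, and split into two cases. If $a<\infty$, then conditional and $L^2(\approxdens)$ Cauchy--Schwarz give $\approxdens\big(\E[|\proper_k(\bar f)\proper_k(\unitfun)|\mid\Param_k=\uarg]\big)\le\approxdens\big(\sqrt{m_{\bar f}^{(2)}m_{\unitfun}^{(2)}}\big)\le\sqrt{\approxdens(m_{\bar f}^{(2)})\,\approxdens(m_{\unitfun}^{(2)})}<\infty$, so $B_n$ converges to a finite limit; then $\epsilon_n B_n\to0$ and $\epsilon_n^2C_n\to0$, whence $nv_n\to a/c_w^2$. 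If $a=\infty$, then for all $n$ large enough that $|\epsilon_n|\le\tfrac12$ the numerator satisfies $A_n-2\epsilon_n B_n+\epsilon_n^2C_n\ge A_n-\tfrac12(A_n+C_n)=\tfrac12 A_n-\tfrac12 C_n\to\infty$, so $nv_n\to\infty=a/c_w^2$ because $D_n^2\to c_w^2\in(0,\infty)$. In both cases $nv_n\to\approxdens(v+\meanfunc_{\bar f}^2)/c_w^2$ almost surely, which is the claim.
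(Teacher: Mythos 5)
Your proof is correct and follows essentially the same route as the paper's: expand the numerator using $\proper_k(\bar f)=\proper_k(f)-\jointdens(f)\,\proper_k(\unitfun)$, apply the ergodic theorem for the augmented chain to each of the three averages, and use $E_n(f)\to\jointdens(f)$ to kill the remainder terms. You are in fact more careful than the paper, whose proof tacitly treats the cross average $n^{-1}\sum_k\proper_k(\unitfun)\proper_k(\bar f)$ as converging to a finite limit; since the stated hypotheses do not force $\approxdens(\meanfunc_{\bar f}^2)<\infty$, your Cauchy--Schwarz argument in the finite case and the termwise bound $|B_n|\le\tfrac12(A_n+C_n)$ in the infinite case close a small gap that the paper's own proof glosses over.
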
 
Proof of Theorem \ref{thm:importance-var} is given in Appendix
\ref{app:clt}.

\begin{remark2}
    \label{rem:variance-decomposition} 
When $P$ corresponds to i.i.d.~sampling from $\pi_a$, 
the estimator $n\hat{v}_n$ in Theorem 
\ref{thm:importance-var} provides a
consistent estimate for the CLT variance $\sigma_f^2$. 
In most practical cases, $\Var(\meanfunc_{\bar{f}}, P)\ge
\pi_a(\meanfunc_{\bar{f}}^2)$ (which is always true when $P$ is
a positive operator \citep[cf.][]{andrieu-vihola-pseudo}), and then $n
\hat{v}_n$ provides an empirical lower bound of the asymptotic variance. 
Furthermore, it may be useful to inspect the `decomposition' of the
asymptotic variance into $n\hat{v}_n$ and the residual $\hat{s}_n
- n\hat{v}_n$. The former may be regarded as the `independent IS variance' 
and the residual may be interpreted as `excess marginal MCMC variance,' 
as it converges to
$2 c_w^{-2}\pi_a(\meanfunc_{\bar{f}}^2) \sum_{k\ge 1} \rho_k$,
where $\rho_k$ are the stationary autocorrelations of $(\xi_k(f))_{k\ge 1}$.
If the residual term is small relative to
$\hat{s}_n$, this suggests that either $\meanfunc_{\bar{f}}^2 \ll
v$, and/or that $\rho_k$ are small.
These inspections could provide insight to
choosing the parameters of the underlying marginal MCMC and the proper
weighting.
\end{remark2}


\section{Jump chain estimators}
\label{sec:block} 

Many MCMC algorithms such as the Metropolis-Hastings
include an accept-reject mechanism, which results in
blocks of repeated values
$\Param_k = \ldots = \Param_{k+b}$. 
In the context of IS type correction, and when the computational cost
of each estimate
$\proper_k$ is high, it may be desirable to construct only one estimator
per each \emph{accepted} state.
To formalise such an algorithm we consider the `jump chain'
representation of the approximate marginal chain 
\cite[cf.][]{douc-robert,doucet-pitt-deligiannidis-kohn,deligiannidis-lee}.
\begin{definition2}[Jump chain]
    \label{def:jump} 
          Suppose that $(\Param_k)_{k\ge 1}$ is Harris ergodic with
          respect to $\pi_a$. 
          The corresponding jump chain $(\tilde{\Param}_k)_{k\ge 1}$ with 
          holding times $(N_k)_{k\ge 1}$ is defined 
          as follows:
          \[
              \tilde{\Param}_k\defeq \Param_{\bar{N}_{k-1}+1}
              \qquad\text{and}\qquad
              N_k \defeq \inf\big\{j\ge 1\given \Param_{\bar{N}_{k-1}+j+1}\neq
                \tilde{\Param}_{k}\big\},
          \]
          where $\bar{N}_k \defeq \sum_{j=1}^k N_j$,
          and with $\bar{N}_0\equiv 0$.
      \end{definition2} 
\begin{remark2} 
          If $(\Param_k)_{k\ge 1}$ corresponds to a Metropolis-Hastings
              chain, with non-diagonal proposal distributions $q$ (that is,
              $q(\param,\{\param\})=0$ for every $\param\in\paramspace$), then  the jump chain
              $(\tilde{\Param}_k)$ consists of the accepted states, and $N_k-1$
              is the number of rejections occurred at state
              $(\tilde{\Param}_k)$.
      \end{remark2} 

Hereafter, we denote by $\alpha(\param) \defeq \P(\Param_{k+1}\neq
\Param_{k}\mid
\Param_{k}=\param)$ the overall acceptance probability at $\param$.
We consider next the practically important `jump IS' estimator,
involving a proper weighting for each accepted state.

\begin{assumption}
    \label{a:block-natural-general} 
Suppose that Assumption \ref{a:mcmc-is} holds, and let
$(\tilde{\Param}_k,N_k)_{k\ge 1}$ denote the corresponding
jump chain (Definition \ref{def:jump}).
Let $(\proper_k)_{k\ge 1}$
be a $\Ls$-proper weighting scheme,
where the variables $(M_k,\properweight_k^{(1:M_k)},\Latent_k^{(1:M_k)})$
in the scheme are now
allowed to depend on both $\tilde{\Param}_k$ and $N_k$, and
the conditions \ref{item:proper-unbiased} and
\ref{item:proper-general}. in Definition \ref{def:proper}
are replaced with the following:
\begin{enumerate}[label=(\roman*)]
    \item
$\E[\proper_k(f)\mid \Param_k=\param,N_k=n] = \meanfunc_f(\param)$
for all $n\in \N$ and $\approxdens(\meanfunc_f) = c_w \pi(f)$, 
\label{item:proper-unbiased-block}
\item $\approxdens(\bar{m}^{(1)})<\infty$ where
            $\bar{m}^{(1)}(\param)
              \defeq \sup_{n\in\N} \E\big[ |\proper_k(f)| \bigmid
                  \Param_k=\param, N_k=n\big]$.
  \label{item:proper-block}
\end{enumerate}
\end{assumption} 

\begin{theorem}
    \label{thm:block-natural-general} 
Suppose Assumption \ref{a:block-natural-general} holds,
then,
\begin{equation}
    E_n(f) \defeq \frac{
      \sum_{k=1}^n N_k \proper_k(f)
      }{
      \sum_{j=1}^n
     N_j \proper_j(\unitfun)}
    \xrightarrow{n\to\infty} \jointdens(f)\qquad\text{almost surely}.
    \label{eq:block-natural}
\end{equation}
\end{theorem}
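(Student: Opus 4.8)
The plan is to recognise both the numerator and denominator of $E_n(f)$ as ergodic averages over the (augmented) jump chain and then apply a ratio ergodic theorem, mirroring the proof of Lemma~\ref{lem:super-general}. First I would recall the standard structure of the jump chain: since $(\Param_k)_{k\ge 1}$ is Harris ergodic with respect to $\approxdens$, the jump chain $(\tilde{\Param}_k)_{k\ge 1}$ is itself Harris ergodic, with invariant distribution $\bar{\approxdens}(\ud\param)\defeq \alpha(\param)\approxdens(\ud\param)/\approxdens(\alpha)$, and conditionally on $\tilde{\Param}_k=\param$ the holding time $N_k$ is geometric with mean $\E[N_k\mid\tilde{\Param}_k=\param]=1/\alpha(\param)$ and conditionally independent of the past (cf.\ the jump-chain references in Definition~\ref{def:jump}).

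Next, using Assumption~\ref{a:block-natural-general}, I would augment $(\tilde{\Param}_k)_{k\ge 1}$ with the conditionally independent variables $(N_k,P_k)$ to form a Harris ergodic chain via Lemma~\ref{lem:aug-prop}, so that for any integrable functional $h$ of $(\tilde{\Param}_k,N_k,P_k)$ the average $n^{-1}\sum_{k=1}^n h$ converges almost surely to its expectation under the invariant law. Applied to $h=N_k\proper_k(f)$ and $h=N_k\proper_k(\unitfun)$, this yields almost sure convergence of both sums, and hence of their ratio to the ratio of the two limits.

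The key computation is the evaluation of these two limits, and it hinges on condition~\eqref{item:proper-unbiased-block}, which makes $\E[\proper_k(f)\mid\tilde{\Param}_k=\param,N_k=n]=\meanfunc_f(\param)$ independent of $n$. By the tower property this factorises as
\[
\E[N_k\proper_k(f)\mid\tilde{\Param}_k=\param]
= \meanfunc_f(\param)\,\E[N_k\mid\tilde{\Param}_k=\param]
= \frac{\meanfunc_f(\param)}{\alpha(\param)}.
\]
Integrating against $\bar{\approxdens}$, the factor $\alpha(\param)$ cancels, giving $\E_{\bar{\approxdens}}[N_k\proper_k(f)]=\approxdens(\meanfunc_f)/\approxdens(\alpha)=c_w\jointdens(f)/\approxdens(\alpha)$; the denominator limit is the same with $f\equiv\unitfun$, namely $c_w/\approxdens(\alpha)$. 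Taking the ratio, the common factor $c_w/\approxdens(\alpha)$ cancels, leaving the limit $\jointdens(f)$.

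The main obstacle is the integrability verification required before invoking the ergodic theorem, which is precisely why condition~\eqref{item:proper-block} is stated with the uniform-in-$n$ bound $\bar{m}^{(1)}(\param)=\sup_{n}\E[|\proper_k(f)|\mid\tilde{\Param}_k=\param,N_k=n]$: the same tower-property argument gives $\E[N_k|\proper_k(f)|\mid\tilde{\Param}_k=\param]\le\bar{m}^{(1)}(\param)/\alpha(\param)$, so after integration the acceptance factor again cancels and $\E_{\bar{\approxdens}}[N_k|\proper_k(f)|]\le\approxdens(\bar{m}^{(1)})/\approxdens(\alpha)<\infty$. A subtlety to handle carefully is that $N_k$ and $\proper_k$ are genuinely dependent (the weighting is allowed to depend on $N_k$), so every factorisation above must be routed through the conditional expectation given $(\tilde{\Param}_k,N_k)$ rather than assuming independence of $N_k$ and $\proper_k$.
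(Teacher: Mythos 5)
Your proposal is correct and follows essentially the same route as the paper: the paper also reduces the claim to Lemma~\ref{lem:super-general} applied to the jump chain, using its Harris ergodicity with invariant probability $\tilde\pi_a(\param)\propto\approxdens(\param)\alpha(\param)$ and the conditionally geometric holding times with $\E[N_k\mid\tilde{\Param}_k=\param]=1/\alpha(\param)$ (Proposition~\ref{prop:jump-properties}), so that the factor $\alpha(\param)$ cancels exactly as in your computation. Your explicit handling of the integrability via $\bar{m}^{(1)}$ and of the dependence between $N_k$ and $\proper_k$ through conditioning on $(\tilde{\Param}_k,N_k)$ fills in details the paper leaves implicit, but is the same argument.
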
 

The proof follows from 
Lemma \ref{lem:super-general}
because $(\tilde{\Param}_k)$ is Harris ergodic with invariant
probability $\tilde\pi_a(\param) \propto \approxdens(\param)\alpha(\param)$;
see Proposition \ref{prop:jump-properties} in Appendix \ref{app:jump}.
Furthermore, 
the holding times $N_k\ge 0$ are, conditional on $(\tilde{\Param}_k)$,
independent  geometric random variables with parameter
$\alpha(\tilde{\Param}_k)$ (Proposition \ref{prop:jump-properties}), and
therefore $\E[N_k\mid \tilde{\Param}_k=\param] =
1/\alpha(\param)$.

\begin{remark2} 
    \label{rem:jump} 
Regarding Assumption \ref{a:block-natural-general}:
\begin{enumerate}[label=(\roman*)]
\item Condition \ref{item:proper-block} in Assumption
  \ref{a:block-natural-general}
  is practically convenient, because $\xi_k$ are usually chosen
  either as independent of $N_k$, or increasingly accurate in $N_k$
  (often taking $M_k$ proportional to $N_k$);
  see the discussion below. However, \ref{item:proper-block} is not optimal:
  it is not hard to find examples where
  the estimator is
  strongly consistent, even though $\bar{m}^{(1)}(\param)=\infty$ for some
  $\param\in\paramspace$.
\item 
  \label{item:jump-batch} 
  In case each $\proper_k$ is constructed as a mean of independent
  $(\proper_{k,1},\ldots,\proper_{k,N_k})$ (cf.~Proposition
  \ref{prop:convex-proper}), the jump chain estimator coincides with
  the simple estimator discussed in Section \ref{sec:asvar-clt} (at
  jump times). However, the jump chain estimator offers more 
  flexibility, which may allow for variance reduction, for instance by 
  using a single $m N_k$ particle
  filter (cf.~Appendix \ref{sec:ssm}) instead of 
  an average of $N_k$ independent $m$-particle filters, or by 
  stratification or control variates. 
\item Even though we believe that the estimators of the form
\eqref{eq:block-natural}
are often appropriate, we note that in some cases 
Rao-Blackwellised lower-variance estimators of $1/\alpha(\tilde{\Param}_j)$
may be used instead of $N_k$, as suggested in \cite{douc-robert}.
\end{enumerate}
\end{remark2} 

Let us finally consider a central limit theorem corresponding
to the estimator in Theorem
\ref{thm:block-natural-general}, 
whose proof is given in Appendix \ref{app:jump}.
\begin{theorem}
    \label{thm:block-clt} 
   Suppose Assumption \ref{a:block-natural-general}
   holds, $(\tilde{\Param}_k)_{k\ge 1}$ is aperiodic, $f\in\Ls\cap
   L^2(\jointdens)$,
   \begin{equation}
       \approxdens\big(\alpha \tilde{m}^{(2)}\big)<\infty,
       \qquad\text{where}\qquad
                  \tilde{m}^{(2)}(\param)\defeq
                  \E\big[N_k^2 |\proper_k(\bar{f})|^2\bigmid
                  \tilde{\Param}_k=\param\big],
                  \label{eq:clt-tight}
  \end{equation}
  and one of the following holds:
  \begin{enumerate}[label=(\roman*)]
      \item
        \label{item:block-clt-rev}
        $(\Param_k)_{k\ge 1}$ is reversible and $\Var(\meanfunc_{\bar{f}} ,
        P)<\infty$.
      \item
        \label{item:block-clt-poisson}
        There exists $g\in L^2(\approxdens)$ satisfying the Poisson
        equation $g- Pg =
        \meanfunc_{\bar{f}}$.
  \end{enumerate}
   Then, the estimator $E_n(f)$ in \eqref{eq:block-natural} satisfies
   \[
       \sqrt{n}\big[E_n(f)-\jointdens(f)\big] \xrightarrow{n\to\infty} N(0,\sigma^2)
       \quad\text{in distribution, where}\quad
       \sigma^2 =
       \frac{\approxdens(\alpha)}{c_w^2}
    \Big[ \Var\big(\meanfunc_{\bar{f}}, P\big) +
    \approxdens(\alpha \tilde{v})
    \Big],
   \]
   and where $\tilde{v}(\param) \defeq \E\big[ N_k^2
   \Var\big(\proper_k(\bar{f}) \bigmid
   \tilde{\Param}_k=\param, N_k\big)\bigmid \tilde{\Param}_k=\param\big]$.
      \end{theorem} 
Let us briefly discuss the conditions and implications of
Theorem \ref{thm:block-clt} under certain specific cases.
When the acceptance probability
      is bounded from below, $\inf_{\param} \alpha(\param)>0$,
      using a proper weighting $\proper_k$ independent of
      $N_k$ is `safe', because
      \[
          \tilde{v}(\param)\le \tilde{m}^{(2)}(\param) \le
  \frac{2-\alpha(\param)}{\alpha^2(\param)} b(\param);\quad
      b(\param)\defeq \sup_{n\ge 1} \E\big[|\proper_k (\bar{f})|^2\bigmid
      \tilde{\Param}_k=\param,N_k=n\big],
  \]
  and so $\approxdens(b)<\infty$ guarantees 
  \eqref{eq:clt-tight}.
    For example, if $(\Param_k)_{k\ge 1}$ is $L^2$-geometrically ergodic,
    then the acceptance probability is (essentially) bounded away from
    zero \cite{roberts-tweedie}, and
    $g\defeq \sum_{k\ge 0} P^k \meanfunc_{\bar{f}} \in L^2(\approxdens)$
    satisfies $g - Pg = \meanfunc_{\bar{f}}$, so that
    \ref{item:block-clt-poisson} is satisfied.

  When $\proper_k$ corresponds to an average of
      i.i.d.~$\proper_{k,1},\ldots$, $\proper_{k,N_k}$
      (cf.~Proposition \ref{prop:convex-proper}) which do not depend
      on $N_k$,
      \[
          \Var\big(\proper_k(\bar{f})  \bigmid
          \tilde{\Param}_k=\param,N_k\big)
          =\hat{v}(\param)/N_k;\quad 
          \hat{v}(\param) \defeq \Var\big(\proper_{k,1}(\bar{f})\bigmid
      \tilde{\Param}_k=\param).
      \]
      Then,
      $\approxdens(\alpha \tilde{v}) = \approxdens(\hat{v})$,
      which leads to an asymptotic variance that coincides with
      simple IS correction (cf.~Theorem \ref{thm:proper-clt-rev}).

\begin{remark2} 
Our condition in Theorem \ref{thm:block-clt}
\ref{item:block-clt-poisson}, requires the solution to the Poisson
equation, and is therefore more stringent than the Maxwell-Woodroofe
condition in Theorem \ref{thm:proper-clt-rev} \ref{item:proper-mw}. We
believe that the result holds more generally, but this would require
knowing properties of the jump chain $(\tilde{\Param}_k)_{k\ge 1}$,
which may be unavailable. 
Theorem \ref{thm:proper-clt-rev} \ref{item:proper-mw} may be applied
instead, if such properties are available.
\end{remark2} 


\section{Pseudo-marginal approximate chain}
\label{sec:pseudo} 

We next discuss how our limiting results still apply, in case the
approximate chain is a pseudo-marginal MCMC, as discussed in Section
\ref{sec:algorithmic-pseudo}. Let us first formalise 
a pseudo-marginal Markov chain $(\boldsymbol{\Theta}^\circ_k)_{k\ge 1}$, 
where $\boldsymbol{\Theta}^\circ_k = (\Param_k,\Phi_k)$ 
takes values on $\paramspace\times\Sp_\Phi$.
    Let $\boldsymbol{\Theta}^\circ_0 \in \paramspace\times \Sp_\Phi$ such that
    $U(\Phi_0)>0$, and
    for $k\ge 1$, iterate
    \begin{enumerate}
    \item[PM1] Generate $\tilde{\Param}_k \sim q(\Param_{k-1}, \uarg)$
      and $\tilde{\Phi}_k \sim Q_a(\tilde{\Param}_k, \uarg)$.
    \item[PM2] With probability
      $
          \min\big\{1,\frac{U(\tilde{\Phi}_k)
              q(\tilde{\Param}_k,\Param_{k-1})}{U(\Phi_{k-1})
              q(\Param_{k-1},\tilde{\Param}_k)}
            \big\},
      $
      accept and set
      $(\Param_k,\Phi_k)=(\tilde{\Param}_k,\tilde{\Phi}_k)$;
      otherwise reject and set $(\Param_k,\Phi_k) =
      (\Param_{k-1},\Phi_{k-1})$.
    \end{enumerate}
Above, $Q_a(\param,\uarg)$ defines a
(regular conditional)
distribution on (a measurable space) $\Sp_\Phi$, and
$U:\Sp_\Phi\to\R_+$ is a (measurable) function.
Under the following condition,
the pseudo-marginal Markov chain $(\boldsymbol{\Theta}^\circ_k)_{k\ge 1}$
is reversible with respect to the probability measure
$
    \approxdens^{\circ}(\ud \param, \ud \phi)
    \defeq \ud \param Q_a(\param, \ud \phi)U(\phi)/c_a,
$
which admits the marginal $\approxdens(\param)$ \cite[e.g.][]{andrieu-vihola-pseudo}:
\begin{assumption}
    \label{a:approx-pseudo} 
There exists a constant $c_a>0$ such that
for each $\param$, the random variable
$\Phi_\param\sim Q_a(\param, \uarg)$ satisfies
$\E[U(\Phi_\param)] = c_a \approxdens(\param)$.
\end{assumption} 
\noindent In addition, $(\boldsymbol{\Theta}^\circ_k)_{k\ge 1}$ is easily
shown to be Harris ergodic under minimal conditions.

Hereafter, when we refer to the results in Sections
\ref{sec:simple}--\ref{sec:block}, we take the chain $(\boldsymbol{\Theta}^\circ_k)_{k\ge
  1}$ as the approximate chain (in place of $(\Theta_k)_{k\ge 1}$), 
with approximate marginal distribution 
$\approxdens^{\circ}(\ud \param, \ud \phi)$ (in place of
$\pi_a(\theta)\ud \theta$).
The following abstract minimal condition ensures
consistency of an IS type estimator. We discuss practically relevant
sufficient conditions later in Proposition \ref{prop:sufficient-pseudo-is}.

\begin{assumption}
    \label{a:pseudo-proper} 
Suppose Assumption \ref{a:mcmc-is} holds, 
$(\boldsymbol{\Theta}^\circ_k)_{k\ge 1}$ is Harris ergodic, 
$c_m>0$ is a constant,
and let $(P_k)_{k\ge 1}$
be conditionally independent given $(\boldsymbol{\Theta}^\circ_k)_{k\ge 1}$, such that the
distribution of each $P_k= (M_k,\properweightalt_k^{(1:M_k)},\Latent_k^{(1:M_k)})$
depends only on  $\boldsymbol{\Theta}^\circ_k$, 
where $M_k\in\N$,
$\properweightalt_k^{(i)}\in\R$ and $\Latent_k^{(i)}\in\latentspace$.
Define for any $f\in L^1(\jointdens)$,
$\properalt_k(f) \defeq \sum_{i=1}^{M_k} \properweightalt_k^{(i)}
    f(\Param_k,\Latent_k^{(i)})$,
and let
$\Ls\subset L^1(\jointdens)$ stand for all the functions for which
\begin{enumerate}[label=(\roman*)]
    \item
$
    \textstyle\iint Q_a(\theta, \ud \phi) \charfun{U(\phi)>0}
    \E[\properalt_k(f)\mid
\Param_k=\param, \Phi_k=\phi]
    \ud
\theta = c_m \jointdens(f),
$
and
\label{item:proper-pseudo-unbiased}
\item
$
    \textstyle\iint Q_a(\theta, \ud \phi) \charfun{U(\phi)>0}
    \E\big[ |\properalt_k(f)| \bigmid
                  \Param_k=\param,\Phi_k=\phi\big]
    \ud
\theta <\infty.
$
  \label{item:proper-pseudo-general}
\end{enumerate}
\end{assumption}
\begin{proposition}
    \label{prop:pseudo-approx-is} 
Suppose Assumption \ref{a:approx-pseudo} and \ref{a:pseudo-proper} hold, and
$\unitfun\in \Ls$.
Then, Theorem \ref{thm:proper-consistency} holds with
\[
    \proper_k(f) \defeq \sum_{i=1}^{M_k} \properweight_k^{(i)}
    f(\Param_k,\Latent_k^{(i)}) \qquad\text{where}\qquad
    \properweight_k^{(i)} = \frac{\properweightalt_k^{(i)}}{U(\Phi_k)}.
\]
\end{proposition}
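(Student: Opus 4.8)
The plan is to reduce the statement to Lemma \ref{lem:super-general}, applied not to $(\Param_k)$ but to the augmented pseudo-marginal chain $\Psi_k \defeq (\Param_k,\Phi_k)$, which is Harris ergodic with invariant probability $\approxdens^{\circ}(\ud\param,\ud\phi) = \ud\param\, Q_a(\param,\ud\phi) U(\phi)/c_a$ by Assumptions \ref{a:approx-pseudo} and \ref{a:pseudo-proper}. Because $\properweight_k^{(i)} = \properweightalt_k^{(i)}/U(\Phi_k)$, the numerator and denominator of $E_n(f)$ are running sums of $A_k \defeq \proper_k(f) = \properalt_k(f)/U(\Phi_k)$ and $B_k \defeq \proper_k(\unitfun) = \properalt_k(\unitfun)/U(\Phi_k)$. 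Since $(P_k)_{k\ge 1}$ is, by Assumption \ref{a:pseudo-proper}, conditionally independent given $(\Psi_k)_{k\ge 1}$ with conditional law depending only on $\Psi_k$, the pair $S_k \defeq (A_k,B_k)$ inherits this structure, so that the augmentation engine behind Lemma \ref{lem:super-general} (namely Lemma \ref{lem:aug-prop}) applies verbatim once its unbiasedness and integrability conditions are verified against $\approxdens^{\circ}$.

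The crux is the unbiasedness computation, where I would exploit the cancellation of the weight's $U(\Phi_k)^{-1}$ against the factor $U(\phi)$ in $\approxdens^{\circ}$. Writing $f_A(\param,\phi) \defeq \E[A_k\mid\Psi_k=(\param,\phi)] = U(\phi)^{-1}\E[\properalt_k(f)\mid\Param_k=\param,\Phi_k=\phi]$, which is well defined exactly on $\{U(\phi)>0\}$, the support of $\approxdens^{\circ}$, integration gives
\[
    \approxdens^{\circ}(f_A) = \frac{1}{c_a}\iint \ud\param\, Q_a(\param,\ud\phi)\charfun{U(\phi)>0}\,\E[\properalt_k(f)\mid\Param_k=\param,\Phi_k=\phi] = \frac{c_m}{c_a}\jointdens(f),
\]
by Assumption \ref{a:pseudo-proper}\eqref{item:proper-pseudo-unbiased}; the indicator $\charfun{U(\phi)>0}$ arises precisely because $\approxdens^{\circ}$ places no mass where $U$ vanishes. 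Running the identical computation with $f\equiv\unitfun\in\Ls$ yields $\approxdens^{\circ}(f_B) = (c_m/c_a)\jointdens(\unitfun) = c_m/c_a$, so both relations hold with the common constant $c\defeq c_m/c_a>0$, matching the hypotheses of Lemma \ref{lem:super-general} with $\jointdens(f)$ playing the role of the target $\margdens(g)$.

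For integrability, the same cancellation reduces $\approxdens^{\circ}(m^{(1)})$, where $m^{(1)}(\param,\phi) = U(\phi)^{-1}\big(\E[|\properalt_k(f)|\mid\Psi_k] + \E[|\properalt_k(\unitfun)|\mid\Psi_k]\big)$, to the sum of the two indicator-restricted integrals of Assumption \ref{a:pseudo-proper}\eqref{item:proper-pseudo-general} for $f$ and for $\unitfun$, both finite. Lemma \ref{lem:super-general} then delivers $E_n(f)\to\jointdens(f)$ almost surely. The step I would treat most carefully is exactly this bookkeeping around $\{U>0\}$: the weights $\proper_k$ are defined only where $U(\Phi_k)>0$, and it is the structure of $\approxdens^{\circ}$ that simultaneously makes the ratio estimator well defined along the chain and manufactures the indicator appearing in the two conditions of Assumption \ref{a:pseudo-proper}. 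This is also where positivity C\ref{cond:positivity} enters, as it guarantees $U(\Phi_k)>0$ for all $k$ almost surely, so that $\proper_k$ is well defined throughout and no mass of $\jointdens$ is lost on $\{U=0\}$ --- which is what the proposition is meant to illustrate.
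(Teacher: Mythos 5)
Your proof is correct and follows essentially the same route as the paper: the paper packages the identical computation as Lemma \ref{lem:pseudo-approx-is} (showing $\proper_k$ is a proper weighting scheme for the augmented chain $(\Param_k,\Phi_k)$ with marginal $\approxdens^{\circ}$) and then invokes Theorem \ref{thm:proper-consistency}, which is itself just Lemma \ref{lem:super-general}, so your direct appeal to that lemma with the same $U(\phi)$-cancellation and the same indicator $\charfun{U(\phi)>0}$ is the same argument. The only quibble is your closing remark attributing well-definedness of $\proper_k$ along the chain to C\ref{cond:positivity}: that is already ensured by the pseudo-marginal acceptance rule (states with $U=0$ are never accepted), while positivity's real role is in verifying Assumption \ref{a:pseudo-proper}\eqref{item:proper-pseudo-unbiased}, which here is simply assumed.
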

The proof of Proposition \ref{prop:pseudo-approx-is}
follows by noting a proper weighting scheme involving the 
augmented approximate marginal distribution 
$\approxdens^\circ$ and target distribution $\jointdens^\circ$
(Lemma
\ref{lem:pseudo-approx-is}), and
Theorem \ref{thm:proper-consistency}.
\begin{lemma}
    \label{lem:pseudo-approx-is} 
Suppose the conditions of Proposition \ref{prop:pseudo-approx-is}
hold. Then, $\proper_k$ form a $\Ls^\circ$-proper weighting scheme,
with $\Ls^\circ \defeq \{f^\circ(\param,\phi,\latent) =
f(\param,\latent)\given f\in \Ls\}$, in the sense of Proposition
\ref{def:proper}, corresponding to
\begin{enumerate}[label=(\roman*)]
    \item approximate marginal $\approxdens^\circ(\ud \param, \ud \phi)
      = \ud \param Q_a(\param, \ud \phi)U(\phi)/c_a$,
    \item target $\jointdens^{\circ}\big((\ud\param, \ud \phi), \ud
      \latent\big)$ which admits
      the marginal $\jointdens(\param, \latent)\ud\param \ud\latent$.
\end{enumerate}
\end{lemma}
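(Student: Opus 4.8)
The plan is to read the pair $(\Param_k,\Phi_k)$ as the ``parameter'' of an augmented proper weighting scheme on $\paramspace\times\Sp_\Phi$ and to check the two requirements of Definition \ref{def:proper} directly. The Markov-chain side is already in hand: Assumption \ref{a:approx-pseudo} makes $(\Param_k,\Phi_k)_{k\ge 1}$ reversible with respect to $\approxdens^\circ$, and Harris ergodicity is assumed in Assumption \ref{a:pseudo-proper}, so Assumption \ref{a:mcmc-is}(i) holds for the augmented chain. The variables $P_k=(M_k,\properweightalt_k^{(1:M_k)},\Latent_k^{(1:M_k)})$ are conditionally independent given $(\Param_k,\Phi_k)$ with law depending only on $(\Param_k,\Phi_k)$; since $\properweight_k^{(i)}=\properweightalt_k^{(i)}/U(\Phi_k)$ and $\Phi_k$ belongs to the conditioning state, the same conditional-independence structure carries over to $\proper_k$.

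Next I would fix the augmented target. Writing $z(\param)\defeq\int Q_a(\param,\ud\phi)\charfun{U(\phi)>0}$, which is strictly positive on $\supp(\margdens)$ because $\E[U(\Phi_\param)]=c_a\approxdens(\param)>0$ there and $\supp(\margdens)\subset\supp(\approxdens)$, I would set
\[
\jointdens^\circ\big((\ud\param,\ud\phi),\ud\latent\big)\defeq \jointdens(\param,\latent)\,\ud\param\,\ud\latent\,\frac{Q_a(\param,\ud\phi)\charfun{U(\phi)>0}}{z(\param)}.
\]
Integrating out $\phi$ recovers the claimed marginal $\jointdens(\param,\latent)\,\ud\param\,\ud\latent$, and in particular $\jointdens^\circ(f^\circ)=\jointdens(f)$ for every $f^\circ(\param,\phi,\latent)=f(\param,\latent)$, so $f^\circ\in L^1(\jointdens^\circ)$ whenever $f\in L^1(\jointdens)$. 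The $(\param,\phi)$-marginal $\margdens^\circ$ and $\approxdens^\circ$ are both dominated by $\ud\param\,Q_a(\param,\ud\phi)$, with $\supp(\margdens^\circ)\subset\{U>0\}=\supp(\approxdens^\circ)$ thanks to the indicator and $\supp(\margdens)\subset\supp(\approxdens)$; hence Assumption \ref{a:mcmc-is}(ii) holds and the weight ratio of (iii) is well defined on $\supp(\approxdens^\circ)$.

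I would then verify the two proper-weighting conditions. Since $\proper_k(f^\circ)=\properalt_k(f)/U(\Phi_k)$, the conditional mean is $\meanfunc_{f^\circ}(\param,\phi)=\E[\properalt_k(f)\mid\Param_k=\param,\Phi_k=\phi]/U(\phi)$, and integrating against $\approxdens^\circ(\ud\param,\ud\phi)=\ud\param\,Q_a(\param,\ud\phi)U(\phi)/c_a$ gives
\[
\approxdens^\circ(\meanfunc_{f^\circ})=\frac{1}{c_a}\iint \ud\param\,Q_a(\param,\ud\phi)\charfun{U(\phi)>0}\,\E[\properalt_k(f)\mid\Param_k=\param,\Phi_k=\phi]=\frac{c_m}{c_a}\jointdens(f),
\]
where the last equality is Assumption \ref{a:pseudo-proper}\eqref{item:proper-pseudo-unbiased}. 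Together with $\jointdens^\circ(f^\circ)=\jointdens(f)$ this is condition \eqref{item:proper-unbiased} of Definition \ref{def:proper} with $c_w^\circ\defeq c_m/c_a>0$ (consistent with the constant one may fix in Assumption \ref{a:mcmc-is}(iii)). Applying the same computation to $|\proper_k(f^\circ)|=|\properalt_k(f)|/U(\Phi_k)$ turns Assumption \ref{a:pseudo-proper}\eqref{item:proper-pseudo-general} into $\approxdens^\circ(m_{f^\circ}^{(1)})<\infty$, which is condition \eqref{item:proper-general}; and $\unitfun^\circ=\unitfun\in\Ls^\circ$ because $\unitfun\in\Ls$ by hypothesis. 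Hence $(\proper_k)_{k\ge 1}$ is a $\Ls^\circ$-proper weighting scheme.

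The one step needing genuine care is the cancellation of $U(\phi)$ in the display above. It is legitimate because $\{U=0\}$ is $\approxdens^\circ$-null (the density of $\approxdens^\circ$ carries the factor $U(\phi)$), so the weights $\properweight_k^{(i)}=\properweightalt_k^{(i)}/U(\Phi_k)$ and the convention $0/0\defeq 0$ create no difficulty in the stationary regime; the indicator $\charfun{U(\phi)>0}$ built into Assumption \ref{a:pseudo-proper} is precisely what makes the two equalities align, and matching its normalising constant $c_m/c_a$ across both conditions (i) and (ii) is the only bookkeeping to keep straight.
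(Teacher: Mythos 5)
Your proposal is correct and follows essentially the same route as the paper's proof: lift everything to the augmented state $(\param,\phi)$, observe that $\proper_k(f^\circ)=\properalt_k(f)/U(\Phi_k)$ so that the factor $U(\phi)$ in $\approxdens^\circ(\ud\param,\ud\phi)=\ud\param\,Q_a(\param,\ud\phi)U(\phi)/c_a$ cancels against the denominator leaving exactly the indicator integrals of Assumption \ref{a:pseudo-proper}, and read off $c_w=c_m/c_a$ for condition \eqref{item:proper-unbiased} and the integrability condition \eqref{item:proper-general}. The extra details you supply (an explicit choice of $\jointdens^\circ$, the support check, and the conditional-independence bookkeeping) are correct elaborations of points the paper leaves implicit, not a different argument.
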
 
\begin{proof} 
For any $f^\circ \in L^\circ$ and $\phi\in\Sp_\Phi$, let
$\nu_f(\param,\phi) \defeq \E[\properalt_k(f)\mid
\Param_k=\param, \Phi_k=\phi]$. Whenever $U(\phi)>0$, define
\[
    \meanfunc_{f^\circ}^\circ(\param,\phi) \defeq
    \E[\proper_k(f^\circ)\mid \Param_k=\param,\Phi_k=\phi]
    = \nu_f(\param,\phi)/U(\phi),
\]
and $\meanfunc_{f^\circ}^\circ(\param,\phi) \defeq 0$ otherwise.
We have
\[
    \approxdens^\circ(\meanfunc_{f^\circ}^\circ) = c_a^{-1}
    \textstyle \iint Q_a(\param,\ud \phi) \charfun{U(\phi)>0}
    \nu_f(\param,\phi) \ud \param
    = c_w \jointdens(f),
\]
by Assumption \ref{a:pseudo-proper}
\ref{item:proper-pseudo-unbiased}, where $c_w = c_m/c_a$.
We also have
\[
    m_{f^\circ}^{\circ(1)}(\param,\phi) \defeq
    \E[\proper_k(f^\circ)\mid \Param_k=\param,\phi_k=\phi]
    = |\nu_f(\param,\phi)|/U(\phi),
\]
so $\approxdens^\circ(m_{f^\circ}^{\circ(1)}) <\infty$ by
Assumption
\ref{a:pseudo-proper} \ref{item:proper-pseudo-general}.
\end{proof} 

Let us finally consider different conditions, which
guarantee Assumption \ref{a:pseudo-proper}
\ref{item:proper-pseudo-unbiased}; the integrability Assumption \ref{a:pseudo-proper}
\ref{item:proper-pseudo-general} may be shown similarly.
\begin{proposition}
    \label{prop:sufficient-pseudo-is} 
Assumption \ref{a:pseudo-proper}
\ref{item:proper-pseudo-unbiased}
holds if
one of the following hold:
\begin{enumerate}[label=(\roman*)]
\item
  \label{item:pseudo-is-positive}
  For $\approxdens$-a.e.~$\param\in\paramspace$,
  $U(\Phi_\param)>0$ a.s.~and
  \begin{equation}
      \E[\properalt_k(f)\mid
\Param_k=\param]
= c_m \margdens(\param) f^*(\param),
\label{eq:unbiasedness-general-pseudo}
  \end{equation}
  where 
  $\E[\properalt_k(f)\mid
\Param_k=\param]
     = \textstyle \int Q_a(\param,\ud\phi)
      \E[\properalt_k(f)\mid
\Param_k=\param, \Phi_k=\phi]$.
\item
      \label{item:pseudo-is-independent}
$\properalt_k$ only depend on $\Param_k$, and
 for $\approxdens$-a.e.~$\param\in\paramspace$,
\[
    \E[\properalt_k(f)\mid
\Param_k=\param]
     = c_m \margdens(\param) f^*(\param)/p(\param),
\]
where $p(\param) \defeq \P(U(\Phi_\param)>0)$ with $\Phi_\param \sim
Q_a(\param,\uarg)$.
\item \label{item:pseudo-is-support}
  For $\approxdens$-a.e.~$\param\in\paramspace$
  \eqref{eq:unbiasedness-general-pseudo} holds, and
  $U(\phi)=0$ implies $\E[\properalt_k(f)\mid
\Param_k=\param, \Phi_k=\phi]=0$.
\end{enumerate}
\end{proposition}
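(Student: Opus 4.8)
The plan is to reduce each of the three cases to the single identity
\[
    \iint \ud\param\, Q_a(\param,\ud\phi)\,\charfun{U(\phi)>0}\,\nu_f(\param,\phi)
    = c_m \jointdens(f),
    \qquad
    \nu_f(\param,\phi) \defeq \E[\properalt_k(f)\mid \Param_k=\param,\Phi_k=\phi],
\]
which is exactly Assumption \ref{a:pseudo-proper} \eqref{item:proper-pseudo-unbiased}, recalling $\jointdens(f)=\margdens(f^*)=\int \margdens(\param) f^*(\param)\,\ud\param$. The first, and key, step I would take is to note a mismatch between the integral and the hypotheses: the outer integral is against the dominating measure $\ud\param$, whereas the conditions of the proposition are only assumed for $\approxdens$-a.e.\ $\param$. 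I would dispose of this at once via the indicator $\charfun{U(\phi)>0}$ together with Assumption \ref{a:approx-pseudo}. Indeed, on $\{\approxdens=0\}$ we have $\E[U(\Phi_\param)]=c_a\approxdens(\param)=0$ with $U\ge 0$, whence $U(\Phi_\param)=0$ almost surely, so $Q_a(\param,\{U>0\})=0$ and the inner integral vanishes there. The support condition (Assumption \ref{a:mcmc-is} \eqref{item:support}) gives $\margdens=0$ on $\{\approxdens=0\}$, so the right-hand side is also supported on $\{\approxdens>0\}$. It therefore suffices to establish the identity with both sides restricted to $\{\approxdens>0\}$, on which the $\approxdens$-a.e.\ hypotheses hold for $\ud\param$-almost every $\param$.

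With this reduction in hand the three cases become short computations on $\{\approxdens>0\}$. In case \eqref{item:pseudo-is-positive}, $U(\Phi_\param)>0$ almost surely means $\charfun{U(\phi)>0}=1$ for $Q_a(\param,\uarg)$-almost every $\phi$, so the inner integral collapses to $\int Q_a(\param,\ud\phi)\nu_f(\param,\phi)=\E[\properalt_k(f)\mid\Param_k=\param]$, which equals $c_m\margdens(\param)f^*(\param)$ by \eqref{eq:unbiasedness-general-pseudo}. In case \eqref{item:pseudo-is-support}, the hypothesis that $U(\phi)=0$ forces $\nu_f(\param,\phi)=0$ yields $\charfun{U(\phi)>0}\nu_f(\param,\phi)=\nu_f(\param,\phi)$ for \emph{every} $\phi$, so once more the inner integral equals $\int Q_a(\param,\ud\phi)\nu_f(\param,\phi)=c_m\margdens(\param)f^*(\param)$ by \eqref{eq:unbiasedness-general-pseudo}.

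Case \eqref{item:pseudo-is-independent} is the only one where the $\phi$-dependence must be handled: since $\properalt_k$ depends on $\Param_k$ alone, $\nu_f(\param,\phi)=\nu_f(\param)$ is constant in $\phi$ and factors out, giving the inner integral $\nu_f(\param)\int Q_a(\param,\ud\phi)\charfun{U(\phi)>0}=\nu_f(\param)p(\param)$ with $p(\param)=\P(U(\Phi_\param)>0)$. On $\{\approxdens>0\}$ we have $p(\param)>0$ (otherwise $U(\Phi_\param)=0$ a.s., contradicting $\E[U(\Phi_\param)]=c_a\approxdens(\param)>0$), so the assumed $\nu_f(\param)=c_m\margdens(\param)f^*(\param)/p(\param)$ gives $\nu_f(\param)p(\param)=c_m\margdens(\param)f^*(\param)$. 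In all three cases the inner integral equals $c_m\margdens(\param)f^*(\param)$ for $\ud\param$-a.e.\ $\param\in\{\approxdens>0\}$; integrating over $\{\approxdens>0\}$ and invoking the first-paragraph reduction produces $c_m\int\margdens(\param)f^*(\param)\,\ud\param=c_m\jointdens(f)$, as required. The integrability claim \eqref{item:proper-pseudo-general} follows by the identical argument with $|\properalt_k(f)|$ in place of $\properalt_k(f)$ and finiteness replacing the equalities. The genuine subtlety, and the step I would be most careful with, is the very first one: without the observation that $\charfun{U(\phi)>0}$ confines the $\ud\param$-integral to $\{\approxdens>0\}$, the a.e.-stated hypotheses would not control the integral taken against the dominating measure.
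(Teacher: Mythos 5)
Your proof is correct and follows essentially the same route as the paper's: a direct case-by-case evaluation of the double integral in Assumption \ref{a:pseudo-proper} \eqref{item:proper-pseudo-unbiased}, with case \eqref{item:pseudo-is-independent} handled by factoring out $\nu_f(\param)$ against $p(\param)$. The paper's version is terser --- it reduces \eqref{item:pseudo-is-positive} to \eqref{item:pseudo-is-support} and omits the null-set bookkeeping --- whereas you additionally justify why the $\approxdens$-a.e.\ hypotheses control the $\ud\param$-integral, via $\charfun{U(\phi)>0}$ together with Assumption \ref{a:approx-pseudo} and the support condition; this is a welcome gap-filling refinement rather than a different argument.
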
 
\begin{proof} 
Note that \ref{item:pseudo-is-positive} implies
\ref{item:pseudo-is-support}, under which
\[
    \textstyle
    \iint Q_a(\param,\ud\phi) \charfun{U(\phi)>0}
    \nu_f(\param,\phi) \ud\param
    = c_m \int
    \margdens(\param) f^*(\param) \ud \param
    = c_m \pi(f),
\]
where $\nu_f(\param,\phi) = \E[\properalt_k(f)\mid
\Param_k=\param,\Phi_k=\phi]$.

In case of \ref{item:pseudo-is-independent}, we have
$\nu_f(\param,\phi) = \E[\properalt_k(f)\mid \Param_k=\param]$ and so
\[
   \textstyle \int Q_a(\param,\ud\phi) \charfun{U(\phi)>0}
    \nu_f(\param,\phi)
    = c_m \margdens(\param) f^*(\param).
\]
\end{proof} 

\begin{remark2}
    \label{rem:positivity-essential} 
Proposition \ref{prop:sufficient-pseudo-is}
\ref{item:pseudo-is-positive}
is the most straightforward in the latent variable context,
and often sufficient, since we may choose a positive $U(\phi)$
(e.g.~by considering inflated $\tilde{U}(\phi) = U(\phi)+\epsilon$
instead).
Proposition \ref{prop:sufficient-pseudo-is}
\ref{item:pseudo-is-independent}
  may be used directly to verify the validity of an
  MCMC version of the lazy ABC algorithm \cite{prangle}.
  It also
  demonstrates why positivity plays a
key role: if only \eqref{eq:unbiasedness-general-pseudo} is assumed
and $p(\param)$ is non-constant, then $p(\param)$ must be accounted
for, or else we end up with biased estimators targeting a marginal
proportional to $\margdens(\param)p(\param)$.
Proposition \ref{prop:sufficient-pseudo-is} \ref{item:pseudo-is-support}
demonstrates that strict positivity is
not necessary, but in this case a delicate dependency structure is
required.
\end{remark2}


\section{State space models and linear-Gaussian state dynamics}
\label{sec:lin-gauss-ssms} 

State space models (SSM) are latent variable models
which are commonly used in time series analysis
\cite[cf.][]{cappe-moulines-ryden}.
In the setting of Section \ref{sec:latent},
SSMs are parametrised
by $\param\in\paramspace$, and $\latent = \ssmhid_{1:T} \in
\latentspace = \Sp_\ssmhid^T$ and
$\obs = \ssmobs_{1:T}\in \obsspace = \Sp_\ssmobs^T$, and
\[
  \mu^{(\param)}(\latent)
  = \prod_{t=1}^T
  \mu_t^{(\param)}(\ssmhid_t\mid \ssmhid_{t-1})
  \qquad\text{and}\qquad
  g^{(\param)}(\obs\mid\latent) =
  \prod_{t=1}^T g_t^{(\param)}(\ssmobs_t\mid\ssmhid_t),
\]
where, by convention, $\mu_1^{(\param)}(\ssmhid_1\mid \ssmhid_0) \defeq
\mu_1^{(\param)}(\ssmhid_1)$. That is, the latent states
$\Ssmhid_{1:T}$ form a Markov
chain with initial density $\mu_1^{(\param)}$ and state transition
densities $\mu_t^{(\param)}$, and
$g_t^{(\param)}$ define the model for the observations $\Ssmobs_{1:T}$
given $\Ssmhid_{1:T}$.

Appendix \ref{sec:ssm} reviews general techniques to construct 
$\properweightalt^{(1:m)}_\param$ and $\Latent^{(1:m)}_\param$
for which $\properalt_\param(h) \defeq \sum_{i=1}^m
    \properweightalt^{(i)}_\param h(\Latent^{(i)})$ satisfy:
\begin{align}
    \E[\properalt_\param(h)]
    &= \int p^{(\param)}(\ssmhid_{1:T}, \ssmobs_{1:T})
    h(\ssmhid_{1:T}) \ud \ssmhid_{1:T},
    \label{eq:ssm-proper}
\end{align}
for any $\param$ and for some class of functions $h:\Sp_\ssmhid^T\to\R$.
These random variables lead directly to a proper
weighting; see Corollary \ref{cor:proper-ssm} in Appendix \ref{sec:ssm}.

We focus next on a special case of the general SSM, 
where both $\Sp_\ssmhid$ and $\Sp_\ssmobs$ are
Euclidean and $\mu_t^{(\param)}$ are linear-Gaussian, but
the observation models
$g_t^{(\param)}$ may be non-linear and/or non-Gaussian,
taking the form
\begin{equation*}
    g_t^{(\param)}(\ssmobs_t \mid  \ssmhid_t) =
\eta_t^{(\param)}(\ssmobs_t\mid H_t^{(\param)} \ssmhid_t).
\end{equation*}
Our setting covers exponential family observation models
with Gaussian, Poisson, binomial,
negative binomial, and Gamma distributions, and a stochastic
volatility model.
This class contains a large number of commonly used
models, such as structural time series
models, cubic splines, generalised linear mixed models, and classical
autoregressive integrated moving average models.

\subsection{Marginal approximation}
\label{sec:lin-gauss-marginal} 

The scheme we consider here is based on 
\cite{shephard-pitt,durbin-koopman1997},
and relies on a Laplace approximation
$p^{(\param)}_a(\ssmhid_{1:T}, \tilde{\ssmobs}_{1:T}^{(\param)})
= \mu^{(\param)}(\ssmhid_{1:T})
\tilde{g}^{(\param)}(\tilde{\ssmobs}_{1:T}^{(\param)}\mid
\ssmhid_{1:T})$, where
$\tilde{g}^{(\param)}(\tilde{\ssmobs}_{1:T}^{(\param)}\mid \ssmhid_{1:T})
\defeq \prod_{t=1}^T
\tilde{g}_t^{(\param)}(\tilde{\ssmobs}_t^{(\param)}\mid \ssmhid_t)$.
The linear-Gaussian terms $\tilde{g}_t$ approximate $g_t$
in terms of pseudo-observations $\tilde{\ssmobs}_{t}^{(\param)}$ and
pseudo-covariances $R_t^{(\param)}$, which are
found by an iterative process, which we detail next for a fixed $\param$.
Denote $D_t^{(n)}(\ssmhid_t) \defeq \frac{\partial^n}{\partial^n z_t}
  \log \eta_t^{(\param)}(\ssmobs_t \mid
\ssmhid_t)$, and assume that $\tilde \ssmhid_{1:T}$ is an initial estimate
for the mode $\hat{\ssmhid}_{1:T}^{(\param)}$ of
$p^{(\param)}(\ssmhid_{1:T} \mid \ssmobs_{1:T})$ following:
\begin{enumerate}
\item $R_t^{(\param)} = - [D_t^{(2)}(H_t^{(\param)}\tilde{\ssmhid}_t)]^{-1}$
and 
$\tilde \ssmobs_t^{(\param)} = H_t^{(\param)} \tilde \ssmhid_t + R_t^{(\param)}
D_t^{(1)}(H_t^{(\param)} \tilde{\ssmhid}_t)$
\item Run the Kalman filter and smoother for the model with 
  $g_t^{(\param)}(\ssmobs_t\mid \ssmhid_t)$ replaced by
$    \tilde{g}_t^{(\param)}(\tilde{\ssmobs}_t^{(\param)}\mid \ssmhid_t)
    = N(\tilde{\ssmobs}_t^{(\param)}; H_t^{(\param)}\ssmhid_t, R_t^{(\param)})$
and set $\tilde{\ssmhid}_{1:T}$ to the smoothed mean.
\end{enumerate}
These steps are then repeated until convergence,
which is typically quick: often less than 10 iterations are enough
\cite{durbin-koopman2000}.

Consider the following decomposition of the marginal likelihood:
\begin{equation}
    \label{logp}
        L(\param)
        = \tilde{L}_a(\param)
        \frac{g^{(\param)}(\ssmobs_{1:T} \mid \hat
          \ssmhid_{1:T}^{(\param)})}{
              \tilde g^{(\param)}(\tilde{\ssmobs}_{1:T}^{(\param)}\mid
              \hat \ssmhid_{1:T}^{(\param)})}
         \E\left[\frac{g^{(\param)}(\ssmobs_{1:T} \mid
           \Ssmhid_{1:T})
           / g^{(\param)}(\ssmobs_{1:T} \mid \hat
           \ssmhid_{1:T}^{(\param)})
           }{\tilde g^{(\param)}(\tilde{\ssmobs}_{1:T}^{(\param)}\mid \Ssmhid_{1:T})
           / \tilde g^{(\param)}(\tilde{\ssmobs}_{1:T}^{(\param)} \mid \hat
           \ssmhid_{1:T}^{(\param)})
           }\right],
\end{equation}
where
$\tilde{L}_a(\param) \defeq \int p^{(\param)}_a(\ssmhid_{1:T}, \tilde
\ssmobs_{1:T}^{(\param)})
\ud \ssmhid_{1:T}$ is the marginal likelihood (from the Kalman
filter), and the expectation is taken with respect to the approximate
smoothing distribution
$p_a^{(\param)}(\ssmhid_{1:T}\mid \tilde{\ssmobs}_{1:T}^{(\param)}) =
p^{(\param)}_a(\ssmhid_{1:T}, \tilde\ssmobs_{1:T}^{(\param)})/\tilde{L}_a(\param)$.
If the pseudo-likelihoods $\tilde{g}_t^{(\param)}$ are nearly proportional
to the true likelihoods
$g_t^{(\param)}$ around the mode of
$p_a^{(\param)}(\ssmhid_{1:T}\mid \ssmobs_{1:T})$,
the expectation in \eqref{logp} is close to one. Our 
approximation is based on dropping
the expectation in \eqref{logp}:
$    L_a(\param) \defeq \tilde{L}_a(\param)
    g^{(\param)}(\ssmobs_{1:T} \mid \hat \ssmhid_{1:T})/\tilde
      g^{(\param)}(\tilde \ssmobs_{1:T}^{(\param)}\mid \hat
      \ssmhid_{1:T}).$
The same approximate likelihood $L_a(\param)$ was also used in a
maximum likelihood setting by \cite{durbin-koopman2012} as an initial
objective function before more expensive importance sampling based
maximisation was done.

The evaluation of the approximation $L_a(\theta)$ above requires a
reconstruction of the Laplace approximation for each value of
$\theta$. We call this \emph{local approximation}, and consider also a
faster \emph{global approximation} variant, where the
pseudo-observations and covariances are constructed only once, at the
maximum likelihood estimate of $\param$.


\subsection{Proper weighting schemes}
\label{sec:lin-gauss-proper} 

The simplest approach to construct a proper weighting scheme based on
the Laplace approximations is to use the approximate smoothing
distribution $p_a^{(\param)}(\ssmhid_{1:T}\mid \ssmobs_{1:T})$ as IS
proposal. We consider such scheme using the simulation
smoother \cite{durbin-koopman2002} with one antithetic variable, which 
we call \textbf{SPDK}, following \cite{shephard-pitt}.

We consider also several variants of $M_t$ and $G_t$ in the particle
filter discussed in Appendix \ref{sec:ssm}. The bootstrap filter
\cite{gordon-salmond-smith}, abbreviated as
\textbf{BSF}, uses $M_t=\mu_t^{(\param)}$ and
$G_t=g_t^{(\param)}(\ssmobs_t\mid \uarg)$, and hence does not
rely on an approximation. Inspired by the developments in
\cite{whiteley-lee,guarniero-johansen-lee}, we consider also the choice
\begin{equation*}
M_t(\ssmhid_t\mid \ssmhid_{1:t-1}) = p_a^{(\param)}(\ssmhid_t\mid
\ssmhid_{t-1},\ssmobs_{1:T}) ,\qquad\text{and}\qquad G_t(\ssmhid_{1:t})
= g_t^{(\param)}(\ssmobs_t\mid \ssmhid_t) /
\tilde{g}_t^{(\param)}(\tilde{\ssmobs}_t\mid \ssmhid_t),
\end{equation*}
where $p_a^{(\param)}(\ssmhid_t\mid
\ssmhid_{t-1},\ssmobs_{1:T}) = p_a^{(\param)}(\ssmhid_t\mid
\ssmhid_{1:t-1},\ssmobs_{1:T})$ are conditionals of
$p_a^{(\param)}(\ssmhid_{1:T}\mid \ssmobs_{1:T})$. This would be
optimal in our setting if the $G_t$ were constants
\cite{guarniero-johansen-lee}. As they are often approximately so, we
believe that this choice, which we call
\textbf{$\psi$-APF} following 
\cite{guarniero-johansen-lee}, can provide substantial benefits over BSF.



\section{Discretely observed diffusions}
\label{sec:diffusion-ssms} 

In many applications, for instance in finance or physical systems
modelling, the SSM state transitions arise naturally from
a continuous time diffusion model, such as
\begin{equation*}
    \ud \tilde{\Ssmhid}_t =
    m^{(\theta)}(t,\tilde{\Ssmhid}_t) \ud t +
    \sigma^{(\theta)}(t,\tilde{\Ssmhid}_t) \ud B_t,
\end{equation*}
where $B_t$ is a (vector valued) Brownian motion and where
$m^{(\theta)}$ and $\sigma^{(\theta)}$ are functions (vector and
matrix valued, respectively). The latent variables $\Latent =
(\Ssmhid_1,\ldots,\Ssmhid_T)$ are assumed to follow the law of
$(\tilde{\Ssmhid}_{t_1},\ldots,\tilde{\Ssmhid}_{t_T})$, so
$\mu_k^{(\param)}$ would ideally be the transition densities of
$\tilde{\Ssmhid}_{t_k}$ given $\tilde{\Ssmhid}_{t_{k-1}}$. These
transition densities are generally unavailable (for non-linear
diffusions), but standard time-discretisation schemes allow for
straightforward approximate simulation \cite[cf.][]{kloeden-platen}.
The denser the time-discretisation the less bias,
but the computational complexity of the simulation is
higher --- generally proportional to the size of the mesh.

The MCMC-IS may be applied to speed up the inference of discretely
observed diffusions by the following simple two-level approach.
The `true' state transition $\mu_t^{(\param)}$ are based on
`fine enough' discretisations, which are assumed to ensure a
negligible bias, but which are expensive to simulate. Cheaper `coarse'
discretisation corresponds to transitions $\hat{\mu}_t^{(\param)}$.

Because neither of the models admit exact calculations, we may only
use a pseudo-marginal approximate chain as discussed in Sections
\ref{sec:algorithmic-pseudo} and \ref{sec:pseudo}). More specifically,
we may use the bootstrap filter (Appendix \ref{sec:ssm}) with SSM
$(\hat{\mu}_t^{(\tilde{\Param}_k)}, g_t^{(\tilde{\Param}_k)})$ to
generate the likelihood estimators $\tilde{U}_k$ in Phase 1', and in Phase
2', we may use bootstrap filters for
SSM $(\mu_t^{(\Param_k)},g_t^{(\Param_k)})$ to generate
$(\properweightalt_k^{(i)},\Latent_k^{(i)})$.

Assuming that the observation model satisfies $g_t^{(\param)}>0$
guarantees the validity of this scheme, because then $\tilde{U}_k>0$
(see Proposition \ref{prop:sufficient-pseudo-is}
\ref{item:pseudo-is-positive}).
It is most straightforward to simulate the bootstrap filters in Phases
1' and 2'
independent of each other, but they may be made dependent as well, by
using a coupling strategy \cite[cf.][]{sen-thiery-jasra}.
The correction phase could be also based on
exact sampling for diffusions
\cite{beskos-papaspiliopoulos-roberts-fearnhead}, 
which allow for
elimination of the discretisation bias entirely.

The recent work \cite{franks-jasra-law-vihola} details how
unbiased inference is also possible with IS type correction, using
randomised multilevel Monte Carlo.


\section{Experiments}
\label{sec:exp} 


We did experiments for our generic framework with SSMs, using Laplace
approximations (Section \ref{sec:lin-gauss-ssms}) and an approximation
based on coarsely discretised diffusions (Section
\ref{sec:diffusion-ssms}).  We compared several approaches in our
experiments:
\begin{description}
\item[AI] Approximate inference with MCMC targeting  $\approxdens(\param)$,
  and for each accepted $\tilde{\Param}_k$, sampling one realisation from
  $\tilde p^{(\tilde{\Param}_k)}(\ssmhid_{1:T} \mid \ssmobs_{1:T})$.
\item[PM] Pseudo-marginal MCMC with $m$ samples targeting directly
  $\jointdens(\param,\latent)$.
\item[DA] Two-level delayed acceptance pseudo-marginal MCMC with first
  stage acceptance based on
  $\approxdens(\param)$ and with target $\jointdens(\param,\latent)$.
\item[IS1] Jump chain IS correction with $m N_k$ samples for each accepted
  $\tilde{\Param}_k$.
\item[IS2] Jump chain IS correction with $m$ samples for each accepted
  $\tilde{\Param}_k$.
\end{description}
The IS1 algorithm is similar to simple IS estimator 
\eqref{eq:corrected-estimator}, but is expected
to be generally safer; see Remark \ref{rem:jump} \ref{item:jump-batch}
Except for AI, all the algorithms are asymptotically exact.
Ignoring the effects of parallel implementation, the average
computational complexity, or cost, of DA and IS2 are roughly comparable, and
we have similar pairing between PM and IS1. However, as the weighting
in IS methods is based only on the post-burn-in chain, the IS methods
are generally somewhat faster.

We used a random walk Metropolis algorithm for $\approxdens$ with a
Gaussian proposal distribution, whose covariance was adapted during
burn-in following \cite{vihola-ram}, targeting the acceptance rate
0.234. In DA, the adaptation was based on the first stage acceptance probability,
which was also used in PM for SPDK and $\psi$-APF variants as this led to more robust adaptation.

All the experiments were conducted in \texttt{R} \cite{r-core} using
the \texttt{bssm} package which is available online
\cite{helske-vihola}. The experiments were run on a Linux server with
four 24-core Intel Xeon E7-8890 2.6GHz processors with total 2.1TB
of RAM. 

In each experiment, we calculated the Monte Carlo estimates several
times independently, and the inverse relative efficiency (IRE) was
reported. The IRE, defined as the mean square error (MSE) of the
estimate multiplied by the average computation time, provides a
justified way to compare Monte Carlo algorithms with different costs
\cite{glynn-whitt}. 

Further details and results of the experiments may be found in the
preprint version of our article \cite{vihola-helske-franks-preprint}.


\subsection{Laplace approximations} 


In case of Laplace approximations, the maximum likelihood estimate of
$\param$ was always used as the starting value of MCMC.
We used sub-sampling as in
Proposition \ref{prop:randomise-proper}, and sampled one
trajectory $\Ssmhid_{1:T}$ per each accepted state.
We tested the exact methods with three different IS
correction schemes,
SPDK, BSF and $\psi$-APF, described
in Section \ref{sec:lin-gauss-proper}.
For BSF and $\psi$-APF, the filter-smoother estimates as in Proposition
  \ref{prop:particle-proper} \ref{item:filter-smoother} were used.
When calculating the MSE, we used the average over all estimates from all
unbiased algorithms as the ground truth.

For all the exact methods, we chose the IS accuracy parameter $m$
based on a pilot experiment, following the guidelines 
for optimal tuning of pseudo-marginal MCMC in
\cite{doucet-pitt-deligiannidis-kohn}.
More specifically, $m$ was set so that the
standard deviation of the logarithm of the likelihood estimate,
denoted with $\delta$, was around 1.2 in the neighbourhood of the
posterior mean of $\param$.  
We kept the same $m$ for all
methods, for comparability, even though in some cases optimal choice
might differ \cite{sherlock-thiery-lee}. 


\subsubsection{Poisson observations}
\label{poisson} 

Our first model is of the following form:
\begin{equation*}
              g_t^{(\param)}(\ssmobs_t \mid \ssmhid_t) =
              \mathrm{Poisson}(\ssmobs_t; e^{u_t}), \quad\text{and}\quad
              \begin{pmatrix}
                  u_{t+1} \\
                  v_{t+1}\end{pmatrix}
              = \begin{pmatrix}
              u_t + v_t + \sigma_{\eta}\eta_t \\
              v_t + \sigma_{\xi}\xi_t
              \end{pmatrix},
\end{equation*}
with $\Ssmhid_1=(U_1,V_1)\sim N(0, 0.1 I)$, where
$\xi_t,\eta_t\sim N(0,1)$. For testing our algorithms, we
simulated a single set of observations $\ssmobs_{1:100}$ from this model
with $\Ssmhid_1=(0, 0)$ and $\param = (\sigma_{\eta},
\sigma_{\xi})= (0.1, 0.01)$.
We used a uniform prior $U(0, 2s)$ for the
parameters, where the cut-off parameter $s$ was
set to $1.6$ based on the sample standard deviation of
$\log(\ssmobs_{1:T})$, where zeros were replaced with 0.1. Results were not
sensitive to this upper bound.

Based on a pilot optimisation, we set
$m=10$ for SPDK and $\psi$-APF, and  $m=200$ for BSF, leading to $\delta\approx 0.05$ for SPDK, $\delta\approx 0.1$ for $\psi$-APF, and $\delta\approx 1.2$ for BSF.
For all algorithms, we used 100,000 MCMC iterations with the first
half discarded as burn-in.
We ran all the algorithms independently
1000 times. 

Table \ref{table:poisson} shows the IREs, which are 
re-scaled such that all IREs of PM-BSF equal one.
The overall acceptance rate of DA-BSF was around 0.101, 
and 0.218-0.234 for all others.
All exact methods led to essentially the same overall mean
estimate $(0.093, 0.016, -0.075, 2.618)$ for
($\sigma_{\eta}$, $\sigma_{\xi}$, $u_1$, $u_{100}$), in contrast with
AI showing some bias on $(u_1, u_{100})$, with overall mean estimates 
$(-0.063, 2.629)$ and $(-0.064, 2.631)$ with local and
global approximation, respectively. 
IS2-BSF clearly outperformed DA-BSF in terms of IRE,
because of the burn-in benefit.  Similarly, IS1-BSF outperformed
PM-BSF by a clear margin.
With SPDK and $\psi$-APF, the IS1 and IS2 outperformed the PM and DA
alternatives, but with a smaller margin because of
smaller overall execution times.
There were no significant differences between the SEs of local and global
variants, but the global one was faster leading to smaller IREs. The execution times of SPDK were slightly less than $\psi$-APF due to the use of antithetic variable which increased the relative efficiency of SPDK algorithm.

Using Corollary \ref{cor:is-ci}, we constructed 95\% confidence
intervals for our IS-MCMC estimators, and computed the average
coverage of these intervals over all replications. The coverages were
between 0.93 and 0.99, averaging the nominal 0.95 over all methods,
with no clear differences between them. In addition, we computed the
`variance decomposition' as suggested in Remark
\ref{rem:variance-decomposition}. That is, we calculated the
estimator $n\hat{v}_n$ of Theorem \ref{thm:importance-var}, and
calculated the proportion of it with respect to the overall asymptotic
variance estimated with $\hat{s}_n$ of Corollary \ref{cor:is-ci},
using the integrated autocovariance estimator suggested by
\citet{sokal-notes}.  For the hyperparameters $\sigma_{\eta}$ and
$\sigma_{\xi}$, these proportions were around 0.5 and 0.6,
respectively, except for IS2-BSF which resulted proportions 0.7 and
0.8. This suggests that the MCMC autocorrelations and the IS
correction contribute roughly equally to the overall uncertainty
For the
latent states $(u_1, u_{100})$ the proportion was close to 1 in all
cases. This is expected, because in the case of the latent variables, 
the centred conditional expectation terms $\meanfunc_{\bar{f}}^2$ are
expected to be small relative to the conditional variance $v$.

\begin{table}[t]
	\centering
  \caption{IREs for the Poisson model, with local (top) and global
    (bottom) approximations. Times are in seconds. For PM-BSF, IREs are
    one and time 811s.}
	\label{table:poisson} 
        \small
      \begin{threeparttable}
      \begin{tabular}{p{1.8em}ccccccccccccc}
      \toprule
&& \multicolumn{3}{c}{BSF}
      & \multicolumn{4}{c}{SPDK} 
      & \multicolumn{4}{c}{$\psi$-APF} \\
      \cmidrule(lr){3-5}
      \cmidrule(lr){6-9}
      \cmidrule(lr){10-13}
      & AI & DA & IS1 & IS2 
      & PM & DA & IS1 & IS2 
      & PM & DA & IS1 & IS2 \\
      	\midrule
        Time
        & 82 & 273 & 540 & 178
        & 137 & 93 & 102 & 86 
        & 206  & 110 & 111 & 94 \\
        $\sigma_{\eta}$ 
        & 0.031 & 0.354 & 0.240 & 0.134
        & 0.051 & 0.035 & 0.037 & 0.033 
        & 0.077 & 0.044 & 0.043 & 0.034 
        \\
        $\sigma_{\xi}$ 
        & 0.037 & 0.388 & 0.290 & 0.159 
        & 0.063 & 0.043 & 0.049 & 0.042 
        & 0.091 & 0.054 & 0.056 & 0.042
        \\
         $u_1$ 
         & 0.964 & 0.795 & 0.544 & 0.414
         & 0.122 & 0.076 & 0.090 & 0.040 
         & 0.176 & 0.095 & 0.093 & 0.072 
         \\
         $u_{100}$ 
         & 1.713 & 0.825 & 0.523 & 0.382
         & 0.113 & 0.079 & 0.082 & 0.042 
         & 0.172 & 0.101 & 0.095 & 0.067
         \\
        \midrule
        Time 
    & 23 & 214 & 482 & 120
    & 78 & 35 & 44 & 28
    & 147 & 51 & 53 & 36 \\
$\sigma_{\eta}$ 
 & 0.012 & 0.284 & 0.220 & 0.091 
 & 0.033 & 0.015 & 0.019 & 0.013 
 & 0.065 & 0.024 & 0.022 & 0.017
 \\
$\sigma_{\xi}$ 
 & 0.065 & 0.340 & 0.268 & 0.118 
 & 0.041 & 0.019 & 0.024 & 0.015 
 & 0.074 & 0.029 & 0.027 & 0.019 
 \\
$u_1$ 
 & 0.211 & 0.601 & 0.440 & 0.271 
 & 0.061 & 0.028 & 0.032 & 0.013 
 & 0.122 & 0.043 & 0.039 & 0.027 
 \\
$u_{100}$ 
 & 0.660 & 0.592 & 0.433 & 0.240 
 & 0.067 & 0.031 & 0.038 & 0.015 
 & 0.126 & 0.048 & 0.042 & 0.023 
 \\
      	\bottomrule
\end{tabular}
\end{threeparttable}
\end{table}


\subsubsection{Stochastic volatility model}
\label{sv-model} 

Our second illustration is more challenging, involving analysis of real time
series: the daily log-returns for the S\&P
index from 4/1/1995 to 28/9/2016, with total number of observations $T =
5473$. The data was analysed using the following stochastic volatility
(SV) model:
\begin{equation*}
    \Ssmobs_t\mid\Ssmhid_t \sim N(0,e^{\Ssmhid_t}), \qquad
    \Ssmhid_{t+1}\mid \Ssmhid_t \sim N(\nu + \phi (\Ssmhid_t - \nu),
    \sigma_{\eta}^2),
\end{equation*}
with $\Ssmhid_1 \sim N(\nu, \sigma^2_{\eta} / (1-\phi^2))$.
We used a
uniform prior on $[-0.9999, 0.9999]$ for $\phi$, a
half-Gaussian prior with standard deviation 5 for $\sigma_{\eta}$,
and a zero-mean Gaussian prior with standard deviation 5 for $\nu$.
SPDK was expected to be problematic, due to its well-known exponential
deterioration in $T$, unlike the particle filter which often scales
much better in $T$ \cite{whiteley-stability}. In addition, it is
known that for this particular model, the importance weights may have
large variability
\cite{pitt-tran-scharth-kohn2013,koopman-shephard-creal2009}.  While
in principle $\psi$-APF may also be affected by such
fluctuations, we did not observe any problems with it in our
experiments.

Based on our pilot experiment, we chose $m=10$ for $\psi$-APF, 
$m=70$ for SPDK and $m=3400$ for BSF, 
which all led to $\delta\approx 1.1$.  We used 
100,000 MCMC iterations with the first half discarded as burn-in,
and 100 independent replications. 
the IREs re-scaled here with respect to DA-BSF are shown in
Table \ref{table:sv}. The PM and IS1 were not tested because of their
high costs. The results with global approximation are shown only
for AI, and indicate significant computational savings.
The parallelisation with 8
cores dropped the execution time nearly ideally.
The total acceptance rates were 0.1 for DA-BSF,  
PM-SPDK and DA-$\psi$-APF, 0.06 for DA-SPDK, and 0.15 for PM-$\psi$-APF.

\begin{table}[t]
	\centering \caption{IREs for SV model. Times are in hours.
    AI\textsuperscript{G} is with global approximation and
    IS2\textsuperscript{8} is with 8 parallel cores.
    For DA-BSF, IREs are
    one and time 37h. }
	\label{table:sv} 
        \small
  \begin{threeparttable}
	\begin{tabular}{p{1.8em}cccccccccccc}
    \toprule
    &&& \multicolumn{2}{c}{BSF}
    & \multicolumn{4}{c}{SPDK}
    & \multicolumn{4}{c}{$\psi$-APF} \\
    \cmidrule(lr){4-5}
    \cmidrule(lr){6-9}
    \cmidrule(lr){10-13}
     & AI & AI\textsuperscript{G} & IS2 & IS2\textsuperscript{8}
     & PM & DA & IS1 & IS2
     & PM & DA & IS1 & IS2 \\
		\midrule
		 Time 
     & 0.9 & 0.2 & 19.6 & 3.1
     & 3.8 & 1.6 & 2.4 & 1.2
     & 2.1 & 1.2 & 1.2 & 1\\
 $\phi$ 
 & 0.114 & 0.114 & 0.328 & 0.068    
 & 0.634 & 2.901 & 0.346 & 0.645 
 & 0.035 & 0.036 & 0.012 & 0.017 
  \\
 $\sigma_{\eta}$ 
 & 0.466 & 0.259 & 0.493 & 0.068
 & 3.014 & 1.190 & 0.456 & 0.545 
 & 0.035 & 0.035 & 0.013 & 0.019 
 \\
 $\nu$ 
 & 0.008 & 1.023 & 0.388 & 0.080
 & 2.346 & 1.867 & 0.287 & 0.606 
 & 0.041 & 0.033 & 0.013 & 0.019 
 \\
 $\Ssmhid_{1}$ 
 & 0.498 & 0.184 & 0.389 & 0.096
 & 1.834 & 1.081 & 1.448 & 0.201 
 & 0.067 & 0.033 & 0.013 & 0.019 
 \\
 $\Ssmhid_{5473}$ 
 & 0.571 & 0.152 & 0.385 & 0.053
 & 2.611 & 1.224 & 1.021 & 0.396 
 & 0.035 & 0.031 & 0.009 & 0.014 \\
 \bottomrule
	\end{tabular}
  \end{threeparttable}
\end{table}

Like in the Poisson experiment, the overall means of the exact methods
were close to each other, but AI had some bias, this time 
also with some of the hyperparameters ($\sigma_\eta$ and $\nu$). 
The IS1 and IS2 methods outperformed the PM and DA methods similarly
as in the Poisson experiment.
Due to a much smaller $m$, the DA-SPDK and
DA-$\psi$-APF were an order of magnitude faster than DA-BSF.
Diagnostics from the individual runs of PM-SPDK and DA-SPDK sometimes
showed poor mixing, and despite the large reductions in execution
time, the IREs were worse than PM-BSF. We observed also cases with a few
very large correction weights in IS1-SPDK and IS2-SPDK, which had some
impact also on their efficiencies. The SEs of DA-$\psi$-APF were
comparable with the DA-BSF.  We did not experience problems with
mixing or overly large weights with $\psi$-APF, which suggests
$\psi$-APF being more robust than SPDK. There were no significant
differences in the SEs between the exact methods when using the local
and global approximation schemes. 

We calculated the asymptotic variances, the 95\% confidence intervals and 
the variance decomposition as in Section \ref{poisson}.
The coverages of 95\% CIs for IS-MCMC estimators were between 0.91--1
for all variants and variables, except for $\sigma_{\eta}$ for which
the average coverages were 0.77, 0.85, and 0.88 for IS1-SPDK, IS2-BSF,
and IS2-SPDK, but this is likely due to random fluctuation, because
there were only 100 replications. 
For the latent states, the proportions $n\hat{v}_n/\hat{s}_n$ 
were close to 1 in all cases, as in Section \ref{poisson}. For the
hyperparameters, the proportion was around 0.9 for IS1-SPDK, 0.6 for
IS1-$\psi$-APF, 1 for IS2-SPDK, and 0.8 for both IS2-$\psi$-APF and
IS2-BSF. These proportions suggest that SPDK is less efficient than
$\psi$-APF, which is in line with the observed IREs in Table
\ref{table:sv}.



\subsection{Discretely observed Geometric Brownian motion} 

Our last experiment was about a discretely observed diffusion as
discussed in Section \ref{sec:diffusion-ssms}.
The model was a geometric Brownian motion, with
noisy log-observations:
\begin{equation*}
\ud \tilde\Ssmhid_t = \nu \tilde\Ssmhid_t \ud t + \sigma_\ssmhid \tilde\Ssmhid_t
\ud B_t, \qquad \Ssmobs_k \mid (\Ssmhid_k=\ssmhid) \sim N(\log(\ssmhid),
\sigma^2_\ssmobs),
\end{equation*}
with $\tilde\Ssmhid_0\equiv 1$, 
where $(B_t)_{t\ge 1}$ stands for the standard Brownian motion, 
and where 
$\Ssmhid_k = \tilde\Ssmhid_k$.
The discretisations $\mu_t^{(\param)}$ and
$\hat{\mu}_t^{(\param)}$
were based on a Milstein discretisation with uniform meshes
of sizes $2^{L_F}$ and $2^{L_C}$, respectively, with $L_C=4$ and
$L_F=16$, reflected to positive values.
We did not consider optimising $L_C$ and $L_F$, but rather aimed for
illustrating the potential gains for the IS2 algorithm from
parallelisation.
The data was a single simulated realisation of length $50$
from the exact model, with $\nu=0.05$, $\sigma_x = 0.3$, and
$\sigma_y=1$. 
We used a
half-Gaussian prior with s.d.~$0.1$ for $\nu$, a half-Gaussian prior
with s.d.~$0.5$ for $\sigma_x$, and $N(1.5,0.5^2)$ prior truncated to
$>0.5$ for $\sigma_y$. For both IS2 and DA, and both levels, we used $m=50$
which led to $\delta \approx 0.6$.

Assuming a unit cost for each step in the BSF,
the total average cost of a parallel IS2 run is 
$n 2^{L_C} + \alpha (n - n_b) 2^{L_F}/M$,
where $\alpha$ is the mean acceptance rate of the approximate MCMC,
$n_b$ is the length of burn-in and $M$ is the number of
parallel cores used for the weighting. We chose $n =
5000$, $n_b = 2500$, $M=48$, and the target
acceptance rate $\alpha=0.234$, leading to an expected 43-fold
speed-up due to the parallelisation of IS2.

Single run of DA cannot be easily parallelised, but we ran instead
multiple independent DA chains in parallel, and averaged their outputs
for inference. While such parallelisation may not be optimal, it
allows for utilisation of all of the available computational resources.
The running time of each DA chain was constrained to
be similar to the time required by IS2, leading to $n = 100$
with $n_b = 50$. Because of the short runs, we suspected that initial
bias could play a significant role, which was explored 
by running two experiments, with MCMC initialised either 
to the prior mean $\theta_0 = (0.08, 0.4, 1.5)$,
or to an independent sample from the prior. We experimented also
with further thinning,
by forming the IS2 correction based on every other accepted state.

Table \ref{table:gbm-results} summarises the results from 100
replications. The run time of the parallel DA algorithms was defined
as the maximum run time of all parallel chains. The parallelisation
speedup of IS2 was nearly optimal, as well as the further speedup from
thinning.  The SEs with prior mean initialisation were similar between
DA and IS2, but DA produced slightly biased results, leading to 9.5 to
13.0 times higher IREs. The efficiency gains of thinning were
inconclusive, indicating some gains for the hyperparameters $\theta$,
but not for the state variables. The smaller memory requirements and
smaller absolute time requirements for the thinning make it still
appealing. With prior sample initialisation, DA behaved sometimes
poorly, in contrast with IS2 which behaved similarly with both
initialisation strategies.

\begin{table}[t]
	\centering \caption{Results for the geometric Brownian motion
    experiment using 48 cores. IS2\textsuperscript{t} is with
    thinning, and time is in minutes. Ground truth (GT)
    was calculated with MCMC using exact latent inference.}
	\label{table:gbm-results} 
              \small
\begin{threeparttable}
	\begin{tabular}{lccccccccccc}
		\toprule
    &\multicolumn{6}{c}{Mean}
    & \multicolumn{5}{c}{IRE} \\
    \cmidrule(lr){2-7}
    \cmidrule(lr){8-12}
    {Init.}
    && \multicolumn{3}{c}{Prior mean}
    & \multicolumn{2}{c}{Prior sample} 
    & \multicolumn{3}{c}{Prior mean}
    & \multicolumn{2}{c}{Prior sample} \\
    \cmidrule(lr){3-5}
    \cmidrule(lr){6-7}
    \cmidrule(lr){8-10}
    \cmidrule(lr){11-12}
    & GT
		& DA &
    IS2 &  IS2\textsuperscript{t}
    & DA & IS2 
    & DA &
    IS2 &  IS2\textsuperscript{t}
    & DA & IS2 \\
		\midrule
		{Time} & --- &  12.3 & 3.4 & 1.9 & 14.0 & 3.3
    & 12.3 & 3.4 & 1.9 & 14.0 & 3.3\\
		$\nu$ 		   & 0.053 & 0.061 & 0.053 & 0.053 & 0.064 & 0.053
                         & 0.069 & 0.004 & 0.002 & 0.135 & 0.004 \\ 
		$\sigma_x$   & 0.253 & 0.278 & 0.253 & 0.253 & 0.251 & 0.252
                         & 0.576 & 0.029 & 0.019 & 0.336 & 0.022 \\
		$\sigma_y$	 & 1.058 & 1.054 & 1.058 & 1.058 & 1.083 & 1.058
                         & 0.088 & 0.020 & 0.014 & 1.010 & 0.022 \\
		$\Ssmhid_{1}$& 1.254 & 1.273 & 1.254 & 1.246 & 1.243 & 1.252 
                         & 0.670 & 0.109 & 0.119 & 0.805 & 0.103 \\
		$\Ssmhid_{50}$&2.960 & 2.953 & 2.966 & 2.935 & 20.773 & 2.971
                        & 12.605 & 1.880 & 2.074 & 4$\times 10^6$ & 2.308 \\
		\bottomrule
	\end{tabular}
  \end{threeparttable}
\end{table}


\subsection{Summary of results} 

In our experiments with Laplace approximations, IS1 and IS2 were
competitive alternatives to PM and DA, respectively, even without
parallelisation.  The differences were more
emphasised when the cost of correction (number of samples $m$) 
was higher. The $\psi$-APF was generally preferable
over SPDK, and BSF was the least efficient. 
The global approximation gave additional performance boost in our
experiments, without compromising the accuracy of the estimates, but we
stress that it may not be stable in all scenarios.

As noted earlier, the use of the guidelines by
\cite{doucet-pitt-deligiannidis-kohn} were not necessarily optimal in
our setting. We did an additional experiment to inspect how the choice
of $m$ affects the IRE with BSF in the Poisson model, and
with $\psi$-APF in the SV model. Figure \ref{fig:poisson-ire} shows the
average IREs as a function of $m$. Both IS2 and DA behaved similarly,
and IS2 was less than DA uniformly in terms of IRE. In
the Poisson-BSF case, the choice $m=200$ based on
\cite{doucet-pitt-deligiannidis-kohn} appears nearly optimal.
In case of the SV-$\psi$-APF, the optimal $m$ for DA and IS2 was around 
50, which was higher than $m=10$ based on
\cite{doucet-pitt-deligiannidis-kohn}. This is likely because of the
initial overhead cost of the approximation.

\begin{figure} 
\center
	\includegraphics[width=0.45\textwidth]{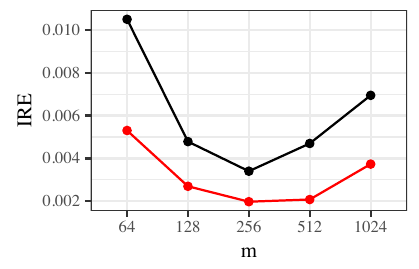}
	\includegraphics[width=0.45\textwidth]{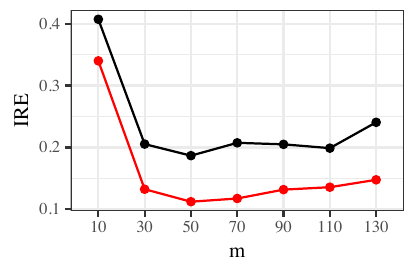}
	\caption{Average IRE of
    $(\sigma_\eta,\sigma_\xi,\Ssmhid_1,\Ssmhid_{100})$ 
    in the Poisson model with BSF (left) and
    of $(\phi,\sigma_\eta, \nu, \Ssmhid_1,\Ssmhid_{5473})$ 
    in the SV model with $\psi$-APF (right).
    DA is shown in black and IS2 in red.} 
	\label{fig:poisson-ire}
\end{figure}

The discretely observed geometric Brownian motion example 
illustrated the potential gains which may be achieved by using the
IS2 method in a parallel environment. While we admit that our
experiment is academic, we believe that it is indicative, and shows
that IS2 can provide substantial gains, and makes reliable
inference possible in a much shorter time than DA. 
The IS framework is less prone to issues 
with burn-in bias, which can be problematic with naive MCMC 
parallelisation based on independent chains.



\section{Discussion}
\label{sec:discussion} 

Our framework of IS type estimators based on approximate marginal MCMC
provides a general way to construct consistent estimators. Our
experiments demonstrate that the IS estimator can provide substantial
speedup relative to a delayed acceptance (DA) analogue with parallel
computing, and appears to be competitive to DA even without
parallelisation. We believe that IS is often better
than DA in practice, but it is not hard to find simple examples where DA can
be arbitrarily better than IS (and vice versa) \cite{franks-vihola}.
Our followup work \cite{franks-vihola} complements our findings by
theoretical considerations, with guaranteed asymptotic variance bounds
between IS and DA.

IS is known to be difficult to implement efficiently in high
dimensions, but this is not a major concern in most latent variable
models, where the hyperparameters are low-dimensional. It is also generally 
desirable to design the approximate marginal $\pi_a$ to have heavier tails
than the desired marginal $\pi_m$, in order to guarantee bounded
(expected) weights.
When the method of Section \ref{sec:latent} is used,
the IS weight may be directly regularised by
inflating the (estimated) approximate likelihood, for instance with
$L_a(\param)+\epsilon$, with some $\epsilon>0$. If the likelihood $L$
is bounded, then $w_u(\param) \propto
L(\param)/(L_a(\param)+\epsilon)$ is bounded as well. The latter
approach can be seen as an instance of defensive importance sampling
\cite{hesterberg}, using the prior as a proposal component. 
Other generic safe IS schemes may also be useful
\cite[cf.][]{owen-zhou}, and tempering may be applied for the
likelihood as well.

We used adaptive MCMC in order to construct the
marginal chain $(\Param_k)_{k\ge 1}$ in our experiments, and believe
that it is often useful
\cite[cf.][]{andrieu-thoms}. Note, however, that our theoretical results 
do not apply directly with adaptive MCMC, unless the adaptation is stopped
after suitable burn-in. Our results
could be extended to hold with continuous adaptation, under certain
technical conditions.
We detailed proper weighting schemes based on standard IS and particle
filters. We note that various PF variations, such as Rao-Blackwellisation,
alternative resampling strategies \cite{cappe-moulines-ryden},
or quasi-Monte Carlo updates \cite{gerber-chopin}, apply directly.
PFs can also be useful beyond the state space models context
\cite{delmoral-doucet-jasra}. Twisted particle filters
\cite{whiteley-lee,alaluhtala-whiteley-heine-piche} could also be
applied, instead of the $\psi$-APF. 

In a diffusion context, a proper weighting can be constructed based on
randomised multilevel Monte Carlo, as recently described in
 \cite{franks-jasra-law-vihola}, and ABC post-correction may be
seen as IS-type correction \citep{vihola-franks-abc}. 
Laplace approximations are available for a wider class
of Gaussian latent variable models beyond SSMs
\cite[cf.][]{rue-martino-chopin}.  Variational approximations
\cite{beal,jordan-gm} and expectation propagation \cite{minka} have
been found useful in a wide variety of models. In the SSM context,
various non-linear filters could also be applied
\cite[cf.][]{sarkka-filtering-smoothing}. Our framework provides a
generic validation mechanism for approximate inference, where assessment of bias 
is difficult in general \cite[cf.][]{ogden}.
Contrary to purely approximate inference, our approach only requires
moderately accurate approximations, as demonstrated by 
our experiment with global
Laplace approximations.
Debiased MCMC, as suggested in \cite{glynn-rhee} and further explored in
\cite{jacob-lindsten-schon,jacob-oleary-atchade}, may also lead to
useful proper weighting schemes.


\section*{Acknowledgements} 

The authors have been supported by an Academy of Finland research
fellowship (grants 274740, 284513, 312605 and 315619). We thank
Christophe Andrieu, Arnaud Doucet, Anthony Lee and Chris Sherlock
for many insightful remarks.


\appendix

\section{Properties of augmented Markov chains}
\label{app:aug} 

Throughout this section, suppose that $K$ is a Markov kernel on $\X$ and 
$Q$ is a kernel from $\X$ to a space $\Sp$. We consider here properties of an augmented
Markov kernel $\check{K}$ defined on $\X\times\Sp$ as
follows:
\begin{align*}
    \check{K}\big((x,s),\ud x'\times \ud s'\big)
    \defeq K(x,\ud x') Q(x',\ud s').
\end{align*}
We first state the following basic result.
\begin{lemma}
    \label{lem:aug-prop} 
          The properties of $K$ and the augmented chain $\check{K}$
          are related as follows:
          \begin{enumerate}[label=(\roman*)]
  \item \label{lem:aug-mim-irr}
    Let $\mathrm{irr}(K)$ denote the set of
    $\phi$-irreducibility measures of a Markov kernel 
    $K$, then
    \begin{itemize}
        \item $\varphi_{K}\in \mathrm{irr}(K) \implies
        \varphi_{\check{K}}(\ud x\times \ud s)
        \defeq \varphi_{K}(\ud x)Q(x,\ud s)
        \in \mathrm{irr}(\check{K})$, 
        \item $\varphi_{\check{K}} \in \mathrm{irr}(\check{K}) \implies
        \varphi_{K}(\ud x)\defeq 
        \varphi_{\check{K}}(\ud x \times \Sp) \in \mathrm{irr}(K)$.
    \end{itemize}
  \item \label{lem:aug-mim-mim}
    The implications in \eqref{lem:aug-mim-irr} hold
    when $\mathrm{irr}(K)$ and $\mathrm{irr}(\check{K})$ are replaced with
    sets of maximal irreducibility measures of $K$ and $\check{K}$,
    respectively.
            \item \label{item:aug-invar}
              The invariant probabilities of $K$ and $\check{K}$ satisfy:
              \begin{itemize}
                  \item $\nu K = \nu
                  \implies \check{\nu} \check{K} =
                  \check{\nu}\quad\text{where}\quad
                  \check{\nu}(\ud x\times\ud y) \defeq \nu(\ud x)
                  Q(x,\ud y)$, 
                  \item $\check{\nu} \check{K} = \check{\nu}
                  \implies \nu K =\nu\quad\text{where}\quad
                  \nu(\ud x)\defeq \check{\nu}(\ud x \times \Sp)$.
              \end{itemize}
              These implications hold also with invariance
              replaced by reversibility.
            \item \label{item:aug-harris} $K$ is Harris recurrent
              if and only if $\check{K}$ is Harris recurrent.
            \item \label{item:aug-iterates}
              Suppose $h:\X\times\Sp\to\R$ is measurable
              and such that $m_h(x) \defeq \int Q(x,\ud s) h(x,s)$
              and $(K^n m_h)(x)$ are well-defined. Then, for any $n\ge 1$
              and $s\in\Sp$, 
              $(\check{K}^n h)(x,s) = (K^n m_h)(x)$.
          \end{enumerate}
      \end{lemma}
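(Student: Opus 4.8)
The plan is to exploit the defining feature of $\check{K}$: from $(x,s)$ the next point $x'$ is drawn from $K(x,\uarg)$ alone, and the new $s'$ is re-drawn from $Q(x',\uarg)$ using only $x'$. Thus the $x$-marginal of the $\check{K}$-chain is exactly a $K$-chain, and conditionally on a whole trajectory $(X_n)_{n\ge 0}$ the coordinates $(S_n)_{n\ge 1}$ are independent with $S_n\sim Q(X_n,\uarg)$. I would establish \eqref{item:aug-iterates} first, since it feeds the other parts. For $n=1$, $(\check{K}h)(x,s)=\int K(x,\ud x')\int Q(x',\ud s')h(x',s')=\int K(x,\ud x')m_h(x')=(Km_h)(x)$, which is already independent of $s$; for the induction set $g\defeq\check{K}^n h$, so by hypothesis $g(x',s')=(K^n m_h)(x')$ depends only on $x'$, whence $m_g(x')=(K^n m_h)(x')$ and $(\check{K}^{n+1}h)(x,s)=(\check{K}g)(x,s)=(Km_g)(x)=(K^{n+1}m_h)(x)$.

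Parts \eqref{item:aug-invar} and \eqref{lem:aug-mim-irr} are then direct. For \eqref{item:aug-invar}, substituting $\check\nu(\ud x\times\ud s)=\nu(\ud x)Q(x,\ud s)$ into $\check\nu\check{K}$ and integrating out the old $s$ (a probability) yields $(\nu K)(\ud x')Q(x',\ud s')$, which equals $\check\nu$ exactly when $\nu K=\nu$; the converse marginalises the invariance identity over $s'$. Reversibility follows because both sides of the detailed-balance identity for $\check{K}$ carry the common factor $Q(x,\ud s)Q(x',\ud s')$ and reduce to detailed balance for $(\nu,K)$. For \eqref{lem:aug-mim-irr}, apply \eqref{item:aug-iterates} with $h=\mathbf 1_C$ to get $\check{K}^n((x,s),C)=\int K^n(x,\ud x')Q(x',C_{x'})$, where $C_{x'}\defeq\{s:(x',s)\in C\}$. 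If $\vpc(C)>0$ then $\vpp(B)>0$ for $B\defeq\{x':Q(x',C_{x'})>0\}$, and $K$-irreducibility gives $K^n(x,B)>0$ for some $n$; integrating the function $x'\mapsto Q(x',C_{x'})$, strictly positive on $B$, against $K^n(x,\uarg)$, which charges $B$, makes $\check{K}^n((x,s),C)>0$. The converse uses $C=A\times\Sp$, for which $m_{\mathbf 1_{A\times\Sp}}=\mathbf 1_A$ and hence $\check{K}^n((x,s),A\times\Sp)=K^n(x,A)$.

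For the maximal case \eqref{lem:aug-mim-mim}, I would invoke the resolvent characterisation: the maximal irreducibility measures form the equivalence class of $\vpp K_a$, where $K_a\defeq\sum_{n\ge 0}2^{-n-1}K^n$, for any irreducibility measure $\vpp$. By \eqref{item:aug-iterates}, $\check{K}_a((x,s),\uarg)=\int K_a(x,\ud x')Q(x',\uarg)$, so with $\vpc(\ud x\times\ud s)=\vpp(\ud x)Q(x,\ud s)$ the old $s$ integrates out and $\vpc\check{K}_a$ equals the lift of $\vpp K_a$ under $\mu\mapsto\check\mu$, $\check\mu(C)\defeq\int\mu(\ud x)Q(x,C_x)$. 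This lift preserves equivalence (a set $C$ is $\check\mu$-null iff $\{x:Q(x,C_x)>0\}$ is $\mu$-null), and marginalisation preserves it on rectangles $A\times\Sp$, so lifts of equivalent maximal measures are equivalent and conversely; both implications follow.

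Part \eqref{item:aug-harris} is where care is needed, and is the main obstacle. One direction is immediate: if $\check{K}$ is Harris recurrent then, for $A$ with $\vpp(A)>0$, the $\check{K}$-chain visits $A\times\Sp$ infinitely often from every start, and projecting onto the $x$-marginal (a $K$-chain) shows $K$ is Harris recurrent. The reverse is the crux. Given $C$ with $\vpc(C)>0$ we again get $\vpp(B)>0$ for $B=\{x:Q(x,C_x)>0\}$, but the hitting probabilities $Q(x,C_x)$ are only positive, not bounded below, so visiting $B$ infinitely often need not force visiting $C$. I would resolve this by partitioning $B=\bigcup_{j\ge 1}B_j$ with $B_j\defeq\{x:Q(x,C_x)>1/j\}$; since $\vpp(B)>0$, some $\vpp(B_j)>0$, and Harris recurrence of $K$ makes $X_n\in B_j$ infinitely often almost surely. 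Conditionally on $(X_n)_n$ the events $\{S_n\in C_{X_n}\}$ are independent with probabilities exceeding $1/j$ along this subsequence, so their conditional probabilities sum to infinity, and the conditional second Borel--Cantelli lemma gives $(X_n,S_n)\in C$ infinitely often almost surely, establishing Harris recurrence of $\check{K}$. The partition into level sets, turning mere positivity into a uniform lower bound along a recurrent subsequence, is the key step.
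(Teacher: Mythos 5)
Your proof is correct and follows essentially the same route as the paper: the iterate identity $\check{K}^n((x,s),\ud x'\times\ud s')=K^n(x,\ud x')Q(x',\ud s')$ does the work for the routine parts, and your level-set decomposition $B_j=\{x\given Q(x,C_x)>1/j\}$ combined with the conditional second Borel--Cantelli lemma is precisely the paper's argument with $C(\epsilon)=\{x\given Q(x,C_x)>\epsilon\}$ and independent Bernoulli trials along the hitting times of that set. The only parts you elaborate beyond the paper (which dismisses (i)--(iii) as straightforward) are the resolvent argument for maximal irreducibility measures and the induction for the iterates, both of which are sound.
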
 
\begin{proof} 
The inheritance of irreducibility measures \ref{lem:aug-mim-irr},
maximal irreducibility measures \ref{lem:aug-mim-mim}, invariant
measures \ref{item:aug-invar}, and reversibility is straightforward.

          For Harris recurrence \ref{item:aug-harris},
          let the probability $\phi_K$ be a maximal
          irreducibility measure for $K$, then
          $\phi_{\check{K}}(\ud x\times
          \ud s) \defeq \phi_K(\ud x) Q(x,\ud s)$
          is the maximal irreducibility measure for $\check{K}$.
          Let $C\in \cX\otimes\cSp$ with
          $\phi_{\check{K}}(C)>0$, and choose $\epsilon>0$ such that
          $\phi_K(C(\epsilon))>0$, where $C(\epsilon) \defeq \{x\in\X\given
            Q(x,C_x)>\epsilon\}$ with $C_x \defeq \{s\in \Sp\given (x,s)\in C\}$.
          Notice that
          \begin{align*}
              \P\bigg(\sum_{k=1}^\infty \charfun{(X_k,S_k)\in C} =\infty \bigg)
              &\ge \P\bigg(\sum_{k=1}^\infty \charfun{S_{\tau_k}\in
                C_{X_{\tau_k}}}=\infty \bigg),
          \end{align*}
          where $\tau_k$ are the hitting times of $(X_k)$ to $C(\epsilon)$.
          This concludes the proof because
          $\charfun{S_{\tau_k} \in C_{X_{\tau_k}}}$ are independent
          Bernoulli random variables with success probability at least
          $\epsilon$.  The converse statement is similar.

          For \ref{item:aug-iterates}, it is enough to notice that
          for any $(x,s)\in\X\times\Sp$ and $n\ge 1$, it holds that
          $\check{K}^n\big((x,s),\ud x'\times\ud s'\big) = K^n(x,\ud
          x')Q(x',\ud s)$.
      \end{proof} 

We next state the following generic results about
the asymptotic variance and the central limit theorem
of an augmented Markov chain.
For $h\in L_0^2(\check{\nu})$, we denote as above
the conditional mean $m_h(x) \defeq \int Q(x,\ud s)h(x,s)$
and the conditional variance
$v_h(x) \defeq \int Q(x,\ud s)h^2(x,s) - m_h^2(x)$.
\begin{lemma}
    \label{lem:aug-asvar} 
          Let $h\in L_0^2(\check{\nu})$.
          The asymptotic variance of an augmented Markov chain satisfies
          \[
              \Var(h,\check{K}) = \Var(m_h, K) + \nu(v_h),
          \]
          whenever $\Var(m_h, K)$ is well-defined.
      \end{lemma}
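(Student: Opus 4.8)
The plan is to work directly from Definition \ref{def:asvar}, computing the variance of the partial sums of the stationary augmented chain and comparing it term-by-term with the analogous sums for $m_h$ along the base chain; the comparison turns out to be exact for every finite $n$, so no delicate interchange of limits is needed. First I would set up the stationary augmented chain $(X_k,S_k)_{k\ge 1}$ with $(X_1,S_1)\sim\check{\nu}$ and transitions $\check{K}$. From the product form $\check{K}\big((x,s),\ud x'\times\ud s'\big)=K(x,\ud x')Q(x',\ud s')$, the marginal sequence $(X_k)_{k\ge 1}$ is a stationary Markov chain with kernel $K$ and initial law $\nu$, and, crucially, conditionally on the whole trajectory $(X_\ell)_{\ell\ge 1}$ the variables $(S_k)$ are independent with $S_k\sim Q(X_k,\uarg)$. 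I would also record that $h\in L_0^2(\check{\nu})$ gives $\nu(m_h)=\check{\nu}(h)=0$, and, by Jensen's inequality, $\nu(m_h^2)\le\check{\nu}(h^2)<\infty$, so that $m_h\in L_0^2(\nu)$ and $\nu(v_h)=\check{\nu}(h^2)-\nu(m_h^2)$ is finite and nonnegative.

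Writing $H_k\defeq h(X_k,S_k)$ and $G_k\defeq m_h(X_k)$, both of which are centred, I would expand
\[
    \E\Big(\sum_{k=1}^n H_k\Big)^2 = \sum_{k=1}^n \E[H_k^2]
    + 2\sum_{1\le j<k\le n} \E[H_j H_k],
\]
and likewise for the $G_k$. The conditional independence of the $S$-variables gives, for $j\neq k$, the identity $\E\big[H_j H_k\bigmid (X_\ell)_{\ell\ge 1}\big]=m_h(X_j)m_h(X_k)$, hence $\E[H_j H_k]=\E[G_j G_k]$, so that all off-diagonal terms in the two expansions coincide. On the diagonal, $\E[H_k^2]=\check{\nu}(h^2)=\nu(m_h^2)+\nu(v_h)$ whereas $\E[G_k^2]=\nu(m_h^2)$, a difference of $\nu(v_h)$ per term.

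Subtracting the two expansions therefore leaves exactly
\[
    \E\Big(\sum_{k=1}^n H_k\Big)^2 - \E\Big(\sum_{k=1}^n G_k\Big)^2
    = n\,\nu(v_h).
\]
Dividing by $n$ and letting $n\to\infty$, the $G$-term converges to $\Var(m_h,K)$ by Definition \ref{def:asvar} (the limit existing by hypothesis), while the remaining constant contributes $\nu(v_h)$. This yields $\Var(h,\check{K})=\Var(m_h,K)+\nu(v_h)$ and, in particular, shows that the limit defining $\Var(h,\check{K})$ exists whenever $\Var(m_h,K)$ does.

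The only point requiring care is the conditional independence step, namely that in the stationary augmented chain the $S_k$ are conditionally independent given the full base trajectory with $S_k\sim Q(X_k,\uarg)$. This is immediate from the product structure of $\check{K}$ — each $S_{k+1}$ is drawn from $Q(X_{k+1},\uarg)$ and is conditionally independent of the past given $X_{k+1}$ — together with the initialisation $S_1\sim Q(X_1,\uarg)$ forced by $(X_1,S_1)\sim\check{\nu}$; it may alternatively be read off from the iterate formula in Lemma \ref{lem:aug-prop}\eqref{item:aug-iterates}. Everything else is a routine bilinear expansion, with the subtraction exact at each finite $n$, so no tail estimates or summability conditions on the covariances are needed.
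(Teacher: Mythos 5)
Your proof is correct and follows essentially the same route as the paper's: both reduce to the exact finite-$n$ identity that the off-diagonal covariances of $h(X_k,S_k)$ coincide with those of $m_h(X_k)$ while the diagonal terms differ by $\nu(v_h)$, and then pass to the limit. The only cosmetic difference is that you justify the cross-term identity $\E[H_jH_k]=\E[G_jG_k]$ via the conditional independence of the $S_k$ given the base trajectory, whereas the paper invokes the iterate formula of Lemma \ref{lem:aug-prop} \eqref{item:aug-iterates}; these are the same fact in two guises.
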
 
\begin{proof} 
Let $(X_k,S_k)$ be a stationary Markov chain with transition
probability $\check{K}$.
\begin{align*}
    \Var\bigg(\frac{1}{\sqrt{n}} \sum_{k=1}^n h(X_k,S_k)\bigg)
    &= \check{\nu}(h^2) + \frac{2}{n} \sum_{i=1}^{n-1}
    \sum_{\ell=1}^{n-i} \E[h(X_0,S_0)h(X_\ell,S_\ell)],
\end{align*}
by stationarity. For $\ell\ge 1$, Lemma \ref{lem:aug-prop}
\ref{item:aug-iterates} implies
\begin{align*}
    \E[h(X_0,S_0)h(X_\ell,S_\ell)]
    &= \E[m_h(X_0) m_h(X_\ell)].
\end{align*}
We deduce for any $n\ge 1$
\[
\Var\bigg(\frac{1}{\sqrt{n}} \sum_{k=1}^n h(X_k,S_k)\bigg)
    = \Var\bigg(\frac{1}{\sqrt{n}} \sum_{k=1}^n m_h(X_k) \bigg)
    + \nu(v_h),
\]
because
$\check{\nu}(h^2) - \nu(m_h^2) = \nu(v_h)$. The claim
follows by taking limit $n\to\infty$.
\end{proof} 

\begin{lemma}
    \label{lem:aug-clt} 
Suppose $K$ is Harris ergodic and aperiodic, and $h\in L_0^2(\check{\nu})$.
The CLT
\begin{equation}
    Z_n \defeq 
          \frac{1}{\sqrt{n}} \sum_{k=1}^n h(X_k,S_k)
          \xrightarrow{n\to\infty} N\big(0,\sigma_h^2\big),\qquad 
          \text{where}\qquad
          \sigma_h^2 \defeq \Var(m_h, K) + \nu(v_h)
          \label{eq:aug-clt}
\end{equation}
holds for every initial distribution, if one of the following holds:
\begin{enumerate}[label=(\roman*)]
    \item
      \label{item:kipnis-varadhan}
      $K$ is reversible and $\Var(m_h, K)<\infty$.
    \item
      \label{item:maxwell-woodroofe}
      $\sum_{n=1}^\infty n^{-3/2} \big\{\nu\big( \big[
        \sum_{i=0}^{n-1}
      K^i m_h\big]^2\big)\big\}^{1/2}<\infty$.
    \item
      \label{item:poisson-clt}
      There exists $g\in L^2(\nu)$ which solves the Poisson equation
      $g - Kg = m_h$.
      In this case, $\Var(m_h,K) = \nu(g^2 - (Kg)^2)$.
\end{enumerate}
\end{lemma}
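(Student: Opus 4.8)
The plan is to split $h$ into its conditional mean and fluctuation, treat each piece with a separate CLT, and then obtain asymptotic independence by conditioning on the base trajectory. I would write $h(X_k,S_k) = m_h(X_k) + D_k$ with $D_k \defeq h(X_k,S_k) - m_h(X_k)$. By Jensen's inequality $m_h^2(x)\le \int Q(x,\ud s)h^2(x,s)$, so $\nu(m_h^2)\le\check{\nu}(h^2)<\infty$, and since $\check{\nu}(h)=0$ we have $m_h\in L_0^2(\nu)$. The decisive structural fact, read off from $\check{K}((x,s),\ud x'\times\ud s')=K(x,\ud x')Q(x',\ud s')$, is that conditionally on the whole base trajectory $\mathcal{X}\defeq\sigma((X_j)_{j\ge1})$ the variables $(S_k)$ are independent with $S_k\sim Q(X_k,\uarg)$; hence the $D_k$ are conditionally independent, mean zero, with conditional variance $\E[D_k^2\mid X_k]=v_h(X_k)$.

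For the base-chain part $n^{-1/2}\sum_{k=1}^n m_h(X_k)$ I would invoke the appropriate scalar CLT for the Harris ergodic, aperiodic kernel $K$ applied to $m_h\in L_0^2(\nu)$: Kipnis--Varadhan \cite{kipnis-varadhan} in case \eqref{item:kipnis-varadhan}, Maxwell--Woodroofe \cite{maxwell-woodroofe} in case \eqref{item:maxwell-woodroofe}, and Gordin's martingale approximation in case \eqref{item:poisson-clt}. Each yields $n^{-1/2}\sum_{k=1}^n m_h(X_k)\convd N(0,\Var(m_h,K))$ for every initial distribution. In case \eqref{item:poisson-clt} the Poisson solution supplies the explicit identification: with $M_k\defeq g(X_k)-(Kg)(X_{k-1})$, a base-filtration martingale difference, telescoping gives $\sum_{k=1}^n m_h(X_k)=g(X_1)-g(X_{n+1})+\sum_{k=2}^{n+1}M_k$, the boundary terms being $o_P(\sqrt n)$ by tightness of $g(X_{n+1})$, and $\E_\nu[M_k^2]=\nu(Kg^2)-\nu((Kg)^2)=\nu(g^2-(Kg)^2)$, which identifies $\Var(m_h,K)$.

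For the fluctuation part I would apply the Lindeberg--Feller CLT to the conditionally independent array $(D_k/\sqrt n)_{k\le n}$ given $\mathcal{X}$. The conditional variances satisfy $n^{-1}\sum_{k=1}^n v_h(X_k)\to\nu(v_h)$ almost surely by the ergodic theorem, valid since $\nu(v_h)\le\check{\nu}(h^2)<\infty$. For the conditional Lindeberg condition I would set $\Lambda_c(x)\defeq\int Q(x,\ud s)(h(x,s)-m_h(x))^2\charfun{|h(x,s)-m_h(x)|>c}$, note $\Lambda_c\le v_h$ and $\nu(\Lambda_c)\downarrow0$ as $c\to\infty$, and bound, for fixed $c$ once $\epsilon\sqrt n>c$, $n^{-1}\sum_{k=1}^n\Lambda_{\epsilon\sqrt n}(X_k)\le n^{-1}\sum_{k=1}^n\Lambda_c(X_k)\to\nu(\Lambda_c)$; letting $n\to\infty$ then $c\to\infty$ drives this to zero almost surely. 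Hence, for $\nu$-a.e.\ trajectory, the conditional characteristic function $\psi_n\defeq\E[\exp(itn^{-1/2}\sum_{k=1}^nD_k)\mid\mathcal{X}]\to\exp(-t^2\nu(v_h)/2)$.

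Finally I would combine the two limits. Writing $\phi_n\defeq\exp(itn^{-1/2}\sum_{k=1}^nm_h(X_k))$, which is $\mathcal{X}$-measurable, the tower property gives $\E[\exp(itn^{-1/2}\sum_{k=1}^nh(X_k,S_k))]=\E[\phi_n\psi_n]$. Splitting $\E[\phi_n\psi_n]=\exp(-t^2\nu(v_h)/2)\,\E[\phi_n]+\E[\phi_n(\psi_n-\exp(-t^2\nu(v_h)/2))]$, the first term converges to $\exp(-t^2\Var(m_h,K)/2)\exp(-t^2\nu(v_h)/2)$ by the base-chain CLT, while the second vanishes since $|\phi_n|\le1$ and $\psi_n\to\exp(-t^2\nu(v_h)/2)$ in $L^1$ by bounded convergence. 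This identifies the limit as $N(0,\Var(m_h,K)+\nu(v_h))$, consistent with Lemma \ref{lem:aug-asvar}. The main obstacle I anticipate is the almost-sure conditional Lindeberg step together with making the conditioning argument rigorous for arbitrary, non-stationary initial laws; Harris recurrence and aperiodicity are precisely what secure both the base-chain CLT and the almost-sure ergodic averages for every starting distribution.
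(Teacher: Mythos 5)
Your argument is correct, but it follows a genuinely different route from the paper's. The paper never splits $h$: it treats $(X_k,S_k)$ as a single Markov chain with kernel $\check{K}$ and applies the scalar Markov-chain CLTs directly to $h$ on the augmented space, using Lemma \ref{lem:aug-prop} to transfer the hypotheses — reversibility and Harris recurrence pass to $\check{K}$, the identity $\check{K}^i h = (K^i m_h)$ for $i\ge 1$ converts the Maxwell--Woodroofe condition on $(\check{K},h)$ into the stated condition on $(K,m_h)$ plus a harmless $\nu(v_h)$ term, and in case \eqref{item:poisson-clt} an explicit augmented Poisson solution $\check{g}(x,s)=g(x)+h(x,s)-m_h(x)$ is exhibited so that Meyn--Tweedie applies verbatim; the variance $\Var(m_h,K)+\nu(v_h)$ then comes from Lemma \ref{lem:aug-asvar}. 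You instead decompose $h(X_k,S_k)=m_h(X_k)+D_k$, run the base-chain CLT on $m_h$ under $K$, and handle the fluctuations by a conditional Lindeberg--Feller argument given the whole base trajectory, gluing the two limits with the characteristic-function factorisation $\E[\phi_n\psi_n]$. The paper's route is shorter and reuses the three off-the-shelf CLTs wholesale, at the cost of the structural bookkeeping in Lemma \ref{lem:aug-prop}; yours is more elementary and self-contained on the noise part, makes the additivity of the asymptotic variance and the asymptotic independence of signal and noise transparent, and avoids having to re-verify each CLT's hypothesis on the augmented chain. Two small points you already half-acknowledge and should make explicit: under an arbitrary initial law on $\X\times\Sp$ the pair $(X_1,S_1)$ need not satisfy $S_1\sim Q(X_1,\uarg)$, so $D_1$ should simply be discarded as a single $o_P(\sqrt{n})$ term; and the almost-sure conditional Lindeberg bound should be taken along a countable sequence of truncation levels $c$ so that the ergodic-theorem null sets can be collected.
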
 
\begin{proof} 
The case \ref{item:kipnis-varadhan}
follows from Lemma \ref{lem:aug-asvar} and the Kipnis-Varadhan CLT
\citep{kipnis-varadhan}, which implies \eqref{eq:aug-clt} for the
initial distribution $\check{\nu}$.
Because the jump chain is Harris by Lemma \ref{lem:aug-prop}
\ref{item:aug-harris}, \eqref{eq:aug-clt} holds for every initial distribution
\cite[Corollary 21.1.6]{douc-moulines-priouret-soulier}.

The case \ref{item:maxwell-woodroofe} follows similarly, but
relies on a result due to Maxwell and Woodroofe
\cite{maxwell-woodroofe}, which implies \eqref{eq:aug-clt}
for the initial distribution $\check{\nu}$, if
$\sum_{n=1}^\infty 
n^{-3/2} \big\{ \check{\nu}\big(\big[\sum_{i=0}^{n-1}
    \check{K}^i h \big]^2\big)\big\}^{1/2}<\infty$.
Notice that for $n\ge 2$ by Lemma \ref{lem:aug-prop}
\ref{item:aug-iterates},
\begin{align*}
\check{\nu}\bigg( \bigg[
        \sum_{i=0}^{n-1}
      \check{K}^i h\bigg]^2\bigg)
    &=
    \check{\nu}\bigg( \bigg[ (h - m_h) +
        \sum_{i=0}^{n-1}
      K^i m_h\bigg]^2\bigg)
    = \nu(v_h) +
     \nu\bigg( \bigg[
        \sum_{i=0}^{n-1}
      K^i m_h\bigg]^2\bigg).
\end{align*}
Because $(a+b)^{1/2} \le a^{1/2}+b^{1/2}$ for $a,b\ge 0$
and $\nu(v_h)<\infty$, the claim follows.

For \ref{item:poisson-clt}, we first observe that
\[
    \check{g} - \check{K}\check{g} = h
    \qquad\text{where}\qquad
    \check{g}(x,s) \defeq g(x) + h(x,s)-m_h(x)\in L^2(\check{\nu}).
\]
Indeed, it is clear that $\check{g}\in L^2(\check{\nu})$ and because
$(\check{K} \check{g})(x,s) = (Kg)(x)$,
\[
    \check{g}(x,s) - (\check{K} \check{g})(x,s)
    = g(x) - (Kg)(x) + h(x,s) - m_h(x) = h(x,s).
\]
The CLT and asymptotic variance follow from
\cite[Theorem 17.4.4]{meyn-tweedie}.
\end{proof} 


\section{Proofs about CLT and asymptotic variance}
\label{app:clt} 

\begin{proof}[Proof of Theorem \ref{thm:proper-clt-rev}] 
Whenever $\sum_{i=1}^n \proper_i(\unitfun) >0$, we may write
\begin{equation}
    \sqrt{n}\big[E_n(f)-\jointdens(f)\big]
    = \frac{n^{-1/2} \sum_{k=1}^n \proper_k(\bar{f}) }{
      n^{-1} \sum_{j=1}^n \proper_j(\unitfun)}.
    \label{eq:unbiased-is-ratio}
\end{equation}
The denominator converges to $c_w>0$ almost surely, so by Slutsky's
lemma, it is enough to show that the numerator converges in distribution to
$N\big(0,\Var(\nu_{\bar{f}}, P) + \approxdens(v)\big)$.
This follows from Lemma \ref{lem:aug-clt} \ref{item:kipnis-varadhan}
and \ref{item:maxwell-woodroofe}, under conditions \ref{item:proper-kv}
and \ref{item:proper-mw}, respectively.
\end{proof}

\begin{proof}[Proof of Corollary \ref{cor:is-ci}] 
If $\sqrt{n}\big[E_n(f)-\jointdens(f)\big]\to
N(0,\sigma_f^2)$, Slutsky's lemma applied to 
\eqref{eq:unbiased-is-ratio} implies that $n^{-1/2} \sum_{k=1}^n
\proper_k(\bar{f}) \to N(0, c_w^2 \sigma_f^2)$. Consistency of the
integrated autocovariance estimator implies 
$\hat{a}_n \to \sum_{k=-\infty}^\infty \gamma_k 
= c_w^2 \sigma_f^2$, and therefore $\hat{s}_n \to \sigma_f^2$.
The conclusion follows by a standard continuity argument.
\end{proof}

\begin{proof}[Proof of Theorem \ref{thm:importance-var}] 
For $n$ large enough such that $\sum_{j=1}^n \proper_j(\unitfun)>0$,
we may write
\[
        n \hat{v}_n
   = \frac{\frac{1}{n} \sum_{k=1}^n \big(\proper_k(f)
     - \proper_k(\unitfun) E_n(f) \big)^2}{\big(\frac{1}{n}\sum_{j=1}^n
     \proper_j(\unitfun)\big)^2}.
\]
The denominator converges to $c_w^2$, and the
numerator can be written as
\[
    \frac{1}{n} \sum_{k=1}^n \big[\proper_k^2(\bar{f})
    + \proper_k^2(\unitfun)D_n^2 + 2
    \proper_k(\unitfun)\proper_k(\bar{f})D_n \big]
    \qquad\text{with}\qquad
    D_n \defeq \jointdens(f)-E_n(f).
\]
  The term $n^{-1} \sum_{k=1}^n \proper_k^2(\bar{f})\to
\approxdens(v+\meanfunc_{\bar{f}}^2)$, and because $D_n\to 0$,
the remainder terms tend to zero.
\end{proof} 


\section{Proofs about jump chain estimators}
\label{app:jump} 

In this section, $K$ is assumed to be a Markov kernel
on $\X$ which is non-degenerate, that is, 
$a(x) \defeq K(x,\X\setminus\{x\}) >0$ for all $x\in\X$.
The following proposition complements
\cite[Lemma 1]{douc-robert} and \cite{deligiannidis-lee}, which are
stated for more specific cases.
\begin{proposition}
    \label{prop:jump-properties} 
          Suppose $(X_k)$ is a Markov chain with
          kernel $K$ and
          $(\tilde{X}_k)$ the corresponding jump chain 
          with holding
          times $(N_k)$ (Definition \ref{def:jump}).
          Then, the following hold:
          \begin{enumerate}[label=(\roman*)]
            \item
              \label{item:jump-trans}
              $(\tilde{X}_k)$ is Markov with transition
              kernel
              $\tilde{K}(x,A) = K(x,A\setminus\{x\})/a(x)$.
            \item
              \label{item:jump-joint-trans}
              The holding times $(N_k)$ are conditionally independent
              given $(\tilde{X}_k)$, and each $N_k$ has geometric
              distribution with parameter $a(\tilde{X}_k)$.
            \item
              \label{item:jump-invar}
              If $K$ admits invariant probability $\nu(\ud x)$, then
              $\tilde{K}$ admits invariant probability
              $
                  \tilde{\nu}(\ud x) \defeq \nu(\ud x) a(x)/\nu(a).
              $
              In addition, if $K$ is reversible with respect to $\nu$,
              then $\tilde{K}$ is reversible with respect to $\tilde{\nu}$.
            \item
              \label{item:psi-irreducibility} $(X_k)$ is $\psi$-irreducible
              if and only if $(\tilde{X}_k)$ is
              $\psi$-irreducible, with the same maximal irreducibility measure.
            \item \label{item:harris-recurrence}
              $(X_k)$ is Harris recurrent if and only if
              $(\tilde{X}_k)$ is Harris recurrent.
          \end{enumerate}
      \end{proposition}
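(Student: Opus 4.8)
The plan is to take the five claims in order, treating parts \eqref{item:jump-trans} and \eqref{item:jump-joint-trans} as the analytic core and deducing the other three from them. The starting point for \eqref{item:jump-trans}--\eqref{item:jump-joint-trans} is the elementary one-step decomposition
\[
    K(x,\ud y) = \big(1-a(x)\big)\delta_x(\ud y) + a(x)\,\tilde{K}(x,\ud y),
\]
valid because $a(x)=K(x,\X\setminus\{x\})>0$ and $\tilde{K}(x,\uarg)=K(x,\uarg\setminus\{x\})/a(x)$ is a probability kernel carried off the diagonal. First I would compute, for a single block beginning at a state $x$, the joint law that the chain holds at $x$ for exactly $n\ge 1$ steps and then moves into $\ud y$ with $y\neq x$: by the Markov property this equals $(1-a(x))^{n-1}K(x,\ud y\setminus\{x\})=\big[(1-a(x))^{n-1}a(x)\big]\,\tilde{K}(x,\ud y)$. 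The product form of this expression already yields both claims at one jump --- the holding time is geometric with parameter $a(x)$, the next state is distributed as $\tilde{K}(x,\uarg)$, and the two are conditionally independent given $x$. I would then iterate at the successive (almost surely finite) jump times $\bar{N}_k+1$, invoking the strong Markov property to glue the blocks together; an induction on $k$ gives that $(\tilde{\Latent}_k)$ is Markov with kernel $\tilde{K}$ and that, conditionally on $(\tilde{\Latent}_k)$, the holding times $(N_k)$ are independent with each $N_k$ geometric of parameter $a(\tilde{\Latent}_k)$.

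For \eqref{item:jump-invar} I would verify invariance by direct calculation. Writing $\tilde{\nu}(\ud x)=\nu(\ud x)a(x)/\nu(a)$ and using $K(x,A\setminus\{x\})=K(x,A)-\mathbf 1(x\in A)K(x,\{x\})$ together with $K(x,\{x\})=1-a(x)$, for any measurable $A$,
\[
    \nu(a)\,(\tilde{\nu}\tilde{K})(A)
    = \int \nu(\ud x)\,K(x,A\setminus\{x\})
    = (\nu K)(A) - \int_A \nu(\ud x)\big(1-a(x)\big)
    = \int_A \nu(\ud x)\,a(x),
\]
where the last step uses $\nu K=\nu$; this is exactly $\nu(a)\,\tilde{\nu}(A)$. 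Reversibility is even simpler: since $\tilde{K}(x,\{x\})=0$, off the diagonal $\tilde{\nu}(\ud x)\tilde{K}(x,\ud y)=\nu(\ud x)K(x,\ud y)/\nu(a)$, which is symmetric in $(x,y)$ precisely when $\nu(\ud x)K(x,\ud y)$ is, i.e.\ when $K$ is $\nu$-reversible.

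Claims \eqref{item:psi-irreducibility} and \eqref{item:harris-recurrence} I would obtain from the observation that the two chains visit exactly the same states: every $\Latent_n$ equals some $\tilde{\Latent}_j$, and conversely each $\tilde{\Latent}_k$ occurs as some $\Latent_n$, since the jump chain only deletes consecutive repeats. For \eqref{item:psi-irreducibility} this means a set $A$ is accessible from $x$ under $K$ if and only if it is accessible under $\tilde{K}$: a positive-probability $K$-path into $A$ collapses to a $\tilde{K}$-path into $A$, while each $\tilde{K}$-jump unfolds into finitely many $K$-steps (using $\tilde{K}(x,B)>0\Rightarrow K(x,B\setminus\{x\})>0$). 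Since the accessibility relation determines $\psi$-irreducibility and its maximal irreducibility measure up to equivalence, these coincide for $K$ and $\tilde{K}$. For \eqref{item:harris-recurrence}, given \eqref{item:psi-irreducibility} the chains share a maximal irreducibility measure $\psi$, and the occupation counts satisfy $\sum_n \mathbf 1(\Latent_n\in A)=\sum_k N_k\,\mathbf 1(\tilde{\Latent}_k\in A)$; because each $N_k$ is almost surely finite (as $a>0$), on the full-probability event that all holding times are finite the left side is infinite iff $\sum_k \mathbf 1(\tilde{\Latent}_k\in A)=\infty$, so $\Latent$ visits $A$ infinitely often almost surely iff $\tilde{\Latent}$ does, which is Harris recurrence of one chain iff of the other. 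I expect the main obstacle to be the rigorous bookkeeping in \eqref{item:jump-trans}--\eqref{item:jump-joint-trans}: arranging the block decomposition so that the strong Markov property applies cleanly at the random jump times $\bar{N}_k+1$, and upgrading the single-block factorisation to the full conditional-independence statement over all $k$.
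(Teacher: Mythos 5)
Your proposal is correct and follows essentially the same route as the paper: the same block/conditioning decomposition for \eqref{item:jump-trans}--\eqref{item:jump-joint-trans}, the identical invariance computation for \eqref{item:jump-invar}, and the same ``both chains visit the same states'' argument for \eqref{item:psi-irreducibility}--\eqref{item:harris-recurrence}. The only difference is that you spell out details (the geometric holding times and the off-diagonal symmetry argument for reversibility) that the paper delegates to a citation of Douc and Robert.
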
 
\begin{proof} 
          The expression of the kernel \ref{item:jump-trans}
          is due to straightforward conditioning, and \ref{item:jump-joint-trans}
          was observed in \cite{douc-robert}. The invariance \ref{item:jump-invar}
          follows from
          \begin{align*}
              \int \tilde{\nu}(\ud x) \tilde{K}(x,A)
              &= \frac{1}{\nu(a)} \int \nu(\ud x) \big[K(x,A) -
              \charfun{x\in A} K(x,\{x\})\big] \\
              &
              = \frac{1}{\nu(a)}\bigg[ \nu(A) - \int_A \nu(\ud
              x)\big(1-a(x)\big)\bigg]
              = \tilde{\nu}(A),
          \end{align*}
          and the reversibility is shown in
          \cite{douc-robert}.
          For \ref{item:psi-irreducibility} it is sufficient to observe that
          \[
              \forall x\in \X: \sum_{n\ge 1} \P_x(X_n\in A)>0
              \iff
              \forall x\in \X: \sum_{n\ge 1} \P_x(\tilde{X}_n\in A)>0,
          \]
          where $\P_x(\uarg) = \P(\uarg\mid X_0=\tilde{X}_0=x)$,
          which holds because the sets $\{X_k\}_{k\ge 0}$ and
          $\{\tilde{X}_k\}_{k\ge 0}$ coincide. Similarly,
          \ref{item:harris-recurrence} holds because
          \[
              \forall x\in \X:
              \P_x(\eta_A =\infty) = 1 \iff
              \forall x\in\X:
              \P_x(\tilde{\eta}_A =\infty) = 1,
          \]
          where $\eta_A \defeq \sum_{k=1}^\infty \charfun{X_k\in A}$ and
          $\tilde{\eta}_A \defeq \sum_{k=1}^\infty
          \charfun{\smash{\tilde{X}_k\in A}}$.
      \end{proof} 

We now state results about the asymptotic variance of the jump chain,
complementing the reversible case characterisation of 
\cite{deligiannidis-lee,doucet-pitt-deligiannidis-kohn}.
\begin{proposition}
    \label{prop:jump-variance} 
          Let $f\in L^2_0(\tilde{\nu})$.
          With the notation of Proposition \ref{prop:jump-properties},
          \begin{enumerate}[label=(\roman*)]
              \item
                \label{item:deligiannidis-lee}
                If $K$ is
                reversible, then $\Var(f,\tilde{K})<\infty$ iff
                $af\in L^2(\nu)$ and
                $\Var(af,K)<\infty$, and
          \begin{equation}
              \Var(f,\tilde{K}) = \nu(a)^{-1}
              \big[ \Var(a f, K) - \nu\big(a(1-a) f^2\big)\big].
              \label{eq:jump-variance}
          \end{equation}
              \item
                \label{item:jump-poisson}
                If there exists
          a function $g\in L^2(\nu)$ which satisfies $g - Kg = af$,
          then $\Var(f,\tilde{K})<\infty$, $\Var(af,K)<\infty$,
          \eqref{eq:jump-variance} holds,
          $g - \tilde{K} g = f$ and $g\in L^2(\tilde{\nu})$.
          \end{enumerate}
      \end{proposition}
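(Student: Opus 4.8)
The plan is to reduce both parts to the single pointwise operator identity
$I-K = A(I-\tilde K)$, where $A$ denotes multiplication by $a$. This follows from the decomposition $K(x,\ud y) = a(x)\tilde K(x,\ud y) + (1-a(x))\delta_x(\ud y)$, which gives $((I-K)\phi)(x) = a(x)\,((I-\tilde K)\phi)(x)$ for every $\phi$. Because $a>0$ everywhere by non-degeneracy, this identity immediately settles the Poisson-equation assertions in \eqref{item:jump-poisson}: if $g-Kg = af$ then $a\,(g-\tilde K g) = af$, hence $g-\tilde K g = f$; and since $a\le 1$, the solution satisfies $\tilde\nu(g^2) = \nu(ag^2)/\nu(a) \le \nu(g^2)/\nu(a)<\infty$, so $g\in L^2(\tilde\nu)$.

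For \eqref{item:jump-poisson} I would then invoke the Poisson-equation expression for the asymptotic variance, $\Var(h,P) = \mu(g^2)-\mu((Pg)^2)$ whenever $g-Pg=h$ with $g\in L^2(\mu)$ (the martingale decomposition underlying Lemma \ref{lem:aug-clt}\,\eqref{item:poisson-clt}). Applying it to $K$ using $Kg = g-af$ gives $\Var(af,K) = 2\nu(afg)-\nu(a^2f^2)$, and to $\tilde K$ using $\tilde K g = g-f$ gives $\Var(f,\tilde K) = 2\tilde\nu(fg)-\tilde\nu(f^2) = \nu(a)^{-1}[2\nu(afg)-\nu(af^2)]$. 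Both are finite (bounded by $\nu(g^2)$ and $\tilde\nu(g^2)$), and since $\nu(a(1-a)f^2) = \nu(af^2)-\nu(a^2f^2)$, a one-line subtraction yields \eqref{eq:jump-variance}.

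For the reversible case \eqref{item:deligiannidis-lee} the same identity gives the Dirichlet-form relation $\langle\phi,(I-K)\phi\rangle_\nu = \nu(a)\langle\phi,(I-\tilde K)\phi\rangle_{\tilde\nu}$. I would combine this with the spectral (Kipnis--Varadhan) characterization of asymptotic variance for reversible chains, $\Var(h,P)+\mu(h^2) = 2\langle h,(I-P)^{-1}h\rangle_\mu = 2\sup_\phi\{2\langle h,\phi\rangle_\mu - \langle\phi,(I-P)\phi\rangle_\mu\}$, valid in $[0,\infty]$. Taking $h=af$ for $K$ and $h=f$ for $\tilde K$, and using $\langle af,\phi\rangle_\nu = \nu(a)\langle f,\phi\rangle_{\tilde\nu}$, each summand in the $K$-supremum equals $\nu(a)$ times the corresponding $\tilde K$-summand, whence $\sup^{K}=\nu(a)\sup^{\tilde K}$ in $[0,\infty]$. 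This single identity delivers simultaneously the equivalence of finiteness and, after rearranging $\Var(af,K)+\nu(a^2f^2) = \nu(a)[\Var(f,\tilde K)+\tilde\nu(f^2)]$, the formula \eqref{eq:jump-variance}; here $af\in L^2_0(\nu)$ is automatic, as $a\le 1$ forces $\nu(a^2f^2)\le\nu(af^2)<\infty$ and $\nu(af)=\nu(a)\tilde\nu(f)=0$.

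The main obstacle is the rigor of this variational step: the $K$-supremum is naturally taken over $\phi\in L^2(\nu)$ while the $\tilde K$-supremum is over $\phi\in L^2(\tilde\nu)$, and I must check these give the same value. Since $\tilde\nu = a\nu/\nu(a)$ with $a\le 1$, one has $L^2(\nu)\subseteq L^2(\tilde\nu)$, and the truncations $\phi\mathbf{1}_{\{a\ge 1/n\}}$ (which lie in $L^2(\nu)$ and converge to $\phi$ in $L^2(\tilde\nu)$ by dominated convergence) show $L^2(\nu)$ is dense in $L^2(\tilde\nu)$; as the functional $\phi\mapsto 2\langle f,\phi\rangle_{\tilde\nu}-\langle\phi,(I-\tilde K)\phi\rangle_{\tilde\nu}$ is continuous in $L^2(\tilde\nu)$ (the form of the bounded operator $I-\tilde K$), the two suprema coincide. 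The only other delicate point is excluding spectral mass of $af$ at the eigenvalue $1$, which is guaranteed by $\psi$-irreducibility and keeps the equivalence ``$\Var<\infty$ iff the resolvent form is finite'' exact.
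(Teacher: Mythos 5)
Your argument is correct, and for part \eqref{item:jump-poisson} it is essentially identical to the paper's proof: both rest on the pointwise identity $(\phi - K\phi)(x) = a(x)\,(\phi-\tilde K\phi)(x)$, deduce $g-\tilde Kg=f$ and $g\in L^2(\tilde\nu)$, and evaluate both asymptotic variances with the Poisson-equation formula of Lemma \ref{lem:aug-clt}~\eqref{item:poisson-clt}; your quantities $2\nu(afg)-\nu(a^2f^2)$ and $\nu(a)^{-1}\big[2\nu(afg)-\nu(af^2)\big]$ are exactly the paper's $2\langle g,g-Kg\rangle_\nu-\nu(a^2f^2)$ and $\nu(a)^{-1}\big[2\langle g,g-Kg\rangle_\nu-\nu(af^2)\big]$. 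The genuine difference is in part \eqref{item:deligiannidis-lee}: the paper disposes of it by declaring it a restatement of \cite[Theorem 1]{deligiannidis-lee}, whereas you reprove it via the Dirichlet-form identity $\langle\phi,(I-K)\phi\rangle_\nu=\nu(a)\langle\phi,(I-\tilde K)\phi\rangle_{\tilde\nu}$ combined with the Kipnis--Varadhan variational characterisation of the asymptotic variance. That route is sound: the linear term scales by the same factor $\nu(a)$ as the quadratic form, your density argument ($L^2(\nu)\subset L^2(\tilde\nu)$ densely, using $0<a\le 1$) correctly reconciles the suprema taken over the two spaces, and the side facts $af\in L^2_0(\nu)$ and the absence of spectral mass of $af$ at the eigenvalue $1$ (from $\psi$-irreducibility and $\nu(af)=0$) are precisely the points that need flagging. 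What your version buys is a self-contained proof exhibiting the finiteness equivalence and the formula \eqref{eq:jump-variance} as two consequences of a single identity between quadratic forms; what the paper's citation buys is brevity and the outsourcing of the spectral-calculus care --- the extended-value variational formula $\Var(h,P)+\mu(h^2)=2\sup_\phi\{2\langle h,\phi\rangle_\mu-\langle\phi,(I-P)\phi\rangle_\mu\}$ remains the one nontrivial external input on which your argument still relies.
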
 
\begin{proof} 
          The reversible case \ref{item:deligiannidis-lee} is a
          restatement of \cite[Theorem 1]{deligiannidis-lee}.

          Consider then \ref{item:jump-poisson}. By
          Proposition \ref{prop:jump-properties}
          \ref{item:jump-trans}, we obtain
          for any $h:\X\to\R$ with $Kh$ well-defined,
          \[
              (\tilde{K} h)(x) = \frac{(Kh)(x) -
                \big(1-a(x)\big)h(x)}{a(x)}
              = \frac{(Kh)(x) -h(x)}{a(x)} + h(x).
          \]
          Consequently, we observe that
              $g - \tilde{K} g = a^{-1} \big( g - Kg\big) =
              f$
          implying \ref{item:jump-poisson}.
          Because $g\in L^2(\tilde{\nu})$, Lemma \ref{lem:aug-clt}
          \ref{item:poisson-clt}
          and a straightforward calculation yield
          \begin{align*}
              \Var(f, \tilde{K}) &= \tilde{\nu}\big(g^2 - (\tilde{K}g)^2\big) 
              \\
              &
              = 2 \langle g, g - \tilde{K} g\rangle_{\tilde{\nu}}
              - \langle g-\tilde{K}g, g -
              \tilde{K}g\rangle_{\tilde{\nu}} \\
              &
              = \nu(a)^{-1}  \big[
              2 \langle g, g - Kg \rangle_{\nu} - \nu( a f^2)
              \big],
          \end{align*}
          where $\langle f,g\rangle_{\nu} \defeq \int f(x) g(x)
          \nu(\ud x)$.
          Similarly, by Lemma  \ref{lem:aug-clt}
          \ref{item:poisson-clt}
          \[
              \Var(a f, K) = \nu\big(g^2 -
              (Kg)^2\big) = 2 \langle g, g - Kg \rangle_\nu - \nu( a^2 f^2),
          \]
          which allows us to conclude.
      \end{proof} 

\begin{proof}[Proof of Theorem \ref{thm:block-clt}] 
Whenever $\sum_{j=1}^n \proper_j(\unitfun)>0$, we may write
\begin{equation*}
    \sqrt{n}\big[E_n(f)-\jointdens(f)\big] = \frac{n^{-1/2}
      \sum_{k=1}^n N_k \proper_k(\bar{f})  }{n^{-1}
      \sum_{j=1}^n N_j \proper_j(\unitfun)}.
\end{equation*}
We shall show below that
the CLT holds for the numerator, with asymptotic variance
$\sigma^2 \defeq \big[\Var(\meanfunc_{\bar{f}}, P)+\approxdens(\alpha
\tilde{v})\big]/ \approxdens(\alpha)$.
This implies the claim by Slutsky's lemma, as the denominator converges to
$c_w/\approxdens(\alpha)$. For the rest of the proof, let
$\tilde{P}$ and $\check{P}$ be the Markov kernels 
of $(\tilde{\Param}_k)_{k\ge 1}$ and
$(\tilde{\Param}_k$,%
$N_k$,%
$\proper_k(\bar{f}))_{k\ge 1}$,
respectively, and let $\tilde{\pi}$ and
$\check{\pi}$ be the corresponding invariant
probabilities. Note that the function
$h(\param,n,\proper) \defeq n \proper$ is in $L^2(\check{\pi})$
by assumption \eqref{eq:clt-tight}.

In case \ref{item:block-clt-rev} holds, also
$\tilde{P}$ and $\check{P}$ are reversible
by Proposition \ref{prop:jump-properties} \ref{item:jump-invar}
and Lemma \ref{lem:aug-prop} \ref{item:aug-invar}.
Lemma \ref{lem:aug-clt} \ref{item:kipnis-varadhan}
with $K = \tilde{P}$, $\check{K}=\check{P}$, $\nu =
\tilde{\pi}$ and $\check{\nu}=\check{\pi}$ implies
that a CLT holds for $h$ whenever the asymptotic variance is finite:
\[
    \Var(h,\check{P})
    = \Var\big(\meanfunc_{\bar{f}}/\alpha, \tilde{P}\big)
    + \approxdens(\alpha \tilde{v}_{N\proper})/\approxdens(\alpha),
\]
where, by the variance decomposition formula,
\begin{align*}
    \tilde{v}_{N\proper}(\param) &\defeq \Var(N_k \proper_k(\bar{f}) \mid
    \tilde{\Param}_k=\param) 
    \\
    &
    = \tilde{v}(\param) + \Var\big(N_k \E[\proper_k(\bar{f}) \mid
    \tilde{\Param}_k=\param, N_k]\bigmid \tilde{\Param}_k=\param\big) 
    \\
    &
    = \tilde{v}(\param) + \meanfunc_{\bar{f}}^2(\param)
    \big(1-\alpha(\param)\big)/\alpha^2(\param).
\end{align*}
Proposition \ref{prop:jump-variance} \ref{item:deligiannidis-lee}
implies that
\[
    \Var\big(\meanfunc_{\bar{f}}/\alpha, \tilde{P}\big)
    = \approxdens(\alpha)^{-1}
    \big[ \Var(\meanfunc_{\bar{f}}, P) 
    - \approxdens\big((1-\alpha)
      \meanfunc_{\bar{f}}^2/\alpha\big)\big],
\]
which implies $\Var(h,\check{P}) = \sigma^2$.

Consider then \ref{item:block-clt-poisson}.
Proposition \ref{prop:jump-variance}
\ref{item:jump-poisson} implies that
$g - \tilde{P} g = \meanfunc_{\bar{f}}/\alpha$, and $g\in L^2(\tilde{\pi})$.
Lemma \ref{lem:aug-clt} \ref{item:poisson-clt} implies the CLT,
and together with Proposition \ref{prop:jump-variance}
\ref{item:jump-poisson} leads to $\Var(h,\check{P})=\sigma^2$.
\end{proof} 


\section{Proper weightings for general state space models}
\label{sec:ssm} 

We review some techniques to construct proper weightings in case of
general state-space models introduced in Section
\ref{sec:lin-gauss-ssms}. First, note that the simple IS correction 
may be applied directly (see Proposition \ref{prop:augmented-is}).
Note that \eqref{eq:ssm-proper} is satisfied for all integrable $h$,
so $\Ls = L^1(\jointdens)$. It is often useful to combine such schemes
as in Proposition \ref{prop:convex-proper}, allowing for instance
variance reduction by using pairs of antithetic variables
\cite{durbin-koopman2000}.

For the rest of the section, we focus on the particle filter (PF)
algorithm \cite{gordon-salmond-smith}; see also the monographs
\cite{doucet-freitas-gordon,del-moral,cappe-moulines-ryden}. We
consider a generic version of the algorithm, with the following
components \cite[cf.][]{del-moral}:
\begin{enumerate}
    \item Proposal distributions: $M_1$
      is a probability density on $\Sp_\ssmhid$
      and $M_t(\uarg\mid \ssmhid_{1:t-1})$ defines conditional
      densities on $\Sp_\ssmhid$ given
      $\ssmhid_{1:t-1}\in\Sp_\ssmhid^{t-1}$.
    \item Potential functions: $G_t:\Sp_\ssmhid^t\to\R_+$.
    \item
      \label{item:resampling}
      Resampling laws:
    $\mathrm{Res}(\uarg\mid \bar{\omega}^{(1:m)})$
    defines a probability distribution on $\{1{:}m\}^m$
    for every discrete probability mass
    $\bar{\omega}^{(1:m)}$.
\end{enumerate}

The following well-known two conditions are minimal to ensure unbiasedness, which
is required for proper weighting:
\begin{assumption}
   \label{a:smc-consistency} 
Suppose that the following hold:
\begin{enumerate}[label=(\roman*)]
    \item \label{a:fk-proper}
$\prod_{t=1}^T
    M_t(\ssmhid_t\mid \ssmhid_{1:t-1})G_t(\ssmhid_{1:t}) \
    = p^{(\param)}(\ssmhid_{1:T},y_{1:T})$ for all
    $\ssmhid_{1:T}\in\Sp_\ssmhid^T$.
    \item \label{a:resampling}
$\E\big[\sum_{i=1}^m
    \charfun{\vphantom{(}\smash{A^{(i)}=j}}\big]=m\bar{\omega}^{(j)}$,
    where $A^{(1:m)} \sim \mathrm{Res}(\uarg\mid \bar{\omega}^{(1:m)})$,
for any $j\in\{1{:}m\}$ and any probability mass vector $\bar{\omega}^{(1:m)}$.
\end{enumerate}
\end{assumption} 
Assumption \ref{a:smc-consistency}
\ref{a:fk-proper} holds
with traditionally used `filtering' potentials
$G_t(\ssmhid_{1:t})
   \defeq g_t^{(\param)}(\ssmobs_t\mid \ssmhid_t)
            \mu_t^{(\param)}(\ssmhid_t\mid
            \ssmhid_{t-1})/
            M_t(\ssmhid_t\mid
            \ssmhid_{1:t-1})$,
assuming a suitable support condition.
We discuss another choice of $M_t$ and
$G_t$ in Section \ref{sec:lin-gauss-ssms}, 
inspired by the `twisted SSM' approach of \cite{guarniero-johansen-lee}.
It allows a `look-ahead' strategy based on 
approximations of the full smoothing
distributions $q^{(\param)}(\ssmhid_{1:T}\mid \ssmobs_{1:T})$.
Assumption \ref{a:smc-consistency}
\ref{a:resampling} allows for multinomial
resampling, where $A_t^{(i)}$ are independent draws from
$\bar{\omega}_t^{(1:m)}$, but also for 
lower variance schemes, including stratified,
residual and systematic resampling methods
\cite[cf.][]{douc-cappe-moulines}.

Below, whenever
the index `$i$' appears, it takes values
$i=1,\ldots,m$.
\begin{algorithm}[Particle filter]
\label{alg:pf} 
Initial state:
      \begin{enumerate}[label=(\roman*)]
        \item Sample $\Ssmhid_1^{(i)}\sim M_1$ and
          set $\bar{\Ssmhid}_1^{(i)} = \Ssmhid_1^{(i)}$.
        \item Calculate $\omega_1^{(i)} \defeq G_1(\Ssmhid_1^{(i)})$
          and set $\bar{\omega}_1^{(i)} \defeq \omega_1^{(i)}/\omega_1^*$
          where $\omega_1^* = \sum_{j=1}^m
          \omega_{1}^{(j)}$.
      \end{enumerate}
      For $t=2,\ldots,T$, do:
      \begin{enumerate}[label=(\roman*)]
          \stepcounter{enumi}
          \stepcounter{enumi}
        \item Sample $A_{t-1}^{(1:m)} \sim
          \mathrm{Res}(\uarg\mid \bar{\omega}_{t-1}^{(1:m)})$.
        \item Sample $\Ssmhid_t^{(i)}\sim
          M_t(\uarg\mid \bar{\Ssmhid}_{t-1}^{(A_{t-1}^{(i)})})$
          and set $\bar{\Ssmhid}_{t}^{(i)} = (
          \bar{\Ssmhid}_{t-1}^{(A_{t-1}^{(i)})}, \Ssmhid_t^{(i)})$.
        \item
          Calculate $\omega_t^{(i)} \defeq
          G_t(\bar{\Ssmhid}_{t-1}^{(A_{t-1}^{(i)})}, \Ssmhid_t)$
          and set $\bar{\omega}_t^{(i)} \defeq \omega_t^{(i)}/\omega_t^*$
          where $\omega_t^* = \sum_{j=1}^m
          \omega_{t}^{(j)}$.
      \end{enumerate}
\end{algorithm} 

\begin{remark2} 
If all weights are zero, $\omega_{t}^*=0$, then Algorithm \ref{alg:pf}
may be terminated immediately (cf.~Proposition
\ref{prop:particle-proper}).
\end{remark2} 

The following result summarises alternative ways how the random
variables $(\properweightalt^{(1:m)}_\param,\Latent^{(1:m)}_\param)$ may be
constructed from the PF output, in order to satisfy
\eqref{eq:ssm-proper}. The results stated below are gathered
from the literature \cite[e.g.][]{del-moral,pitt-dossantossilva-giordani}, and 
some may be stated under slightly more stringent 
conditions; a self-contained proof of
Proposition \ref{prop:particle-proper} may be found, for instance, in 
\cite{vihola-helske-franks-preprint}.

\begin{proposition}
\label{prop:particle-proper} 
Let $\param\in\paramspace$ be fixed, assume $\mathrm{Res}$,
$M_t$ and $G_t$ satisfy Assumption \ref{a:smc-consistency}, and let
$h:\Sp_\ssmhid^T\to\R$ be such that the integral in
\eqref{eq:ssm-proper} is well-defined and finite.
Consider the random variables generated by Algorithm
\ref{alg:pf}, and let $U\defeq \prod_{t=1}^T
\big(\frac{1}{m}\omega_t^*\big)$. Then,
\begin{enumerate}[label=(\roman*)]
\item \label{item:filter-smoother}
  the random variables
  $(\properweightalt^{(1:m)}_\param,\Latent^{(1:m)}_\param)$ where $
  \properweightalt^{(i)}_\param =
  U \bar{\omega}_T^{(i)}$ and $\Latent^{(i)}_\param = \bar{\Ssmhid}_T^{(i)}$
  satisfy \eqref{eq:ssm-proper}.
\end{enumerate}
Suppose in addition that $M_t(\ssmhid_t\mid \ssmhid_{1:t-1})
G_t(\ssmhid_{1:t}) = C_t(\ssmhid_{t-1:t})$
for all $t\in\{1{:}T\}$ and all $\ssmhid_{1:T}\in\Sp_\ssmhid^T$.
Define for $t\in\{2{:}T\}$, and any $i_t,i_{t-1}\in\{1{:}m\}$,
the backwards sampling probabilities
\[
    b_{t-1}(i_{t-1}\mid i_t)
    \defeq \frac{
      \bar{\omega}_{t-1}^{(i_{t-1})}
      C_t(\Ssmhid_{t-1}^{(i_{t-1})}, \Ssmhid_t^{(i_t)})
      }
{
  \sum_{\ell=1}^m
      \bar{\omega}_{t-1}^{(\ell )}
      C_t(\Ssmhid_{t-1}^{(\ell)}, \Ssmhid_t^{(i_t)})
},
\quad\text{and}\quad
b_T(i_T\mid i_{T+1})
=\bar{\omega}_T^{(i_T)}.
\]
\begin{enumerate}[label=(\roman*)]
    \stepcounter{enumi}
    \item \label{item:backwards-sampling}
      Let $I_{1:T}$ be random indices generated recursively backwards by $I_T\sim
b_T$ and $I_{t}\sim b_t(\uarg\mid I_{t+1})$.
The random variables
$(\properweightalt^{(1)}_\param, \Latent^{(1)}_\param)$
satisfy \eqref{eq:ssm-proper}, where
$\properweightalt^{(1)}_\param=U$ and $\Latent^{(1)}_\param =
\Ssmhid_{1:T}^{(I_{1:T})}$.
    \item \label{item:fwd-bwd-smoothing}
      If $h(\ssmhid_{1:T}) =
      \hat{h}(\ssmhid_{t-1},\ssmhid_{t})$ for some $t\in \{2{:}T\}$,
      that is, $h$
      is constant in all coordinates except $t-1$ and $t$, then,
      the random variables
      $(\properweightalt^{(1:m,1:m)}_\param$, $\Latent^{(1:m,1:m)}_\param)$ satisfy \eqref{eq:ssm-proper}
      (with $\hat{h}$ on the left),
      where
      \begin{enumerate}
      \item $\Latent^{(i,j)}_\param \defeq
        (\Ssmhid_{t-1}^{(i)},\Ssmhid_{t}^{(j)})$,
      \item $\properweightalt^{(i,j)}_\param
        \defeq U b_{t-1}(i\mid j)\hat{\omega}_{t}^{(j)}$,
        and where
      \item \label{item:smoothing-weights}
      $\hat{\omega}_T^{(i)} \defeq
      \bar{\omega}_T^{(i)}$ and $\hat{\omega}_{t}^{(i)} \defeq \sum_{k=1}^m
          \hat{\omega}_{t+1}^{(k)} b_{t}(i\mid k)$ for $t=T-1,\ldots,t$.
   \end{enumerate}
   \item \label{item:fwd-bwd-smoothing2}
     If $h(\ssmhid_{1:T}) = \hat{h}(\ssmhid_t)$ for some $t\in
      \{1{:}T\}$, then the random variables
      $(\properweightalt^{(1:m)}_\param,\Latent^{(1:m)}_\param)$ satisfy
      \eqref{eq:ssm-proper} (with $\hat{h}$ on the left), where
      $\Latent^{(i)}_\param = \Ssmhid_t^{(i)}$ and
      $\properweightalt^{(i)}_\param =
      U \hat{\omega}_t^{(i)}$ are defined in
      \ref{item:smoothing-weights}.
\end{enumerate}
\end{proposition} 

The estimator in Proposition \ref{prop:particle-proper}
\ref{item:filter-smoother} was called the filter-smoother in
\cite{kitagawa}. This property was shown in \cite[Theorem
7.4.2]{del-moral} in case of multinomial resampling, and extended later 
\cite[cf.][]{andrieu-doucet-holenstein}. 
Proposition \ref{prop:particle-proper} \ref{item:backwards-sampling}
corresponds to backwards simulation smoothing
\cite{godsill-doucet-west}.
Drawing a single backward trajectory does not improve on the
filter-smoother 
\cite[cf.][]{doucet-lee-smc-gm}, but
drawing several $I_{1:T}$ 
independently may lead to lower variance estimators.
Proposition \ref{prop:particle-proper}
\ref{item:fwd-bwd-smoothing} and its special case
\ref{item:fwd-bwd-smoothing2} correspond to the
forward-backward smoother \cite{doucet-godsill-andrieu}; see
also \cite{cappe-moulines-ryden}. It is a
Rao-Blackwellised version of \ref{item:backwards-sampling}, but
applicable only when considering estimates of a single marginal (pair).
This scheme can lead
to lower variance, but suffers from $O(m^2)$ complexity.

We next formally state how Proposition
\ref{prop:particle-proper} allows to use Algorithm
\ref{alg:pf} to derive a proper weighting scheme.

\begin{corollary}
    \label{cor:proper-ssm} 
Let $(\Param_k)_{k\ge 1}$ be a Markov chain which is Harris ergodic
with respect to $\approxdens$. Suppose each
$(\properweightalt^{(1:m)}_k,\Latent^{(1:m)}_k)$ corresponds to
an independent run of
Algorithm \ref{alg:pf} with $\param=\Param_k$,
as defined in Proposition \ref{prop:particle-proper}
\ref{item:filter-smoother}, \ref{item:backwards-sampling},
\ref{item:fwd-bwd-smoothing} or
\ref{item:fwd-bwd-smoothing2}.
Then,
$(\properweight_k^{(1:m)}, \Latent^{(1:m)}_k)_{k\ge 1}$ with
$\properweight_k^{(i)} \defeq \pr(\param_k) \properweightalt_k^{(i)}/\approxdens(\param_k)$ provide a proper weighting scheme
for target distribution $\jointdens(\param,\latent_{1:T}) =
p(\param, \latent_{1:T}\mid \ssmobs_{1:T})$
(Definition \ref{def:proper}), for the following classes of functions,
respectively:
\begin{align*}
    \ref{item:filter-smoother}\quad
    \Ls & = L^1(\jointdens), &
    \ref{item:fwd-bwd-smoothing}\quad
    \Ls &= \big\{f\in L^1(\jointdens)\given f(\param,\latent_{1:T}) =
    f(\param,\latent_{t-1:t}),\text{ for some $t\in\{2{:}T\}$}\big\}, \\
    \ref{item:backwards-sampling}\quad
    \Ls & = L^1(\jointdens), &
    \ref{item:fwd-bwd-smoothing2}\quad
    \Ls &= \big\{f\in L^1(\jointdens)\given
    f(\param,\latent_{1:T}) =
    f(\param,\latent_{t}),\text{ for some $t\in\{1{:}T\}$}\big\}.
\end{align*}
In case $(\Param_k,U_k)_{k\ge 1}$ is a pseudo-marginal algorithm,
$\properweight_k \defeq \pr(\param_k) \properweightalt_k^{(i)}/U_k$.
\end{corollary} 


\def\bibfont{\footnotesize}

\end{document}